\newtheorem{theorem}{Theorem}
\newtheorem{corollary}[theorem]{Corollary}
\newtheorem{definition}[theorem]{Definition}
\newtheorem{lemma}[theorem]{Lemma}
\newtheorem{proposition}[theorem]{Proposition}
\def\P{{\rm Pr}}
\begin{document}

\def\l{\lambda}

\def\Implies{\Longrightarrow}

\bibliographystyle{chicago}
\title{How Advance Sales can Reduce Profits: When to Buy, When to Sell, and What Price to Charge}

\author{Amihai Glazer\footnote{Department of Economics, University of California, Irvine.  aglazer@uci.edu }
 \and Refael Hassin\footnote{Statistics and Operations Research, Tel Aviv University
Tel Aviv 69978, Israel. hassin@post.tau.ac.il}  \thanks{This research was supported by the Israel Science Foundation (grant No. 1571/19)}
\and Irit Nowik\footnote{Department of Industrial Engineering and Management, Jerusalem College of Technology, Jerusalem, Israel. nowik@jct.ac.il}}

\date{April 2, 2023}

\maketitle

\begin{abstract}

(1) Problem definition: Consider consumers who prefer to consume a good later rather than earlier. If the price is constant, then we would expect consumers to wait to buy the good. That does not hold if consumers are concerned that others will buy the good early, so that a shortage will later occur. When will consumers arrive when they fear a shortage? What is the profit-maximizing policy of a monopolist? Might the firm lose profits by offering advance sales?

(2) Academic/Practical relevance: The timing of consumer arrivals is much studied. Little consideration, however, has addressed how anticipated shortages affect arrival times. The application is important: managers want to know when consumers will arrive, when they should make the product available, and what price to charge to maximize profits.

(3) Methodology:  We use game theory. We analyze analytically outcomes when a single item is for sale: we give closed solutions for the equilibrium customer behavior and profit-maximizing firm strategy and conduct sensitivity analysis. For generalization concerning more than one unit we give some analytical results and provide many numerical solutions.

(4) Results: When the price is constant over time, then even with no operating cost of doing so, offering advance sales reduces profits. If, however, the firm must offer both advance sales and later sales, then the profit-maximizing price induces all arrivals at the same time (either early or late, depending on the parameters). An increase in the number of units offered for sale increases the profit-maximizing price and increases the firm's expected profit. The equilibrium strategy of consumers can generate some unexpected behavior. The arrival rate may increase with the price of the good. For a given price, an increase in the number of units for sale increases the number of consumers who arrive early.

(5) Managerial implications:
The firm should offer the good only at the time consumers most desire it, and not earlier. Additionally, the profit-maximizing price can be derived from our analysis. This price is not the price which maximizes the expected number of arrivals. 

\bigskip

\noindent{\it Keywords:  Strategic consumers, Nash equilibrium, Advance selling}
\end{abstract}

\pagebreak

\section{Introduction}

Consider consumers who fear that demand for a good may exceed supply. A consumer may then seek to buy the good before others do, even if he finds it costly to do so (he may, for example, incur inventory holding costs). What holds for one consumer can apply to all, and so in equilibrium many consumers will buy too early. A firm should recognize that such early buying reduces the price consumers are willing to pay. Therefore, in contrast to work which shows how firms can profit from advance sales, we will show that under plausible conditions advance sales reduce the firm's profits. 

Several other issues arise. Suppose the firm is required to sell in all periods. It can affect when consumers arrive by the price it sets. A low price would induce consumers to arrive early: the consumer surplus from getting the good is high, and so consumers are willing to incur the cost of buying early to get the good. If the price is high, then the cost of buying early can make it not worthwhile to buy the good, and so consumers will arrive late. As we shall see, the seller who must offer the good in all periods maximizes profits by setting the price so that either all consumers arrive in only one period, either all arrive early, or else all arrive late.

(We observe that many stores are not free to choose their opening hours or days, but are instead forced to remain open when they may prefer to be closed. For example, a law firm states that ``Most leases of retail space require the tenant to maintain certain days and hours of operations. A retail store will be expected to be open for business when the other retailers in the shopping center are open."\footnote
{
\url{https://www.stephensonlaw.com/news/tenant-negotiations-for-a-shopping-center-lease/}
}
The Irvine Company, a major landlord in southern California, says this explicitly:``it is a condition of each lease that all tenants keep the entire 
premises continuously open for business during the days and hours established by the Landlord."\footnote
{
\url{https://www.shopirvinecompany.com/merchantmanuals/Crossroads_Manual.pdf
}}

Firms recognizing that consumers may rush to buy too early may offer a good or service for sale only close to the time of its consumption. Thus, Apple made pre-orders for its  iPhone 13 available just seven days before it was available in stores.\footnote
{
\url{https://www.apple.com/newsroom/2021/09/apple-introduces-iphone-13-and-iphone-13-mini/}
}
Moreover, it is quite common for all units (such as movie or concert tickets) to be sold in this advance sale phase.\footnote
{
Sales of tickets, and of some other goods potentially subject to shortages, can induce  speculators to buy in an early period, and resell the good at a high price in a later period if demand turns out to be high. Su (2010) nicely models such behavior.
}
For example, tickets for one of the most anticipated films of 2021, ``Spider-Man: No Way Home" went on sale on November 29, 2021, three weeks before the movie first began showing in movie theaters; within minutes domestic movie ticket sites began to crash as moviegoers rushed to snag seats.\footnote
{
\url{https://www.cnbc.com/2021/11/29/spider-man-no-way-home-ticket-demand-crashed-box-office-sites.html}
}
In both the National Football League (NFL) and the National Basketball Association (NBA) sellout of season tickets is the rule rather than the exception.  Frantic supporters jam phone lines when tickets go on sale.\footnote{
\url{https://onlinelibrary.wiley.com/doi/pdf/10.1002/mde.4090150513}
}

Our paper also sheds light on inventory problems. If the firm is known to offer the good only in a single period it never holds any inventory.  So examining inventory issues requires asking why the firm offers the good in multiple periods. A further contribution of the model is to examine a seller's decision of when to offer the good for sale, and thus to allow examination of how governmental regulations 
affect prices or sales.

The paper shows that even if the firm is forced to offer the good  early, then it still profits from having all consumers who arrive do so at the same time. Either all arrive early or all arrive late, depending on the parameters of the model. In all circumstances (that is, for all parameter values) the firm's profits are smaller if consumers come in multiple periods. Our model has  consumption in period 2 valued more highly than consumption in period 1. But similar effects arise not if the benefit of consuming the good increases over time, but if the cost (other than price) of buying the good declines over time, for example because of consumers' inventory holding costs.

\section{Literature}

Our topic relates to, but differs from, advanced selling, also known as presales. Advanced selling lets consumers ensure that their demand is satisfied by purchasing it before the good is available. This common practice is largely motivated for two reasons (see, for example, Li and Zhang, 2013, and the references therein). First, advanced selling enables the seller to better predict demand and hence better plan production. Second, it lets the seller segment the market and price discriminate between those who buy in advance and those who do not.\footnote
{
Edelstein et al. (2012) also consider ``risk-sharing between purchaser and developer'' when buying a housing unit in the presale market.  This line of research may lead to a discount in the presale phase to compensate consumers for their uncertainty regarding the product, or an increased price as a premium for guaranteed supply (see also Shugan and Xie, 2000, and Yu et al., 2015).
}
Our model differs from this literature. First, we assume that the seller has a fixed amount of inventory for sale, and hence there is no production planning. Second, we assume homogeneous consumers and that the price is fixed over all periods and all consumers, so that price discrimination is impossible. Instead, we ask whether offering to sell the good before the time it is consumed benefits the seller. We assume that a consumer would buy the good early only because he fears that the good will later be unavailable. And because we assume consumers incur a cost of arriving at a store, some may choose not to attempt to buy the good at all.

The effect of a possible shortage on consumer behavior has been considered by Glazer and Hassin (1986) in a deterministic EOQ model where consumers plan their purchase according to published availability periods---possibly purchasing before or after their ideal time of consumption.   Su and Zhang (2009) consider a news vendor model with stochastic demand and strategic consumers who may forego their intended purchase if the probability of a stockout is high and if the fixed search cost is large.  In contrast, we consider advance sale of a product that will be offered for sale at a given time.

Classic inventory papers allow backorders, assuming that customers may buy a good later than at the time they most desire it, and assuming that the delay costs are incurred by the seller (a cost which may include compensation to the waiting consumers). An extensive literature extends the standard model to consider strategic consumers who may postpone their purchase in expectation of a lower price. In a seminal paper, Eppen and Lieberman (1984) model an EOQ-type firm and consumers who consume the good at a constant rate. Consumers incur inventory holding costs; if the price is constant they will not hold inventory. But the firm offers to sell at a lower price at the beginning of each cycle, and consumers can exploit it by stockpiling. Further work considers customers who differ in their valuations of the good; they are indifferent about the time of purchase, or prefer to buy early (e.g., Su, 2007). Their decision of when to buy trades-off securing the purchase by buying early, and hoping to buy it later at a lower price while facing the risk of shortage. In equilibrium, those with higher valuations buy earlier at higher prices, whereas the others wait. Some literature also examines the effect of non-strategic (myopic) consumers in a mixed population. These works include Besanko and Winston (1990), Gallien (2006), Elmaghraby et al. (2008), Aviv and Pazgal (2008), Gallego et al. (2008), Zhang and Cooper (2008), Cachon and Swinney (2009), Kremer at al. (2017), Shum et al. (2017), and Zhang et al. (2019). In contrast, our model has a consumer value the good more highly in the second period. Glazer and Hassin (1982) and Cachon and Feldman (2011) study subscription pricing, which can be a form of advance selling applied to repeat customers, whereas in our model consumers make a single purchase.

An important and insightful paper related to ours is Liu and Ryzen (2008). They show that a firm can benefit from generating stockouts, because potential shortages allow the firm to price discriminate among heterogeneous consumers. This resembles strategic delays in queuing systems, which encourage customers to buy priority (see Afeche, 2013). We do not allow price discrimination so that this factor does not affect our results. Our paper differs in several additional ways. We let the number of potential buyers be stochastic, allow consumers to choose whether and when to arrive, and let a risk of shortage exist in the first period.

Under some conditions, the firm may profit from a period with zero stock, which induces some customers to strategically wait for a price reduction. That benefits the seller by reducing its inventory holding costs (Chen and Zhu 2020). Glazer and Hassin (1986, 1990) extend the choices available to consumers by allowing them to advance or delay their purchase when the good is not available at their ideal times of purchase, while incurring inventory holding costs or waiting costs. The firm, recognizing this strategic behavior, profits from artificially creating  no-sale periods, affecting customers' behavior and reducing its inventory holding costs. In the profit-maximizing solution each inventory cycle consists of an interval with sales followed by an interval with no sales. Two extremes are also possible: continuous sales throughout the cycle, or sales only at replenishment instants (i.e., inventory is never held). Furthermore, the profit-maximizing solution has shorter no-sale periods than the socially-optimal solution. With heterogeneous consumer valuations the firm may sell at discrete points within the inventory cycle; customers arriving between two such points choose between advancing or delaying their purchase (Anily and Hassin 2013) .

Antoniou and Fiocco (2019) investigate a firm's inventory strategy when facing  forward-looking consumers who can store a good in expectation of higher future prices. They show that a retailer unable to commit to future prices can profit from holding inventory. Other work, well surveyed by Antoniou and Fiocco (2019), considers buyer stockpiling in expectation of higher prices. But in that work, unlike in ours, a consumer who buys early does not affect the welfare of a consumer who intends to buy later. DeGraba (1995) finds that a firm may intentionally create a shortage, inducing consumers to buy before they have complete information about the value of the good. Huchzermeier et al. (2002) assume that consumers are heterogeneous and that the firm offers product variety in the form of different package sizes, which enable a discrete form of stockpiling at discounted prices. Chen and Shi (2019) analyze a related model where price changes are also made within the cycle.

Several papers compare profits and consumer welfare under dynamic and static pricing. Because static pricing is a special case of dynamic pricing, it is not surprising that dynamic pricing generates larger profits. The firm's benefit comes from two sources. First, it allows for price discrimination---consumers with a high initial valuation for the good, but with valuation that declines much over time, would be willing to pay a higher price when the good is first offered, even if they realize that the firm will later reduce the price (Golrezaei, Nazerzadeh, and Randhawa 2020). The opposite pattern, of increasing prices over an inventory cycle but lower average prices, can increase consumer welfare: a price which increases over time induces consumers to buy early in an inventory cycle, thereby reducing the firm's inventory holding costs (Stamatopoulos, Chehrazi, and Bassamboo 2019). In these models of dynamic pricing the firm sells goods all the time. Though these models have strategic consumer behavior, they consider a consumer buying after his ideal time. In contrast, we allow a consumer to buy before his ideal time. Moreover, unlike other analyses, we allow consumers to buy early because they fear a shortage in the future.

Our assumption that firms find it costly to change prices, resulting in sticky prices, is common in macroeconomic models. A seminal article making this assumption is Sheshinski and Weiss (1977). They build on inventory models, arguing that changing a price induces a fixed cost (called a menu cost) for the firm, inducing a firm to adopt an $(S,s)$ rule, with prices exhibiting a pattern of inaction followed by large price changes. Sticky prices can arise from managerial costs, which induce a firm selling in different markets or stores to charge a uniform price (DellaVigna and Gentzkow 2017). Evidence shows that the stickiness of prices appears even when demand or cost conditions change, and is more common the less competitive the market (Bils and Klenow 2004). Consumer anger can also restrict the ability of firms to change prices:\footnote
{
\url{https://www.nytimes.com/2007/09/07/technology/07apple.html}
}
\begin{quote}
Apple ... angered many of its most loyal customers by dropping the price of its iPhone to \$400 from \$600 only two months after it first went on sale. They let the company know on blogs, through e-mail messages and with phone calls.

On Thursday, in a remarkable concession, Steven P. Jobs acknowledged that the company had abused its core customers' trust and extended a \$100 store credit to the early iPhone buyers.
\end{quote}

\section{Assumptions}

The central idea we explore is that a consumer may attempt to buy in an early period because he fears that so many other consumers will do the same as to make the good unavailable in period 2. The problem becomes interesting if some inefficiency is involved with such behavior, in the sense that a social planner would have people attempt to buy the good only in period 2. Inefficiency in our model is induced by assuming that a consumer incurs some extra cost, or loss of utility, in buying in period 1 rather than in period 2. The following makes the assumptions explicit. 

\begin{enumerate}

\item There are two periods. The monopolist has $N$ units of a good for sale. Units not sold in period 1 can be sold in period 2. The number $N$ of units on sale in the first period is given; the advance sales information is not used by the firm to plan its future production or costs.

\item The good's price is $P$. And because of the unvarying price,  consumers who arrive early are not motivated to do so because of a reduced price.

\item
A person may be able to buy the good in period 1 or in period 2. But, for a given price, he prefers buying it in period 2. This preference may result from consuming the good in period 2. In that case, we refer to a sale in period 1 as an advanced sale. Alternatively, the early buyer may consume the good early, but it will then be less valuable for him.

\item
The number of potential consumers is stochastic, following the Poisson distribution with intensity $\l$. The Poisson distribution is a natural assumption when the need for the product by each member of a large population independently arises, and the Poisson distribution approximates the binomial distribution.

\item
A consumer incurs a fixed cost, $c$, of arriving at the store. This cost can cause consumers to balk. Consumers are strategic, deciding whether to attempt to buy in period 1, or in period 2, or whether to balk.

\item
Consumers are homogeneous.  Hence, advance selling is not motivated by the firm's desire to price discriminate.

\item
Consumers know the value of getting the good: $V$ if bought in period 2, and $V-K$ if bought in period 1. The reduction in value by $K$ is a penalty for early purchase. This penalty can reflect, for example, a consumer's inventory holding costs. Consumers are uncertain about the number of arrivals in each period and so about the availability of the product. 

\item
Even when arriving in period 1, the advance sale period, obtaining the good is not assured.  However, arriving in period 1 increases the probability of getting the item; the fear of a shortage can cause consumers to arrive early, despite the smaller value of getting the good at that time.

\item In each period, if the number of consumers at the store exceeds the number of units available for sale, each consumer at the store has the same probability of obtaining the good.

\item Both consumers and the monopolist can compute in advance the expected utilities and the expected profits, and make decisions accordingly.

\end{enumerate}

Sections~\ref{1unit} and~\ref{SU} below analyze the model for a single unit ($N=1$); Section~\ref{SU} focuses on an unbounded (i.e., high) arrival rate $\l.$ All results are proven mathematically.  Section~\ref{Munits} analyzes the model for any number of units of the good. Numerical solutions show that all the qualitative results for a single item continue to hold for multiple units. Analytical results show the effects of the amount of supply on the profit-maximizing price and customers' behavior.

\section{Single unit for sale}\label{1unit}

Consider first outcomes when the firm has a single unit of the good to sell. Although all consumers value the good more highly in period 2, if a fraction $q_1>0$ nevertheless arrives in period 1, then the number of consumers arriving in period 1 also has a Poisson distribution with intensity $\l_1 \equiv \l q_1$; and the number of consumers arriving in period 2 has a Poisson distribution with intensity $\l_2 \equiv \l q_2,$ where $q_2$ is the fraction of consumers arriving in period 2 and $q_1+q_2\le1$.

A consumer arriving in period $t$ ($t=1, 2$) has expected utility $U_{t}$. A consumer who never arrives gets zero utility. The expected utilities of a consumer arriving in each period are
\begin{equation}\label{E9000}
U_{1}= -c+\sum_{j=0}^{\infty} \frac{V-K-P}{j+1}\cdot\frac{\l_1^j e^{-\l_1}}{
j!}=
-c+(V-P-K)\frac{\left(1-e^{-\l_1}\right)}{\l_1},
\end{equation}
\begin{equation}\label{E9001}
U_2=-c+e^{-\l_1}(V-P)\frac{\left(1-e^{-\l_2}\right)}{\l_2},
\end{equation}

where $V$ was defined as the value of the good,
 $P$ its price, and $K$ the penalty for buying early. The probability that a given consumer will obtain the good is $\frac{1}{j+1}$. To understand Equation~\eqref{E9000}, note that the expected utility of a consumer arriving in period 1, when $j$ other consumers arrive, is $ \frac{V-K-P}{j+1}.$ \ Similarly for Equation~\eqref{E9001}.
Note that $e^{-\l_1},$ appearing in $U_2,$ is the probability that no consumer arrives in period 1, and hence that the good is still available in period 2. Also note that the expression $1-e^{-\l}\over\l$ is monotonic in $\l.$
In particular, by L'H\^{o}pital's rule $\lim_{\l\to0}{1-e^{-\l}\over\l}=1$. This means that when no one else arrives a consumer arriving in period 1 has $U_{1}=-c+V-K-P$; a consumer arriving in period 2 has
$U_2=-c+V-P.$ \  The firm's expected profit (from both selling periods) is
\begin{equation}\label{Pi1}
\Pi=P(1-e^{-\l_1-\l_2}).
\end{equation}

\subsection{Equilibrium behavior}\label{EB}

 Let $q_t,$ (for $t=1,2$) be the equilibrium fraction of consumers who arrive in period $t$, or equivalently, the equilibrium probability that a consumer arrives in period $t$. Thus $1-q_1-q_2$ is the probability not to arrive at all. Note that both $q_1$ and $q_2$ may equal 0, or may each be positive, or one may equal 1 with the other equaling 0. The sum $q_1+q_2$ may be less than or equal to 1. In total, there are seven possibilities. In each of these possibilities the probabilities $(q_1, q_2)$  constitute an equilibrium, iff some conditions regarding the expected utilities $U_{1}$ and $U_2$ are met, as elaborated next. In all these possibilities listed below, we see the following characteristics: when with positive probability no consumers arrive (equilibrium types 1-4 below) and $0<q_t<1$ (i.e., $0<\l_t<\l$) the corresponding utility satisfies $U_t=0$. The result that $U_t=0$ follows from the indifference property of a Nash equilibrium with mixed strategies: since the utility when not arriving is $0,$ then all other actions that take part in the same equilibrium must also induce zero utility.

\paragraph{Equilibrium types}

\begin{enumerate}

\item A consumer who never arrives would have non-positive expected utility were he to arrive in either period. That is, $q_1=q_2=0 \Implies U_{1},U_2\le 0$.

\item A consumer who does not arrive in period 1, but arrives in period 2 with probability less than 1, has zero utility when arriving in period 2, and would get non-positive utility if he arrived in period 1. That is, $0<q_2<1$ and $q_1=0 \Implies U_2=0$ and $U_{1}\le 0$.

\item A consumer who arrives in period 1, or in period 2, or does not arrive at all, each with positive probability, has expected utility zero for each choice. That is, $0<q_1, q_2<1$ and $q_1+q_2<1 \Implies U_{1}=U_2=0$.

\item  A consumer who arrives in period 1 with probability less than $1$, but never arrives in period 2, must have expected utility zero when arriving in period 1; his expected utility were he to arrive in period 2 would be non-positive. That is, $0<q_1<1$ and $q_2=0 \Implies U_{1}=0$ and $U_2\le 0$.

\item A consumer arriving with positive probability in either period 1 or period 2 must be indifferent between arriving in these  periods. And if he always wants to arrive in some period, his expected utility must be non-negative. That is, $0<q_1, q_2<1$ and $q_1+q_2=1 \Implies U_{1}=U_2 \ge 0$.

\item A consumer arriving in period 1 has larger expected utility than when arriving in period 2 or  never arriving. That is,
$q_1=1 \Implies U_{1}\ge\max\{U_2,0\}$.

\item Lastly, a consumer arriving in period 2 must enjoy utility at least as large when arriving in period 1 or of never arriving. That is, $q_2=1 \Implies U_2 \ge \max\{U_{1},0\}$.
c
\end{enumerate}

Note that $U_t$ can be interpreted as the expected payoff to any consumer arriving in period $t.$ But it can also be interpreted as the average payoff, computed over all consumers.
In particular, $U_2=0$ implies that some of the consumers arriving in period 2 enjoy positive utility, whereas others arriving at the same time get negative utility, such that on average $U_2=0.$

Note also that in equilibrium types 1-4 the sum of the arrival rates $\l_1+\l_2 < \l$. In equilibrium types 5-7 the sum is $\l_1+\l_2 = \l$. This difference becomes significant in the following section, since it means that in equilibrium types 1-4, $\l_1$ and  $\l_2$ may both be low  even when $\l$ is high. Note that the equilibria can be defined and presented by $(q_1, q_2)$ or equivalently, by $(\l_1, \l_2).$ We mostly use $(\l_1, \l_2)$.

We find it convenient to use normalized monetary values by considering $c$ as the unit value. Thus we define  $v \equiv \frac{V}{c}$, $k \equiv \frac{K}{c}$, $p \equiv \frac{P}{c}$,  $\pi \equiv \frac{\Pi}{c}$, and  $u_t \equiv \frac{U_t}{c}$ \ $(t=1, 2).$ In particular, we obtain from~\eqref{E9000} and~\eqref{E9001}

\begin{equation}\label{E9000N}
u_{1}=
-1+(v-p-k)\frac{\left(1-e^{-\l_1}\right)}{\l_1},
\end{equation}
\begin{equation}\label{E9001N}
u_2=-1+e^{-\l_1}(v-p)\frac{\left(1-e^{-\l_2}\right)}{\l_2}.
\end{equation}

Given the price $p$, a consumer's net benefit from buying the good in period 2 is $v-p$. Thus, an increase in $p$ means moving to the left in the $(v,k)$ space. 
We prove that for any given parameters, there is at most one equilibrium of each type (see Lemma~\ref{L1} at the beginning of the Appendix).
The conditions that define the seven types of equilibrium allow us to characterize the regions in which each type of equilibrium exists. For example, in equilibrium type 1, $\l_1=\l_2=0$ and  $u_{1}, u_2\le 0.$
Since \ $\lim_{\l\to 0}{1-e^{-\l}\over\l}=1,$
we obtain in~\eqref{E9001N}
$$\lim_{\l_1,\l_2\to 0}u_{2}= \lim_{\l_1,\l_2\to 0}  -1+e^{-\l_1}(v-p)\frac{\left(1-e^{-\l_2}\right)}{\l_2} = -1+v-p.$$  Thus, in equilibrium type 1, $v-p\leq 1.$
Similarly, we obtain from~\eqref{E9000N} that in the type-1 equilibrium $v-p\leq k+1.$
Taken together, type 1 equilibria exist only in the region where $v-p\leq 1.$
In the same way we obtain characteristics for the other six types of equilibria listed above
 (see Section~\ref{SCHI} in the Appendix).
Figure \ref{f2} below describes the regions corresponding to the different types of equilibria when $\l=1$ . The point at the intersection of types 7, 5, 3, and 2 has $v-p=e/(e-1)$ and $k=1/(e-1)$.  The point at the intersection of types 7, 6, 5, 4, and 3 has $v-p=e$ and $k=e(e-2)/(e-1)$.

\begin{figure}[H]
\centering
\vspace{-5cm}
\includegraphics[scale=0.6]{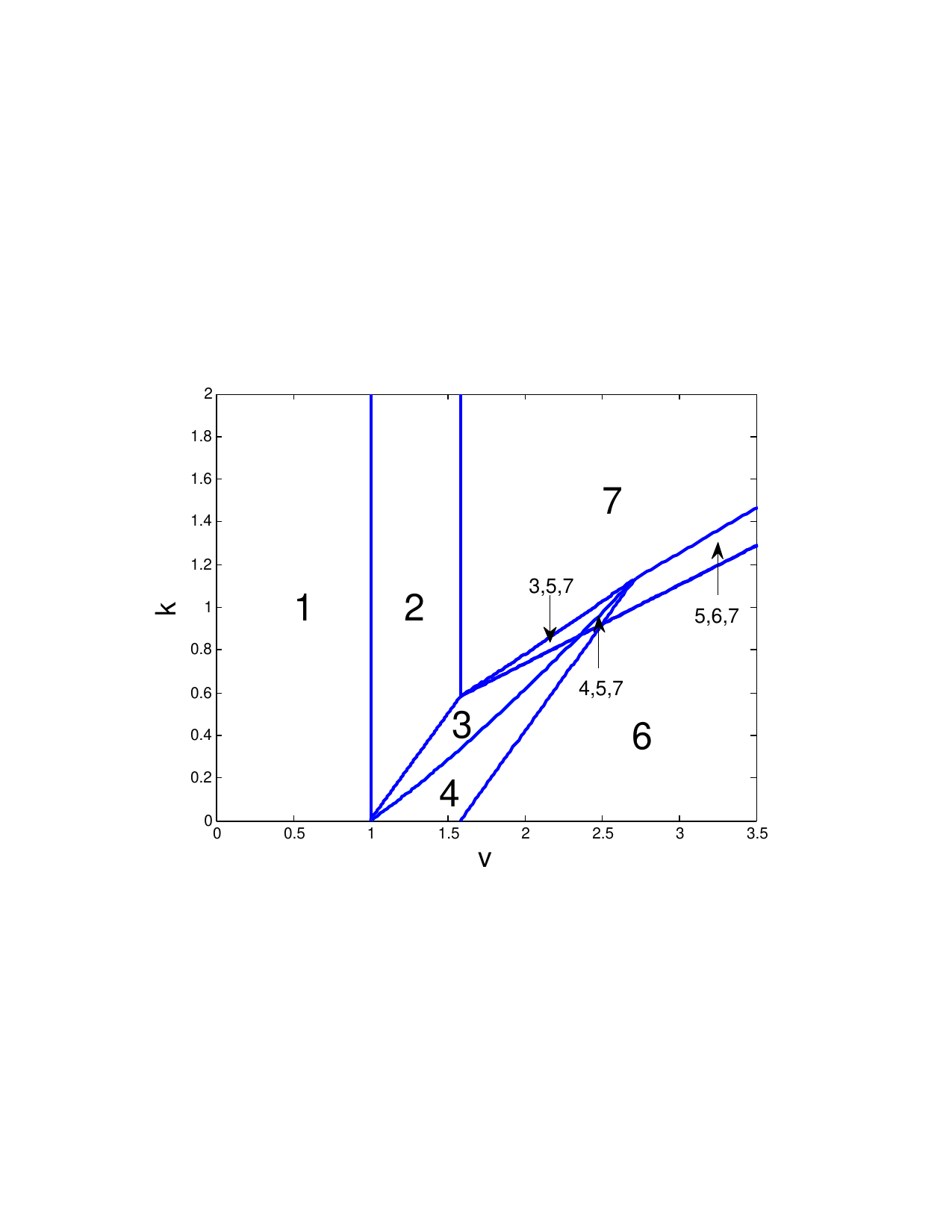}
\vspace{-5cm}
\caption{Types of equilibria in the $(v-p,k)$ space when $\l=1.$ In types 1-4 with positive probability no one arrives (i.e., $\l_0+\l_1< 1$); in type 1 no one arrives, in type 2 all arrivals occur late, and in type 4 all arrivals occur early. Type 3 has mixed arrivals (i.e., early and late arrivals). In types 5-7 all consumers arrive (i.e., $\l_0+\l_1= 1$). Type 5 has mixed arrivals. In type 6 all arrive early. In type 7 all arrive late.}\label{f2}
\end{figure}

As seen in Figure~\ref{f2}, several areas of $(v-p,k)$ have multiple equilibria. Thus, analyzing profit maximization requires making assumptions about which equilibrium arises, or examining multiple outcomes. 

However, Proposition~\ref{P66} (see Section~\ref{AP66} in the Appendix)  shows  that under unbounded potential demand each pair $(v, k)$ has a unique equilibrium point and this equilibrium is of one of the types 1-4. Regions 5-7 vanish because in these types of equilibria 
$\l_1+\l_2=\l.$ Hence, if $\l$ is large, then either $\l_1$ or $\l_2$ (or both) must also be large. But in that case~\eqref{E9000N} and~\eqref{E9001N} imply that either $u_1$ or $u_2<0,$ meaning this could not be an equilibrium.  Thus  types 5-7  are not possible and only  equilibrium types 1-4 (where  $\l_1+\l_2 <  \l$) arise. 

An unbounded demand approximates the case of high potential demand $\l$. Hence from now on, given $(v, k),$ we assume that the demand is high, where by `high' we mean that $\l$ is large enough so that there is a unique equilibrium and it is of type 1-4. For the  case of low demand, we give further analysis in the Appendix (see Section~\ref{AFIN}).


\section{High potential demand}\label{SU}

If $\l_1$ and $\l_2$ are both small (so that $u_1, u_2$ are non-negative), but $\l$ is large, the expected number of consumers $\l-\l_1-\l_2$ who want the good but nevertheless never arrive must also be large. Intuitively, note that if many consumers are expected to arrive the probability that an arriving consumer obtains the good is small. At some point the expected gain is smaller than the arrival cost $c$: utility becomes negative. At that point there are no additional arrivals. This implies that though potential demand is high (i.e., $\l$ is large), only few arrivals may be expected (i.e., $\l_1+\l_2$ is small).
For example, for $v=10$ and $k=2,$ in equilibrium, $\l_1=0$ and $\l_2=2.3,$  although $\l$ is unbounded.  In particular, the
 probability that no one arrives is high, and in that case the firm's profit is $0.$
The partition of the $(v,k)$ plane is shown in Figure \ref{f3}.

\begin{figure}[H]
\centering
\vspace{-5cm}
\includegraphics[scale=0.6]{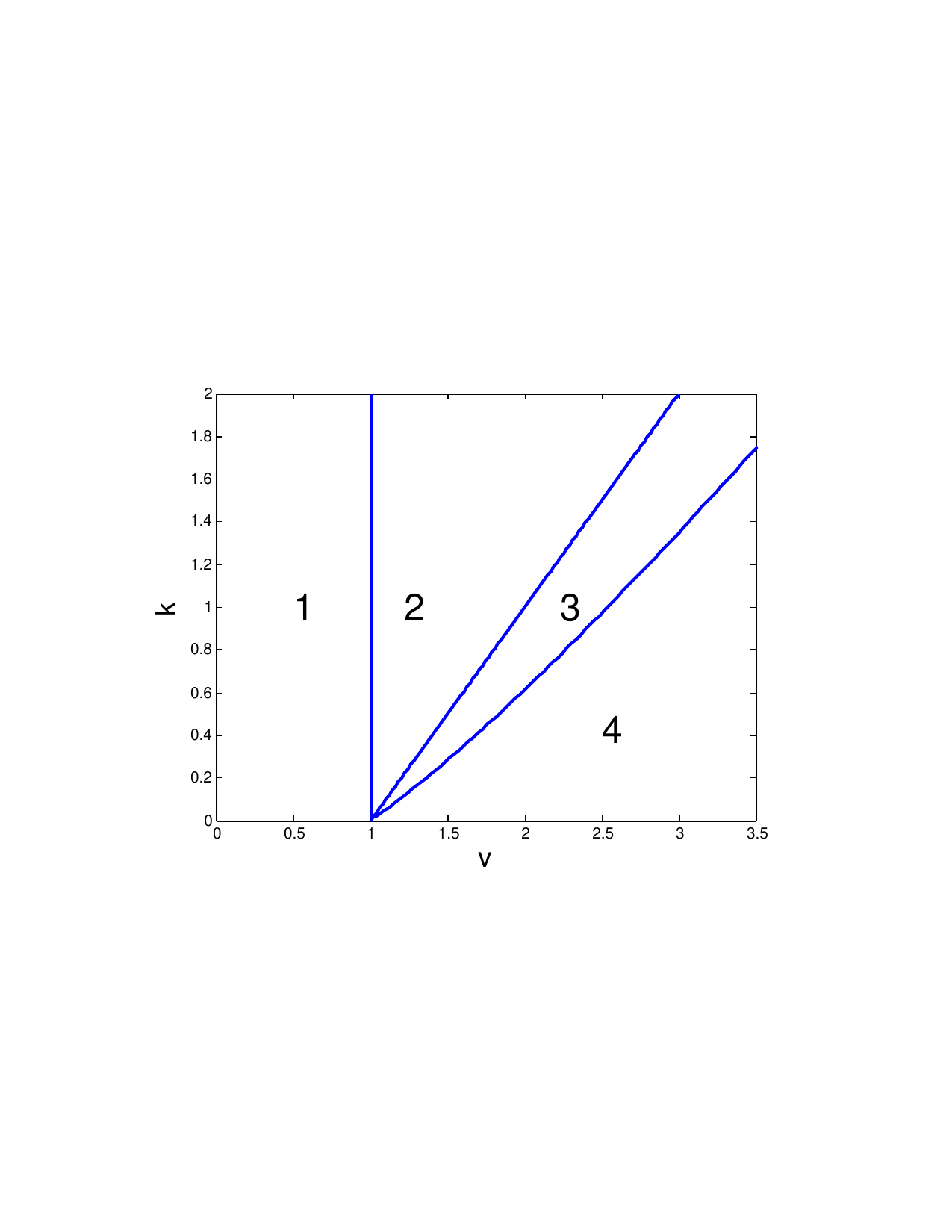}
\vspace{-5cm}
\caption{Types of equilibria in the $(v-p,k)$ space - unbounded potential demand. In type 1 no one arrives, in type 2 all arrivals occur late, and in type 4 all arrivals occur early. In type 3 there are mixed arrivals.}\label{f3}
\end{figure}

As seen, indeed only equilibria of types 1-4 exist and the regions corresponding to the four different equilibrium types do not overlap. Because the regions do not overlap, each $(v, k)$ has only one type of equilibrium. Combining this with Lemma~\ref{L1} in the Appendix, that any $(v, k)$ has at most one equilibrium of each type, we get the uniqueness of the equilibrium.

Recall that Figure~\ref{f2}, presented above, has
 $\l=1.$ As $\l$ increases the Figure changes and gradually  becomes  similar to Figure~\ref{f3}, which relates to an unbounded $\l.$ In particular, note that the lower-left corner of Figure~\ref{f2} resembles Figure~\ref{f3}. The equilibrium, and therefore the utility, depend on the region where we end up with after subtracting the price $p$ from the initial value $v$. As explained in Section~\ref{SCHI} at the beginning of the Appendix, the borders of Regions 1 and 2 are explicit; $0\leq v-p< 1$ constitute Region 1, and $1\leq v-p <k+1$  constitute Region 2. Recall that $v-p=\frac{V}{c}-\frac{P}{c}$. Thus, for example, the condition $0\leq v-p < 1$ that defines Region 1 corresponds to $0\leq V-P< c$, etc.
For $k>0,$  let  $v(k)$ denote the border between Regions 3 and 4. The proof of Proposition~\ref{P66} in Section~\ref{AP66} in the Appendix shows that $v(k)$ is determined by the equation
$$k=v-\frac{\ln{v}}{1-\frac{1}{v}},$$ for which $v=v(k)$ is the unique solution.
In Region 1 (where $0 \leq v-p < 1$), because the search cost exceeds the value of consuming the good, consumers never arrive. In Region 2 (where $1\leq v-p \leq k+1$), the value of an early purchase is smaller than the search cost, so consumers arrive only in period 2. In Region 3 (where  $k+1 < v-p < v(k)$), the arrival rate in period 1 makes the expected utility of  a consumer arriving then zero. But the added benefit, $k$, of buying in period 2 rather than in period 1 is sufficiently large to compensate for the reduced probability of getting the good in period 2. Thus, consumers arrive in both periods. In Region 4 (where $v-p\geq v(k)$), the small value of $k$ means that a large arrival rate in period 1 discourages consumers from arriving in period 2. Table 1 summarizes the results obtained so far.

\

\begin{center}

\begin{tabularx}{0.8\textwidth}
 {
  | >{\raggedright\arraybackslash}X
  | >{\centering\arraybackslash}X
  | >{\raggedleft\arraybackslash}X
   | >{\raggedleft\arraybackslash}X | }
\hline
Region & Borders of region in terms of $v-p$& Equilibrium & Utilities \\
\hline
 1  & $[0, 1)$  & $(0, 0)$ &$u_{1}, u_2\leq 0$ \\
\hline
 2  & $[1, k+1]$  & $(0, \l_2)$   & $u_{1}\leq 0,$  $u_2=0$\\
\hline
3  & $(k+1, v(k))$  & $(\l_1, \l_2)$ & $u_{1}=0,$ $u_2=0$\\
\hline
 4  & $[v(k), \infty)$  & $(\l_1, 0)$ &$u_{1}=0,$ $u_2\leq 0$\\
\hline
\end{tabularx}
\bigskip

Table 1: Definition and characteristics of the regions in terms of the net value  $v-p$.  In Region 1 no one arrives, in Region 2 all arrivals occur late, and in Region 4 all arrivals occur early. In Region 3 arrivals are mixed.

\end{center}

For example, if  $k=2$ and $v-p=2.5$ then $v-p$ lies in Region 2, and so  the unique equilibrium for $(v-p, k)=(2.5, 2),$ \ is \ $(0, \l_2),$ with $u_{1}\leq 0,$  $u_2=0.$
Table 1 shows that in each region whenever $\l_t>0$ the corresponding utility satisfies $u_t=0$. Whenever $\l_t=0$ the corresponding utility satisfies $u_t\leq 0.$ This was already explained in Section~\ref{EB} just before presenting all types of equilibria.

\subsection{Equilibrium arrival rates}\label{SAR}

We are interested in the behavior of $\l_1,\l_2$ and $\l_1+\l_2$ as functions of the net value $v-p$. We would expect that as the net value increases ( i.e., as the price $p$ declines), more people would arrive. It turns out that  this relation indeed holds for a wide range of parameter values, namely in Regions 2 and 4, (see Proposition~\ref{P1} in the Appendix). However, an exception arises in Region 3 (in which arrivals are mixed, i.e.,  $\l_1, \l_2>0$). In that case,  $\l_2$ increases with the price $p$. In other words, as long as we are still in Region 3, an increase in the price increases the expected number of arrivals in period 2. This seems counterintuitive. The reason is that a higher price induces fewer consumers to arrive in period 1, 
 $\l_1$ declines,  (see Proposition~\ref{P1} in the Appendix~\ref{ASAR}), making it more likely that the good is still available in period 2 (since the probability $e^{-\l_1}$ that the good is still available in period 2 increases with the price $p$), and so more consumers arrive in period 2.

Moreover, Theorem~\ref{T1} below shows that in Region 3 (where $\l_2$ increases with the price $p$) the {\textit{total} arrival rate  $\l_1+\l_2$ over both periods 1 and 2 also increases with the price, although $\l_1$ does decline with the price. Hence in Region 3 the increase in the arrival rate in period 2 exceeds the decrease in the arrival rate in period 1. 
Formally,

\begin{theorem}\label{T1}
In equilibrium, in Region 3 (where consumers may arrive in both periods) the aggregate arrival rate $\l_1+\lambda_2$ increases with the price $p.$
\end{theorem}

The proof of Theorem~\ref{T1} appears in the Appendix (see Section~\ref{PT1}). The conclusion is that whenever in equilibrium consumers arrive only in one of the periods, increasing the price leads to fewer arrivals. However, when the equilibrium arrivals are mixed, increasing the price leads to more arrivals in total.

Consider next the equilibrium $(\l_1, \l_2)$ in each of the four regions. The expressions that define their values involve the {\it Lambert function} $W[a],$ also called the omega function or product logarithm. The Lambert function is  the inverse to the function $ze^z$. Namely, if $a=ze^z$ then $W[a]=z.$
Since  the expressions for the equilibrium points $(\l_1, \l_2)$ turn out to be quite complex, they are described in the Appendix (see Theorem~\ref{T10} in Section~\ref{PP3}).

\subsection{Profit maximization}\label{PM}

The seller's expected profit in units of $c$ is the price $p>0$, multiplied by the probability that the good is sold, which is the same as the probability of at least one arrival. Thus,
\begin{equation}\label{E5000}
\pi=p\left(1-e^{-(\l_1+\l_2)}\right),
\end{equation}
where  $(\l_1, \l_2)$ is the unique equilibrium corresponding to $(v-p, k)$; it can be computed by using Theorem~\ref{T10} (see Section~\ref{ASARV} in the Appendix). Our next goal is to find for any given $(v, k),$ the price maximizing the firm's expected profit. Corollary~\ref{MON} follows immediately from Theorem~\ref{T1} presented in the previous section.

\begin{corollary}\label{MON}
Given a pair $(v, k)$ in Region 3, the profit $\pi$ increases with the price $p$ for all $p$ such that $(v-p,k)$ still lies in Region 3.
\end{corollary}
\begin{proof}
By~(\ref{E5000}), \ $\pi=p\left(1-e^{-(\l_1+\l_2)}\right).$ By Theorem~\ref{T1}, in Region 3 \  $\l_1+\l_2$ increases with $p.$ This implies that  $\pi=p\left(1-e^{-(\l_1+\l_2)}\right)$ also increases with $p$.
\end{proof}

Corollary~\ref{MON} is key in proving a central result regarding profit maximization. If the firm must offer the good in both periods 
then the equilibrium with the profit-maximizing price has no mixed arrivals: either all arriving consumers arrive in period 1, or all in period 2:

\begin{theorem}\label{FORCE}
Given $(v, k)$, the price that maximizes expected profit induces all arriving consumers to arrive at one time, either all arrive early or else all arrive late (depending on the parameters of the model).
\end{theorem}
\begin{proof}
The proof is straightforward when the maximum profit is attained in Regions 2 or 4, as by definition (see Table 1), \  $\l_1=0$ in Region 2  (i.e., arrivals occur in period 2 only), and $\l_2=0$ in Region 4 (i.e., consumers arrive only in period 1). But in Region 3 the equilibrium has  mixed arrivals, namely $\l_1, \l_2>0.$  By Corollary~\ref{MON}, the expected profit $\pi$  increases in $p$ in Region 3. Therefore, the profit-maximizing price does not lie in Region 3, but rather may only lie at the border separating  it from  Region 2. But at that point $\l_1=0,$ namely consumers arrive only in period 2. Hence the result.
\end{proof}

We examine next the behavior of $\pi$ as a function of $p.$
Table 1 defined the regions in terms of $v-p.$ The definition of the regions in terms of $p$ are:
\begin{itemize}
\item Region 1 consists of all $p$ satisfying \ $v-1\leq p\leq v$.
\item Region 2 consists of all $p$ satisfying \ $v-k-1\leq p< v-1$.
\item Region 3 consists of all $p$ satisfying \ $v-v(k)<p< v-k-1$.
\item Region 4 consists of all $p$ satisfying \ $0\leq p\leq v-v(k)$,
\end{itemize}
where $v=v(k)$ was defined as the border between Regions 3 and 4 in terms of $v-p.$

Note  that not all four regions exist for all pairs $(v,k)$. For example, if  $v-k-1<0$ then Regions 3 and 4 are empty. Recall that in Region 1 $\pi= 0.$ By Corollary~\ref{MON} in Region 3 profit is monotonically increasing. Finally, it follows from  Proposition~\ref{P5000} in the Appendix that in Regions 2 and 4 profit is strictly concave. Figure~\ref{F300} presents $\pi$ for $v=10$ and $k=2.$  The numbered regions in Figure \ref{F300} correspond to the regions in Figure \ref{f3}. For example, in both figures the region that is labeled as 2 corresponds to Region 2, (i.e., all the pairs $(v, k)$ with equilibrium of type 2).

 \begin{figure}[H]
\begin{center}
\includegraphics[scale=0.35]{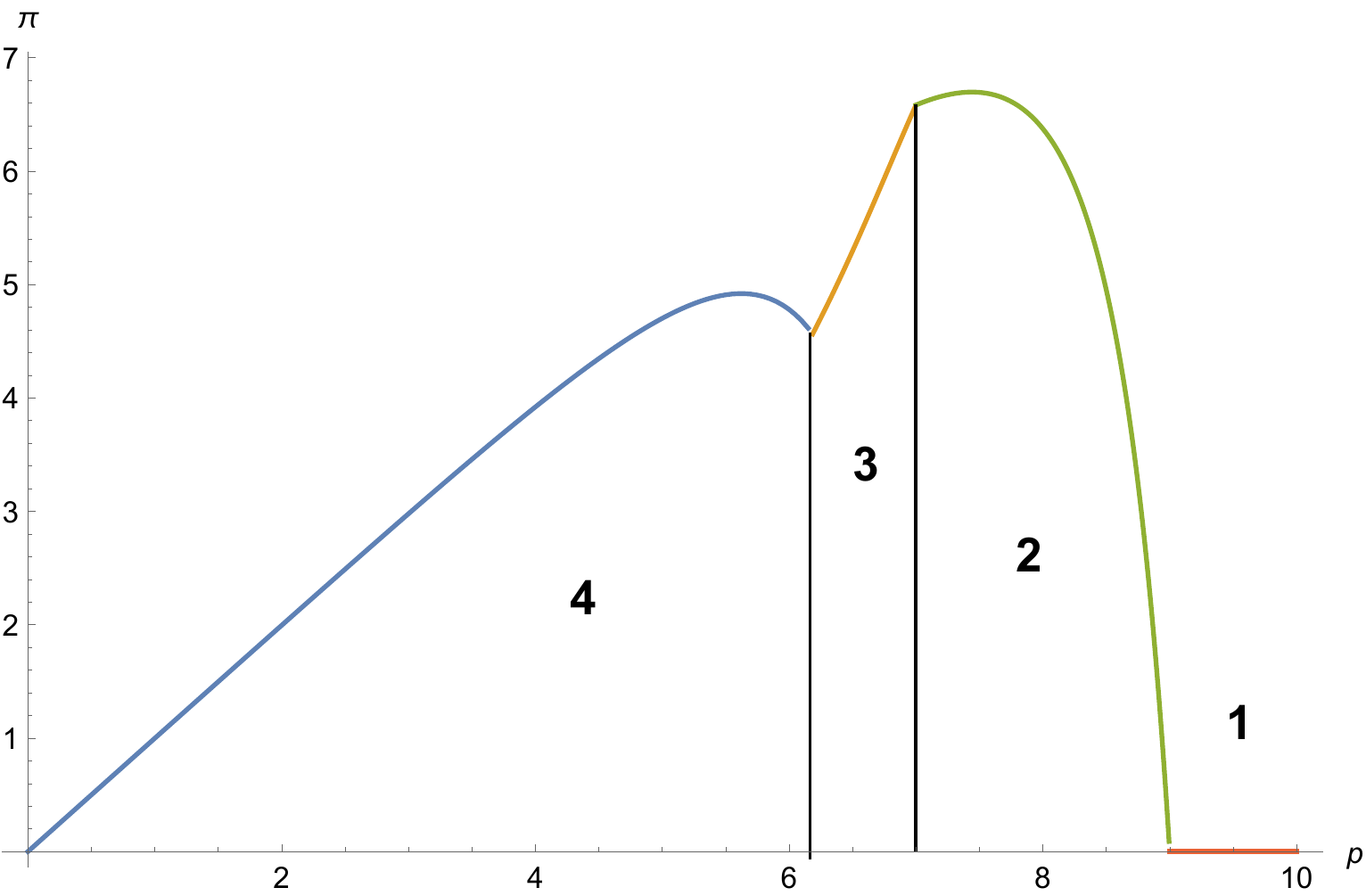}
\caption{$\pi$ as a function of $p,$ when $v=10$ and $k=2.$ In Region 1 no one arrives, in Region 2 all arrivals occur late, and in Region 4 all arrivals occur early. Region 3 has mixed arrivals.}\label{F300}
\end{center}
\end{figure}

Note that in Figure~\ref{f3} the order among Regions 1 to 4 is from left to right, whereas in Figure~\ref{F300} the order is reversed (right to left). The reason is that Figure~\ref{f3} relates to $v-p$ (in the $x$-axis), whereas Figure \ref{F300} relates to $p.$

As seen in the example presented in Figure~\ref{F300}, $\pi$ is indeed concave in Regions 2 and 4, and monotonically increasing in Region 3. In this example the profit-maximizing price lies in Region 2 (where all arrivals occur in period 2). In general, there are 3 candidates for the global maximum of $\pi$: the local maximum in Region 2 or in Region 4, or the right end of Region 3.

Theorem~\ref{T4} below describes for a given pair $(v, k)$ the price $p^*$ that maximizes expected profit $\pi$. Denote the local maximum of Regions 2 and 4 as $p_2$ and $p_4$ respectively. Denote the right end of Region 3 as $p_3$; then  $p_3=v-k-1.$ Denote \  $\pi_i =\pi(p_i), \ i=2,3,4.$  Lastly, let $p^*$ be the profit-maximizing price, and $\pi^*$ the global maximum  of the expected profit $\pi,$ namely $\pi^*=\pi(p^*).$

With $W[\cdot]$ denoting the Lambert function, let $W=W[-(k+1)e^{-(k+1)}]$, and let $v_f$ be the larger root of the function $\pi_4-\pi_3=\left( 1-\frac{v}{k+1}\right)W-\ln{(v-k)}.$

\begin{theorem} \label{T4}
Given the values of $v$ and of $k$, the profit-maximizing (normalized) price $p^*$, the induced arrival rates $(\l_1$ and  $\lambda_2$), and the associated expected (normalized) profit $\pi^*$ satisfy
\begin{enumerate}
\item If $v<1,$ then  $\pi^* =0,$ \  $\l_1=0,$ and  $ \lambda_2=0$.

\item If $1 \leq v\leq e^{W+k+1},$ then $p^*=p_2=v-{\ln v\over 1-{1\over v}},$  $\pi^*=\pi_2=v-1-\ln v,$ \ $\l_1=0, and \ \lambda_2=\ln{v}$.

\item If $e^{W+k+1}< v \leq v_f,$ then $p^*=p_3=v-k-1$, $\pi^*=\pi_3=(v-k-1)\left( 1-e^{-(W+k+1)}\right)$, $\l_1=0, and \ \lambda_2=W+k+1$.

\item If $v> v_f,$ then
$p^*=p_4=v-k-{\ln\left(v-k\right)\over1-{1\over v-k}}$,
$\pi^*=\pi_4=v-k-1-\ln\left(v-k\right)$, $\l_1=\ln{(v-k)}$, and $\lambda_2=0$.
\end{enumerate}

\end{theorem}

See Lemma~\ref{P8}  and Definition~\ref{D100} in the Appendix for elaboration on $v_f$.
The proof of Theorem~\ref{T4} is also in the Appendix (see Section~\ref{APM}).
As can be noted from all the different cases in Theorem~\ref{T4}, given a pair $(v,k)$ such that $v\geq 1,$ the profit-maximizing firm may set the price such that consumers either arrive only in period 2 (cases 2 and 3), or only in period 1 (case 4), but never in both periods. This is consistent with Theorem~\ref{FORCE}.

\begin{corollary}\label{SUP}
For all $(v, k)$ the upper bound of $\pi$ is  $ v-1-\ln{v}.$   This upper bound is realized only when \ $1 \leq v\leq e^{W+k+1}$. In that case $\pi^*=\pi_2 = v-1-\ln{v},$ and all arrivals are in period 2. If $(v, k)$ does not satisfy this condition, then the expected profit of the firm will be strictly smaller than $v-1-\ln{v}.$
\end{corollary}

See the proof of Corollary~\ref{SUP} in Section~\ref{ASUP} in the Appendix.

Note that if $k$ is sufficiently large then the condition $1 \leq v\leq e^{W+k+1}$ (appearing in case 2 of Theorem~\ref{T4}) is indeed satisfied and the upper bound $v-1-\ln{v}$ of $\pi$ is thus obtained.
Therefore a firm which controls the penalty $k$ should set it high enough so as to satisfy the condition.
\subsection{A firm does not profit by offering the good early}\label{SB}

The results in the previous section imply that the firm never profits by offering the good early, and in some cases its expected profit strictly declines by offering the good early, as the following theorem shows.

\begin{theorem}\label {Tonly2}
Profit maximization requires offering the good only at the time consumers most highly value it.
\end{theorem}
\begin{proof}
Consider a model with a single-period, in which the firm does not offer the good early. This is equivalent to assuming that $k$ is infinite which, when $v\geq 1$, leads to the second case in Theorem~\ref{T4} (since  $\lim\limits_{k\to\infty}e^{W+k+1}=\infty$, and so \  $1\leq v<e^{W+k+1}$), thus earning $\pi_2,$ which is the upper bound of the expected profit for all $k$.
Then, in the two-period case (unless $v<1$ which results in $\pi^*=0$),
 if $1 \leq v< e^{W+k+1}$ then offering the good in both periods does not affect profits. But if $v>e^{W+k+1}$  the firm's expected profit is strictly smaller if it offers the good in both periods and gains either $\pi_3$ or $\pi_4$ rather than offering it only in period 2, which is when all consumers most value it and gains the larger profit of $\pi_2$.
\end{proof}

One may expect that if consumers are heterogeneous in valuation and in the penalty then the results presented above do not hold. However, the following section shows that when each homogeneous sub-population is unbounded the heterogeneous case reduces to the  homogeneous case and thus the results still hold.

\subsection{Heterogeneous consumers}
Assume that consumers differ in the (normalized) value $v$ and in the early-arrival penalty $k.$
For simplicity, only two values: $v_1, v_2,$ with \ $v_1> v_2,$ and two penalties $k_1, k_2,$ \ with $k_1> k_2.$ So there are four sub-populations of consumers (corresponding to all possible combinations of $v_i,$ and $ k_j$). Denote by ${\rm SP}_{ij}$ the sub-populations of consumers having $v_i$ and $k_j$.
Let $u_{tij}$ be the normalized expected utility of a consumer from ${\rm SP}_{ij}$ who arrives in period $t.$
Then for each pair $i,j$ \ ($i=1,2, \ \  j=1,2$),
\begin{equation}\label{EEE1}
u_{1ij}= -1+\sum_{m=0}^{\infty} \frac{v_i-k_j-p}{m+1}\cdot\frac{\l_1^j e^{-\l_1}}{
m!}=
-1+(v_i-k_j-p)\frac{\left(1-e^{-\l_1}\right)}{\l_1},
\end{equation}
\begin{equation}\label{EEE2}
u_{2ij}=-1+e^{-\l_1}(v_i-p)\frac{\left(1-e^{-\l_2}\right)}{\l_2},
\end{equation}
where $\l_1, \l_2$ are the total arrival rates (from all sub-populations) in periods 1 and 2 respectively. Assume that demand by each sub-population is high. As in the homogeneous case, and for the same reason, equilibrium types 5-7 are not possible. Let $\l_{tij}$ be the arrival rate from ${\rm SP}_{ij}$ in period $t.$
Note that $v_1> v_2$ and $k_1> k_2$ imply that ${\rm SP}_{12}$ (with $v_1$ and $k_2$) dominates all other ${\rm SP}_{ij}$ in the expected utilities in all periods. Namely, for all  $(i,j)\ne (1,2)$: in period 1  $u_{112}>u_{1ij},$ and in period 2  $u_{212}>u_{2ij}$. (This is seen by substituting  $(v_1, k_2)$ in comparison to substituting $(v_i, k_j)\ne (v_1, k_2)$ \ in Equations~\eqref{EEE1} and~\eqref{EEE2}.)
In addition, as in the homogeneous case, whenever $\l_{tij}>0,$ (namely, arrivals are expected from ${\rm SP}_{ij}$ in period $t$), then $u_{tij}=0,$ and whenever
$\l_{tij}=0,$ (namely, no arrivals are expected from ${\rm SP}_{ij}$ in period $t$), then $u_{tij}\leq 0.$ 
Taken together, in equilibrium either $u_{tij}< u_{t12} < 0$ for all sub-populations, in which case there are no arrivals in period $t,$ or else $u_{tij}< u_{t12} = 0,$ in which case the only arrivals are from the superior sub-population ${\rm SP}_{12}$ (with $v_1$ and $k_2$).

Thus whenever some consumers arrive, these are from ${\rm SP}_{12}$ only, effectively ruling out heterogeneity and returning to the homogeneous case.

\subsection{The effect of the penalty $k$ for consuming early}

By Corollary~\ref{SUP}, when $1\leq v\leq e^{W+k+1},$ the upper bound $v-1-\ln{v}$ of $\pi$ is realized. Otherwise, the expected profit is strictly smaller.  This implies that if the firm must offer the good early the profit maximizing penalty for early purchase, $k$, should be large enough to satisfy $1\leq v\leq e^{W+k+1}.$ A further increase in $k$ does not affect $\pi,$ since $\pi=v-1-\ln{v}$ does not depend on $k.$
Accordingly, for any given $v,$ increasing $k$ such that the profit-maximizing price moves from $p_4$ to $p_3$ and finally to $p_2$ increases profits. Figures~\ref{FNEW1}-\ref{FNEW3N}
 depict the expected profit for $k=0.5, \ 1 , \ 2$.  In all these figures $v=10.$ Note that the upper bound $v-1-\ln{v}=6.697$ is not realized for $k=0.5$ and $k=1$ (see Figures~\ref{FNEW1} and~\ref{FNEW2}), but is realized for $k=2$ (see Figure~\ref{FNEW3N}).

\begin{figure}[H]
\minipage{0.32\textwidth}
\includegraphics[width=\linewidth=0.3]{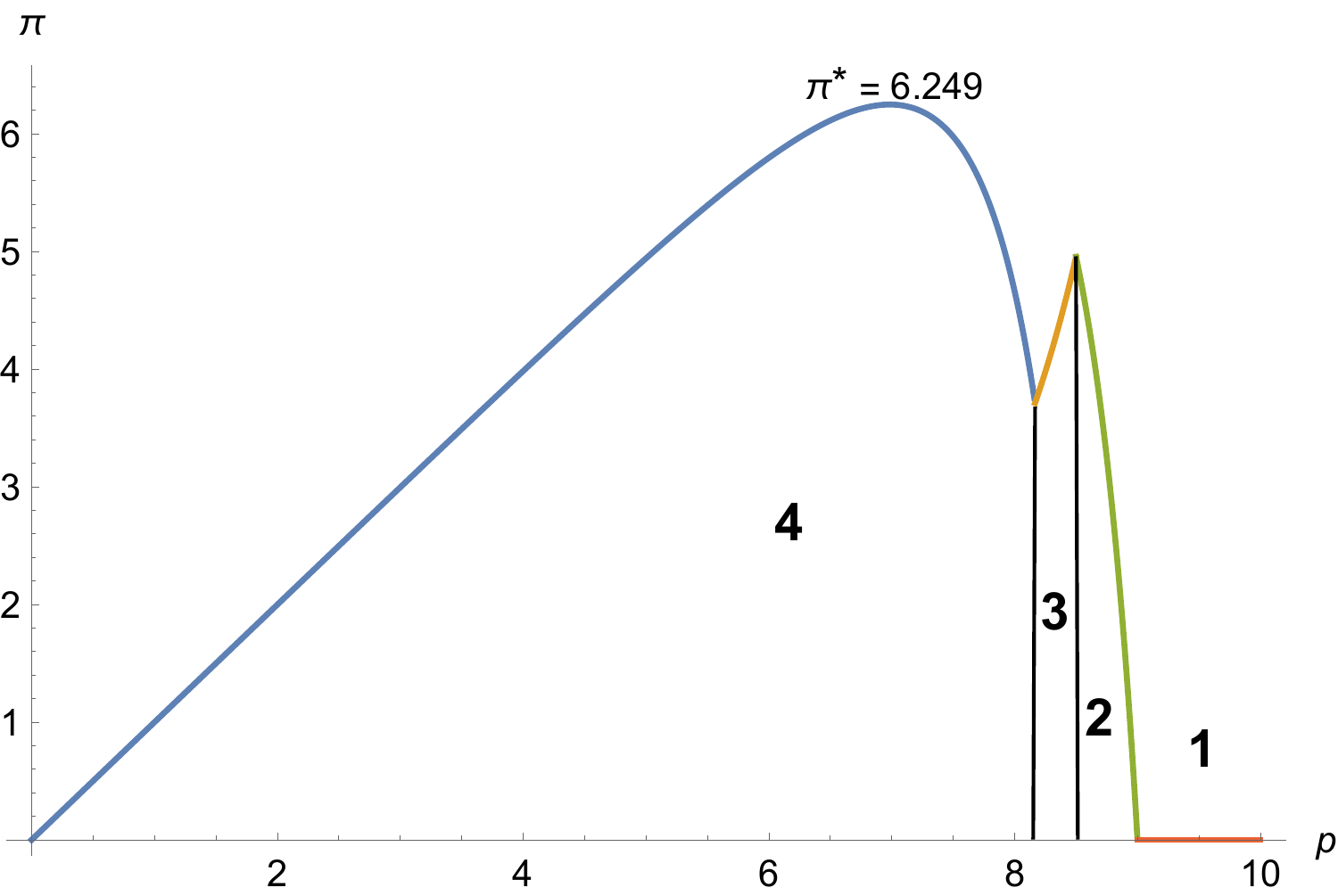}
\caption{$\pi$ when $k=0.5$}\label{FNEW1}
\endminipage\hfill
\minipage{0.32\textwidth}
\includegraphics[width=\linewidth=0.3]{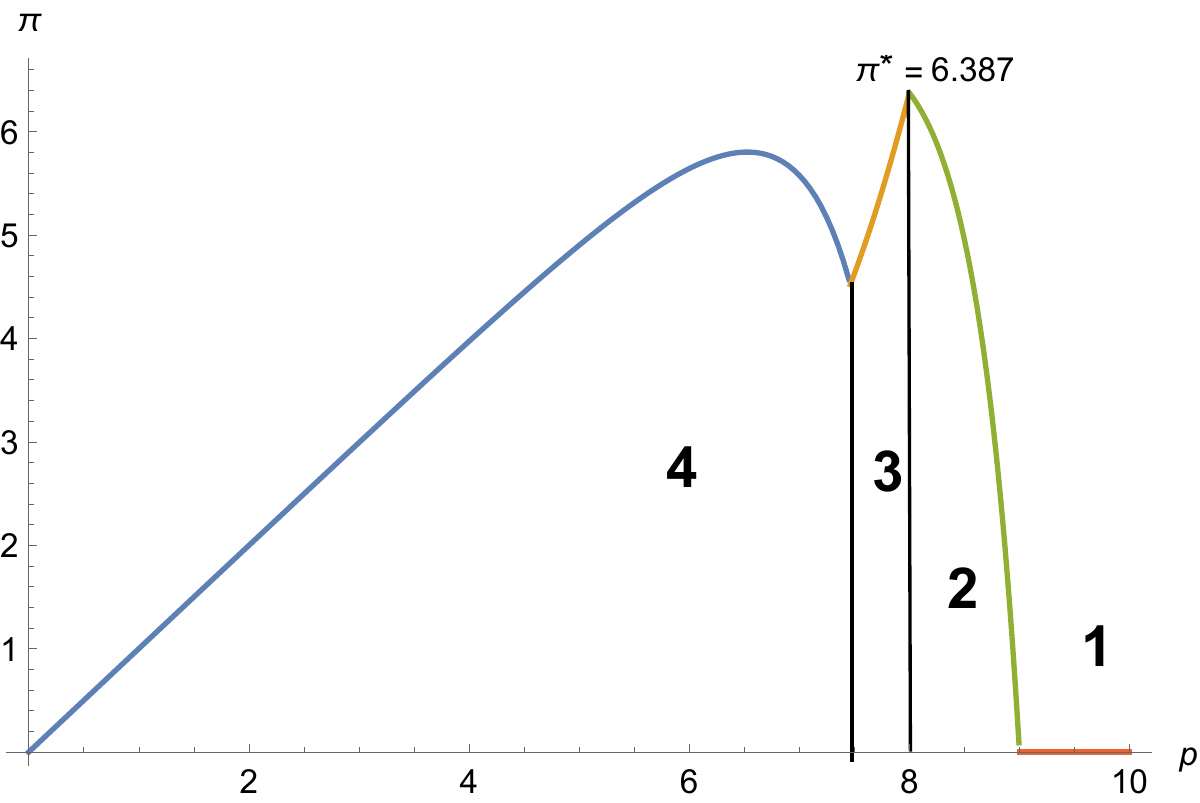}
\caption{$\pi$ when $k=1$}\label{FNEW2}
\endminipage\hfill\minipage{0.34\textwidth}
\includegraphics[width=\linewidth=0.3]{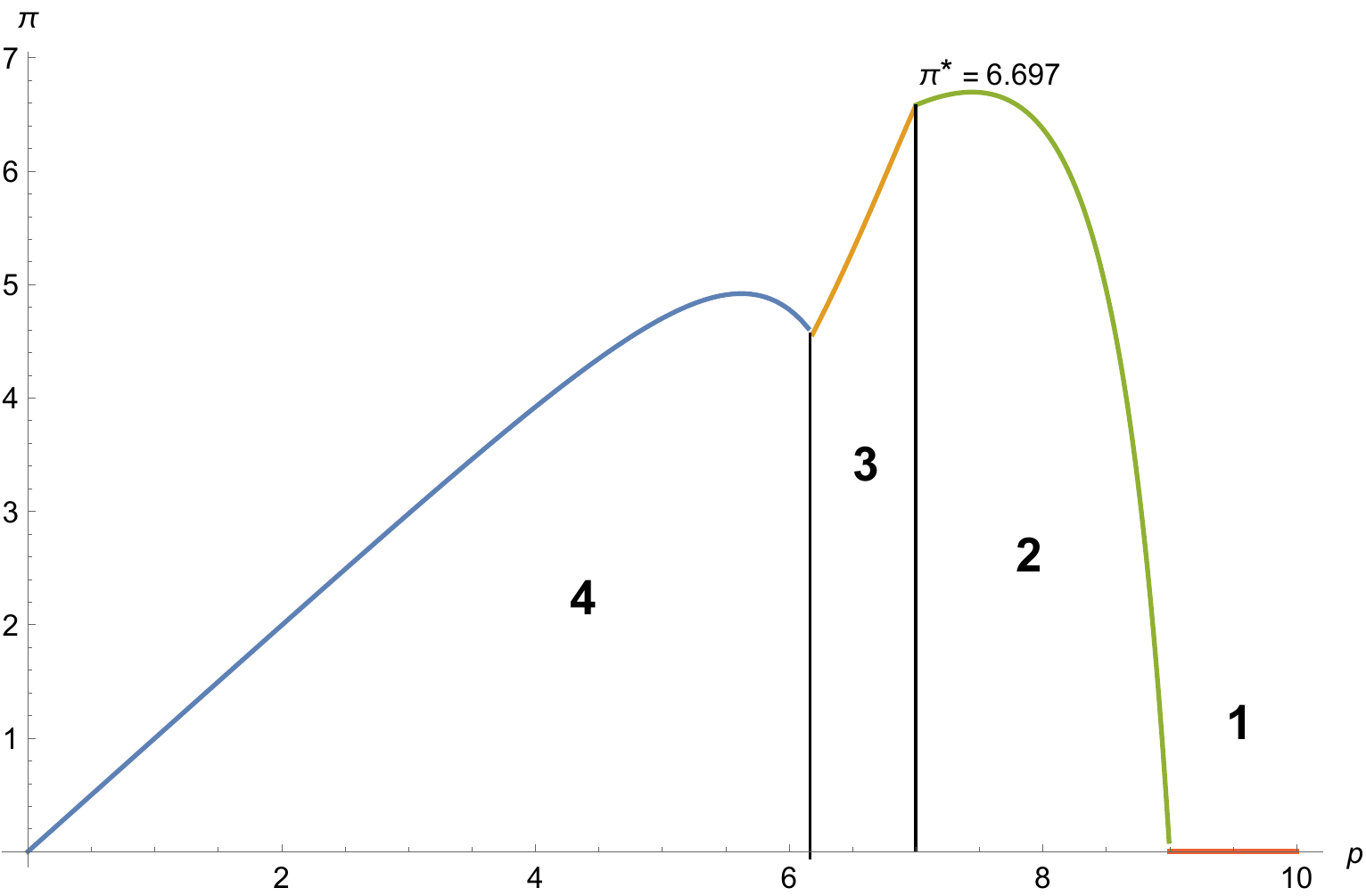}
\caption{$\pi$ when $k=2$}\label{FNEW3N}
\endminipage\hfill
\end{figure}

But what happens when the change in $k$ is small? In other words, if  $(v,k)$ lies in a certain region and $k$ increases just a little so that $(v,k)$ still lies in the same region, how does $\pi^*$ change? Obviously, in Region 1, where $\pi^*\equiv 0,$ and in Region 2, where $\pi^*=v-1-\ln{v},$ \ $\pi^*$ is unaffected by a change in $k$. However, in Regions 3 and 4 $\pi^*$ decreases with $k.$ (For Region 4 this is easy to prove analytically. For Region 3 we establish this numerically).
So increasing the penalty of early arrival without changing the arrival profile (i.e, we are still in the original region), reduces expected profits.

\subsection{The effect of the arrival cost $c$}\label{EC}

The previous analysis normalized all monetary values by considering the arrival cost as the unit value. Now we study how a change in the arrival cost, $c$, affects the arrival rates $(\l_1, \l_2)$, and affects the expected profit. In the normalized case we obtained the upper bound  $\pi^*=v-1-\ln{v}$ (see~Corollary \ref{SUP}). From this we deduce  the non-normalized  upper bound:
$\Pi^*=V-c-c\ln{\frac{V}{c}}.$
As we would expect, given $(v, k),$ both $(\l_1, \l_2)$  and $\pi$ decrease with $c$. Figures~\ref{Fchalf}-~\ref{Fc6} show $\Pi$ as a function of $P$ when $V=10$ and $K=2.$

\begin{figure}[H]
\minipage{0.3\textwidth}
\includegraphics[width=\linewidth=0.2]{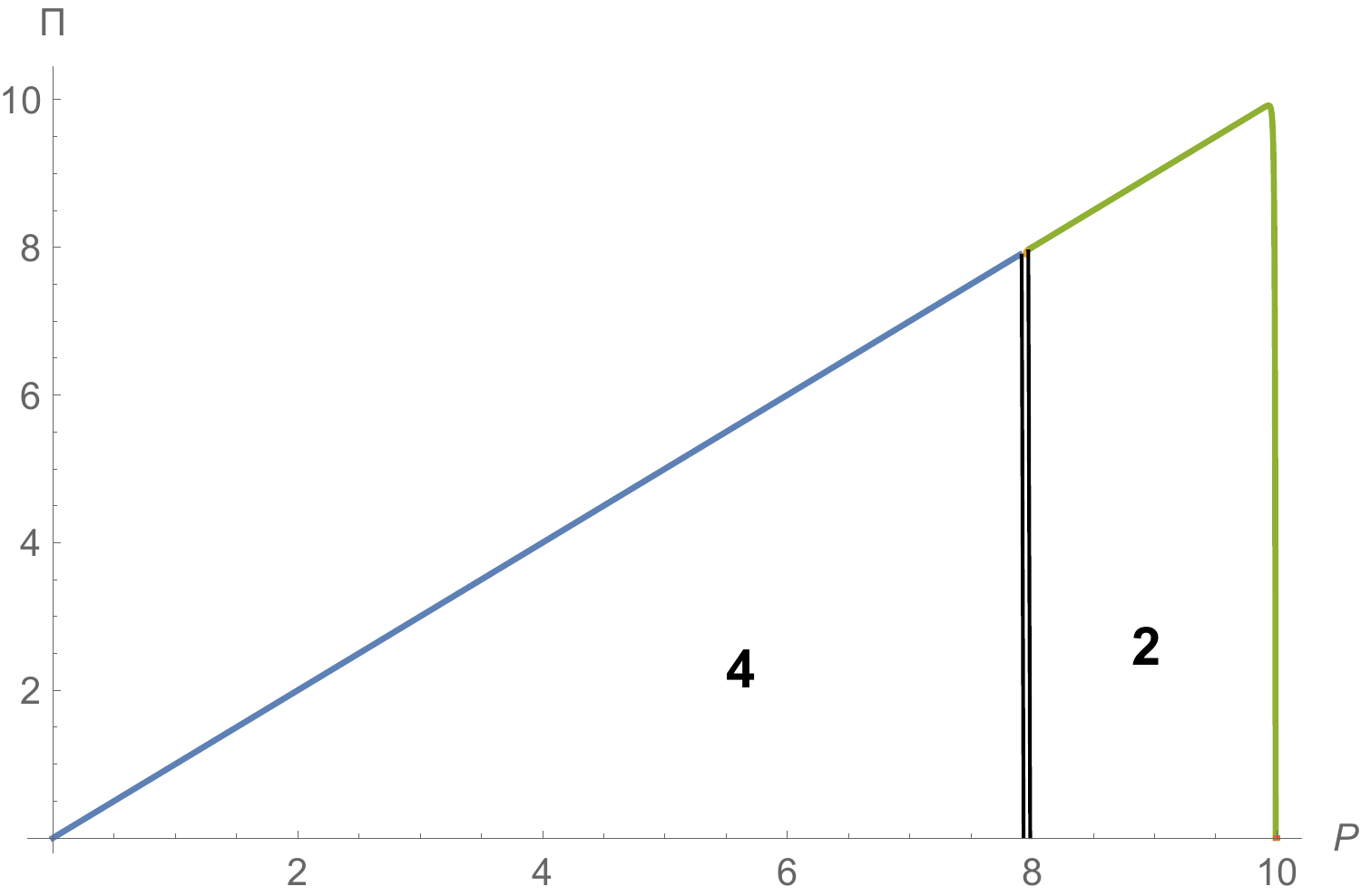}
\caption{\ \ $c=0.01$}\label{Fchalf}
\endminipage\hfill
\minipage{0.3\textwidth}
\includegraphics[width=\linewidth=0.2]{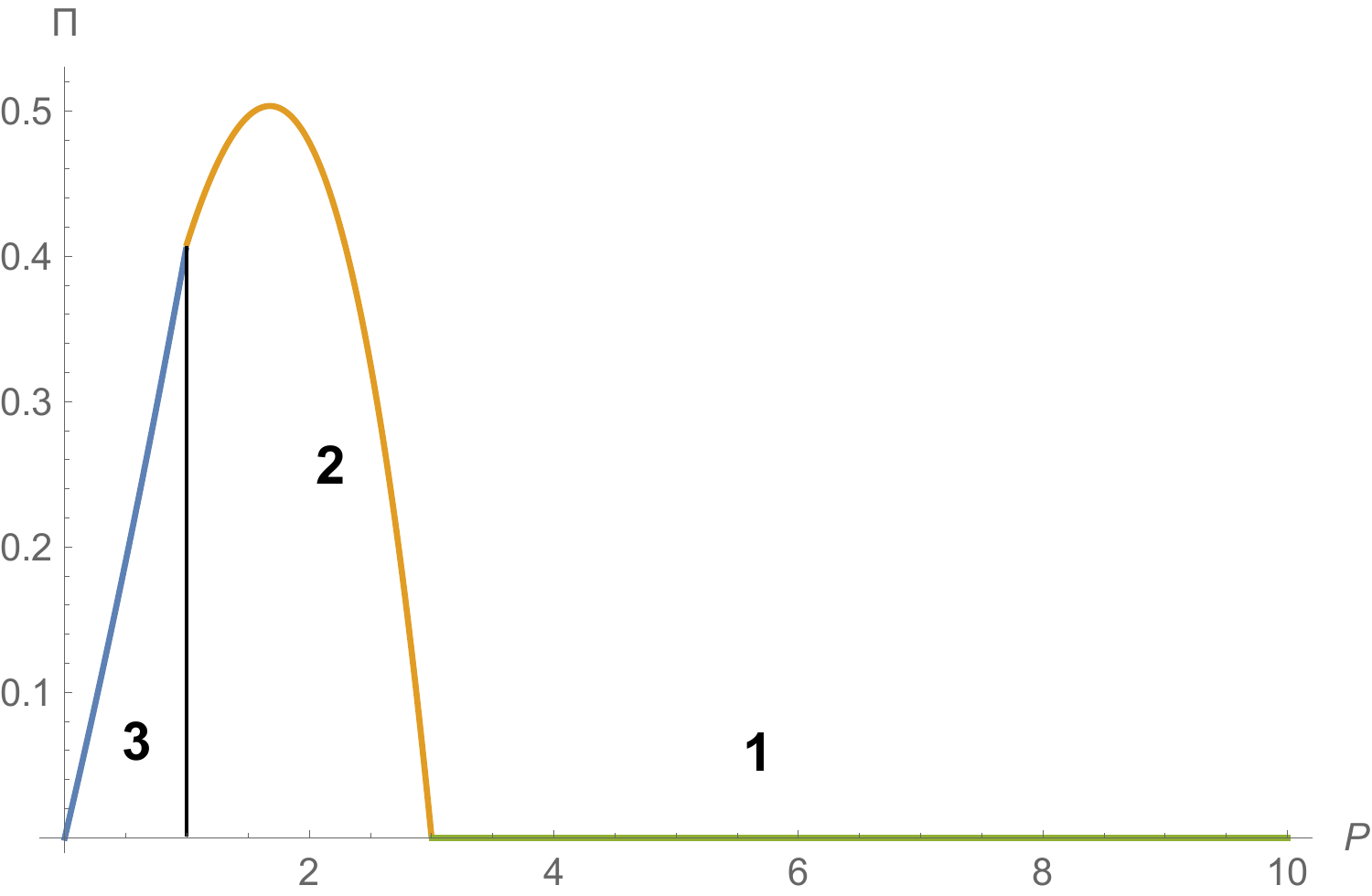}
\caption{\ \ $c=7$}\label{Fc5}
\endminipage\hfill\minipage{0.32\textwidth}
\includegraphics[width=\linewidth=0.3]{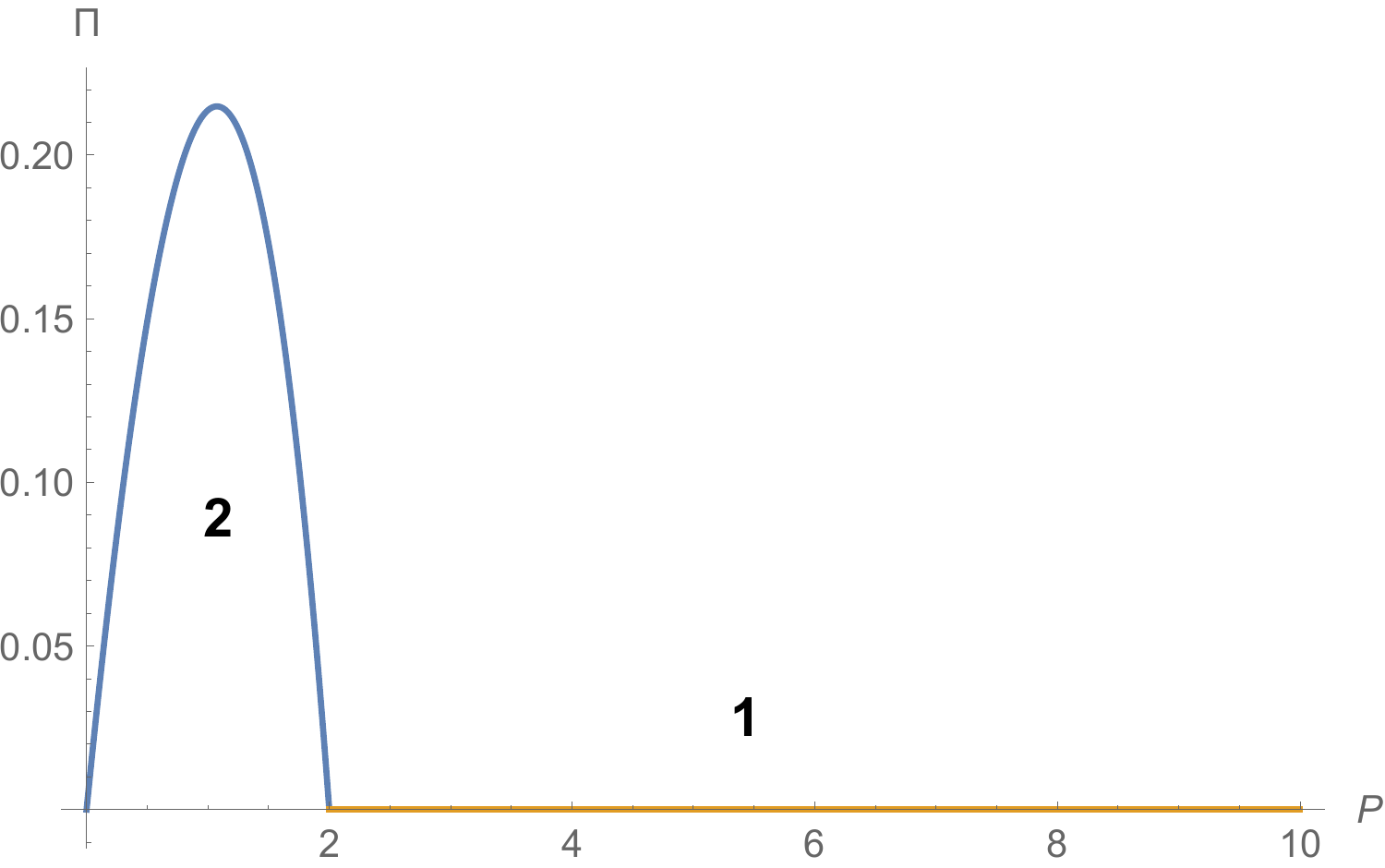}
\caption{\ \ $c=8$}\label{Fc6}
\endminipage\hfill
\end{figure}

The decrease of $\Pi^*,$ from $9.94$ when $c=0.01,$ to $0.216$ when $c=8$, is clearly shown.
It also leads to a gradual shift of early arrivals to later arrivals and finally to no arrivals.
Eliminating the arrival cost  also eliminates the region of mixed arrivals (see Figure~\ref{Fchalf}), so that depending on the value of the parameters either consumers arrive only at the time they most desire the good or only at the earlier time. Outcomes when the arrival cost is zero are relevant when people buy online. See more in Section~\ref{AEC} in the Appendix.

\section{Multiple units for sale}\label{Munits}

The firm has $N$ units of the good. All $N$ units are offered for sale in period 1; units not sold will be offered for sale again in period 2. Denote by $s$  the number of units unsold in period 1. We assume that consumers are not informed about the number of units that are left for sale in period 2. 
Any potential consumer can compute in advance the utilities of coming in period 1 or 2, and decide accordingly if and when to arrive.

We look for an equilibrium with a symmetric consumer strategy. A strategy consists of probabilities $q_1,q_2,$ where $q_t$ is the probability of arriving in period $t$, \
$t=1, 2.$
The number of arrivals in each case, $X_t$, is a random variable with Poisson distribution and parameters $\lambda_t=\lambda q_t$.  Let $U_t$ be the expected benefit to a consumer who arrives in period $t.$ Let $U_{2,s}$ be  the expected benefit to a consumer who arrives in period 2 when $s$ units are left for sale. Let $P_s$ be the probability that $s$ units are for sale in period 2.
 Then, \ \ $U_2=\sum_{s=0}^{N}P_sU_{2,s},$ and

\begin{equation}\label{EN10}
U_{1}=-c+(V-K-P)\left[\P(X_1\le N-1)+\sum_{i=N}^{\infty} \P(X_1=i){N\over i+1}\right].
\end{equation}
 For $s=1,\dots,N$
\begin{equation}\label{EN20}
U_{2,s}=-c+(V-P)\left[\P(X_2\le s-1)+\sum_{i=s}^{\infty} \P(X_2=i){s\over i+1}\right].
\end{equation}
For $s=0,$ \  $U_{2,0}=-c.$
As in the single-unit model, it is convenient to use normalized monetary values by considering $c$ as the unit value. Thus define
$v \equiv \frac{V}{c}$, $k \equiv \frac{K}{c}$, $p \equiv \frac{P}{c}$,  $\pi \equiv \frac{\Pi}{c}$, and  $u_t \equiv \frac{U_t}{c}$ \ $(t=1, 2).$

\subsubsection{Results of the model with a single unit hold for multiple units}\label{GER}

Much of the mathematical analysis becomes too difficult when $N>1,$ hence we used  {\it{Mathematica}} software for our investigations. All the figures in this section were executed for $k=2$, $v=10$ and $N=1,2,4,10, 50$ units. The figures illustrate that all the qualitative results proved for the single-unit case also hold for the multi-unit case. In addition, we solved many other instances of $v, k, N,$ numerically, and always observe the same behavior. Figure~\ref{FNEW8} presents $\l_1+\l_2$ as  a function of the price $p,$ for $N=1,2,4,10,$ (the graph for $N=50$ is not shown because it is in a different scale, but it has the same characteristics as $N=1,2,4,10$).

\begin{figure}[H]
\captionsetup{width=0.9\textwidth}
\begin{center}
\includegraphics[scale=0.4]{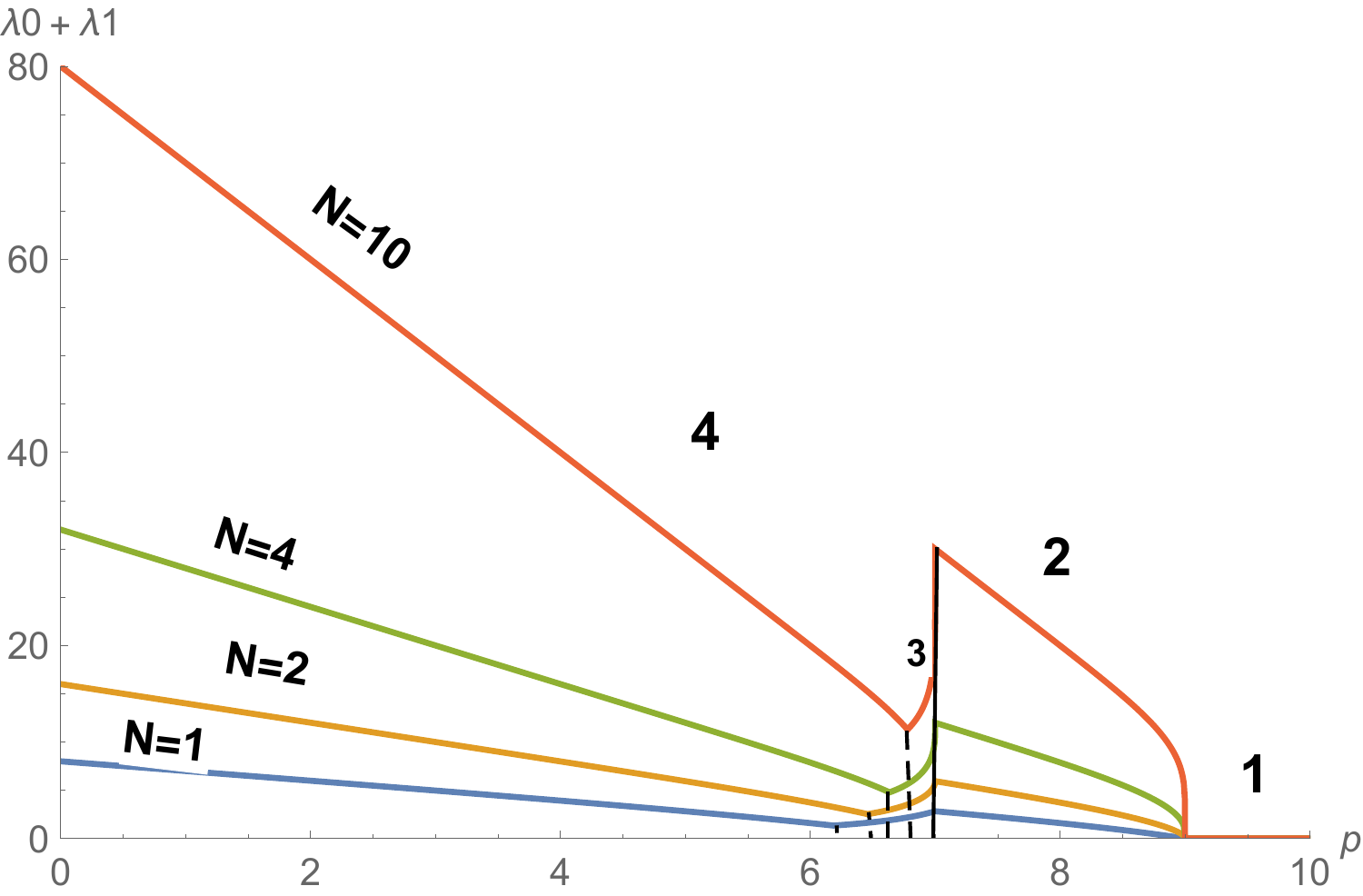}
\caption{ $\l_1+\l_2$ for $v=10$, $k=2$, rising from the lowest line for $N=1$ followed by $N=2,$ \ $N=4$ and concluding with the highest line for $N=10$. Vertical lines denote the borders of Region 3: dashed lines at the left border, which is different for different values of $N,$ and solid line at the right border $v-k-1=7$ for all values of $N.$}\label{FNEW8}
\end{center}
\end{figure}

As seen in the Figure, as in the single-unit case, the expected total number of arrivals $\l_1+\l_2$ decreases with the price in all cases except for the mixed arrival case (Region 3), where it increases with the price.
Figure~\ref{FNEW8}  also shows that as long as the price is not too close to $v-1$ the total expected arrival $\l_1+\l_2$ exceeds the number $N$ of units. For example, for the line of $N=10$ when $p<8.85<9=v-1,$ $\l_1+\l_2>10.$
Thus there is indeed an expected shortage. Proposition~\ref{Lambda2} in Section \ref{APN10} in the Appendix states that the expected number of arrivals when the price approaches  $v-1$ tends to the number of units for sale. Proposition~\ref{PN10} gives the expression for expected profits.

Figures~\ref{FNEW4}-\ref{FNEW7} below present $\pi$ for  $v=10, k=2$  for various values of $N.$ The Figures illustrate that  also for $N>1$ the profit increases monotonically in Region 3. It has at most one local maximum in Region 2 and at most one in Region 4. More details  are in Section~\ref{AGER} in the Appendix.

\begin{figure}[H]
\minipage{0.3\textwidth}
\includegraphics[width=\linewidth=0.2]{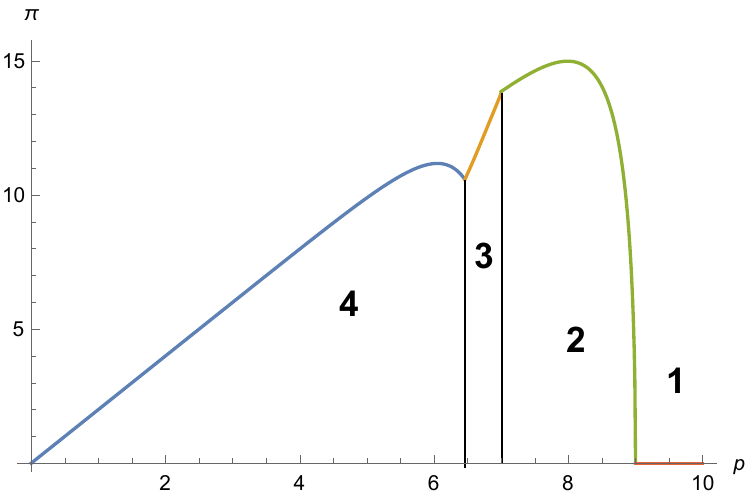}
\caption{\ \ $N=2$}\label{FNEW4}
\endminipage\hfill
\minipage{0.3\textwidth}
\includegraphics[width=\linewidth=0.2]{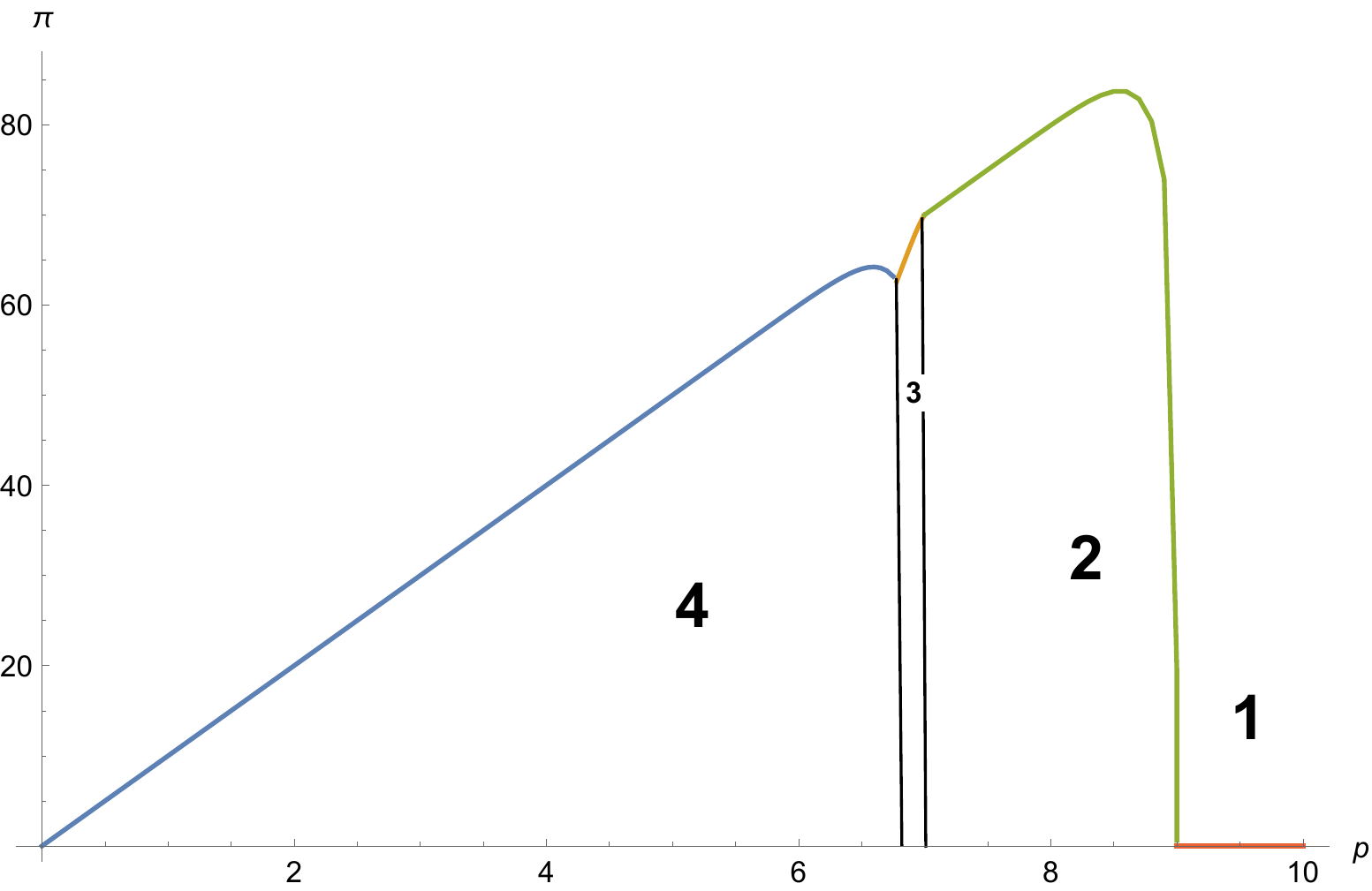}
\caption{\ \  $N=10$}\label{FNEW6}
\endminipage\hfill
\minipage{0.3\textwidth}
\includegraphics[width=\linewidth=0.2]{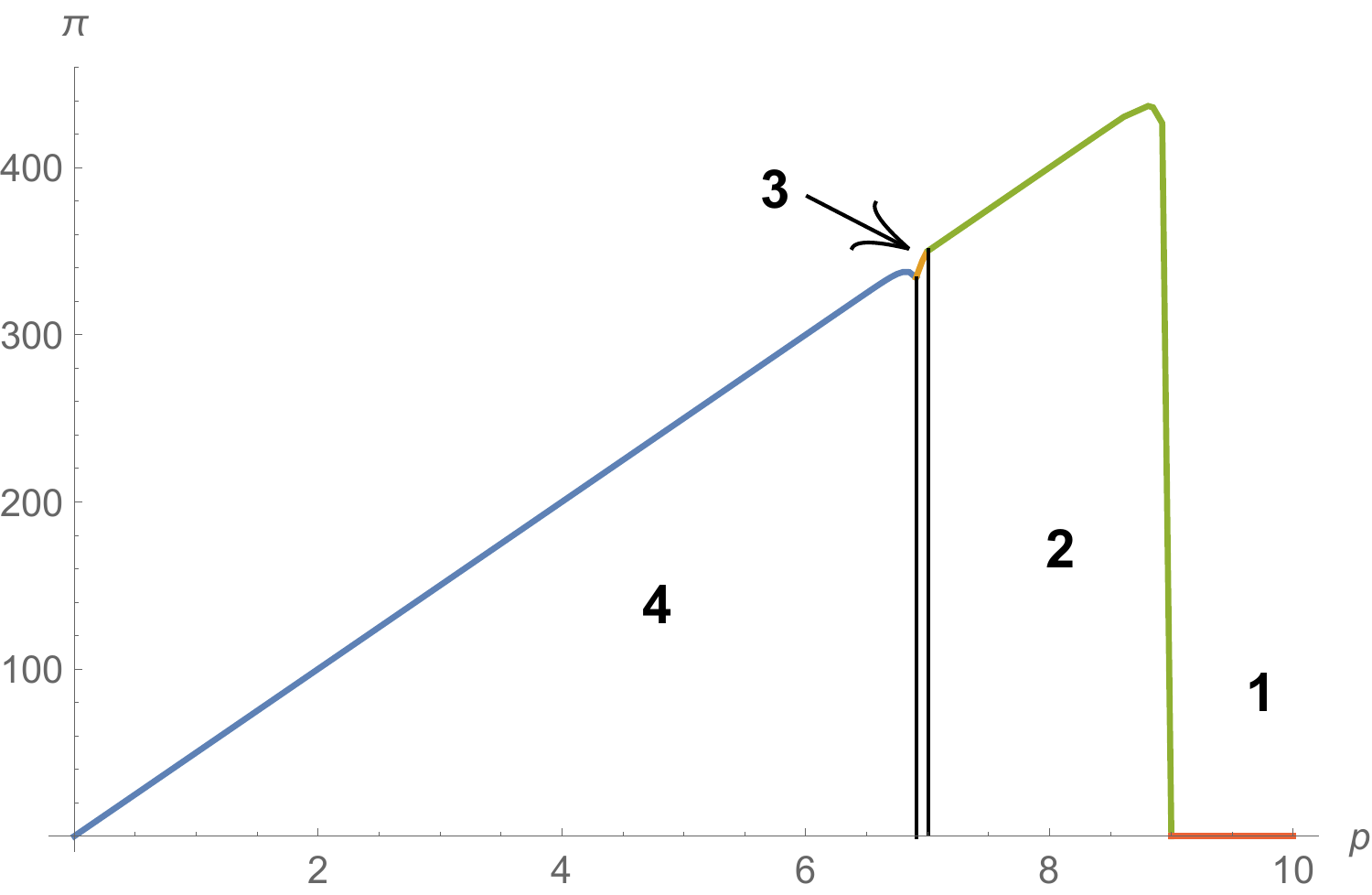}
\caption{\ \ $N=50$}\label{FNEW7}
\endminipage\hfill
\end{figure}

\subsection{The effect of supply on profit-maximizing price and customers' behavior}\label{NRES}

Figure~\ref{FNEW8} shows that at low prices (left side of Region 4) there are  more arrivals than at higher prices. Nevertheless, as we soon show, the maximum expected profit is obtained with the higher prices, where fewer arrivals are expected.

Surprisingly, we find that increased supply (an increase in $N$) enlarges the range of prices in which all consumers arrive early (Region 4), at the expense of the range in which consumers arrive in both periods (Region 3). This is clearly seen in Figures~\ref{FNEW4}-\ref{FNEW7}.  Thus, for a given price, increased supply induces consumers to arrive earlier. One may expect the opposite, that increased supply eases the fear of shortage, resulting in more consumers arriving later, in period 2. The reason is that a potential consumer understands that increased supply in period 1 induces more consumers to arrive in period 1, and thus no units will be left for period 2. This increases the incentive to arrive in period 1.
At the limit, (when $N$ goes to infinity) the range of mixed arrivals (Region 3) vanishes, and Region 4 becomes $[0, v-k-1]=[0, 7].$ Thus, at the limit, the equilibrium behavior of the consumers becomes similar to what occurs with the profit-maximizing price, in the sense that there are no mixed arrivals.

Another finding, (seen in  Figures~\ref{FNEW4}-\ref{FNEW7}) is that the maximum expected profit increases with $N$. Let $\pi^*_N$ be the expected profit when $N$ units are offered for sale in period 1.
The increase of $\pi^*_N$ as a function of $N$ is seen in Figure~\ref{FNEW9}, which shows $\pi^*_N$ as a function of $N$ for $v=10$ and $ k=2.$

\begin{figure}[h!]
\captionsetup{width=0.8\textwidth}
\begin{center}
\includegraphics[scale=0.3]{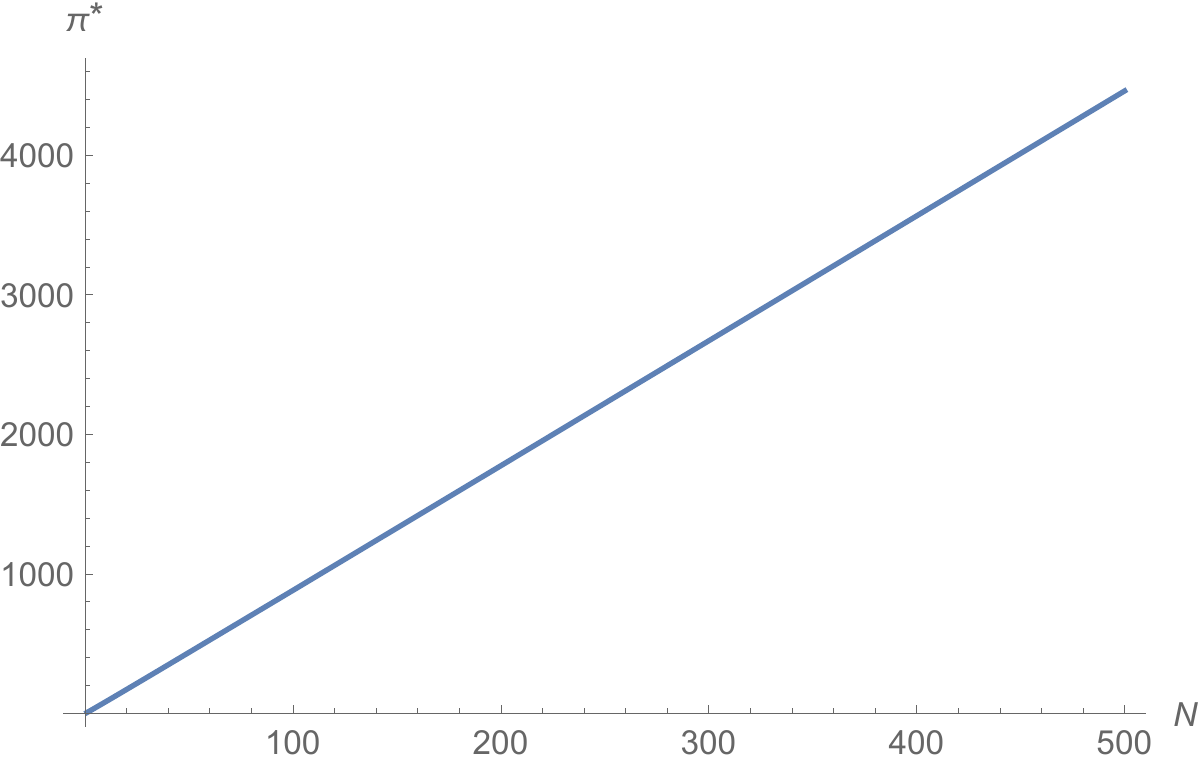}
\caption{$\pi^*_N$ as a function of $N.$ ($v=10, k=2$).
}\label{FNEW9}
\end{center}
\end{figure}

The increase in the profit-maximizing price with the number of units offered for sale is unexpected: an increase in supply usually reduces the price.  However, in our case, an increase in $N$ also increases demand (see Figure~\ref{FNEW8}).

In fact the increase, seen in Figure~\ref{FNEW9} seems to be quite linear. As we show next, for large $N$ it is indeed approximately linear and satisfies  $\pi^*_N\sim (v-1)N.$
The latter result is already demonstrated in Figure~\ref{FNEW7}, which presented $\pi$ for $N=50$. The maximal $\pi^*$ is obtained close to $p=v-1=9,$ with a value that is close to $(v-1)N=9\cdot 50=450.$

Let $\l_i (N)$ be the expected number of arrivals in period $i$ ($i=1,2$), when  $N$ units are for sale in period 1.

Proposition~\ref{Lambda2} in the Appendix proves that  when $p=v-1$
 $$\lim_{N\to\infty}\frac{\l_2(N)}{N}=1.$$ Thus, for large $N$ the expected number of arrivals when the price approaches  $v-1$ tends to the number  of units for sale.

In summary, the firm profits from increased supply, allowing it to charge a price approaching $v-1,$ which is the highest price at which it can expect to sell any units. Formally,

\begin{corollary}\label{Linear}
For all $v>1, k$,  $$\lim_{N\to\infty}\frac{\pi_N^*}{(v-1)N}=1.$$
\end{corollary}

Figure~\ref{FLIMIT} illustrates the result of Theorem~\ref{Linear} for $v=10, k=2.$

\begin{figure}[H]
\captionsetup{width=0.9\textwidth}
\begin{center}
\includegraphics[scale=0.35]{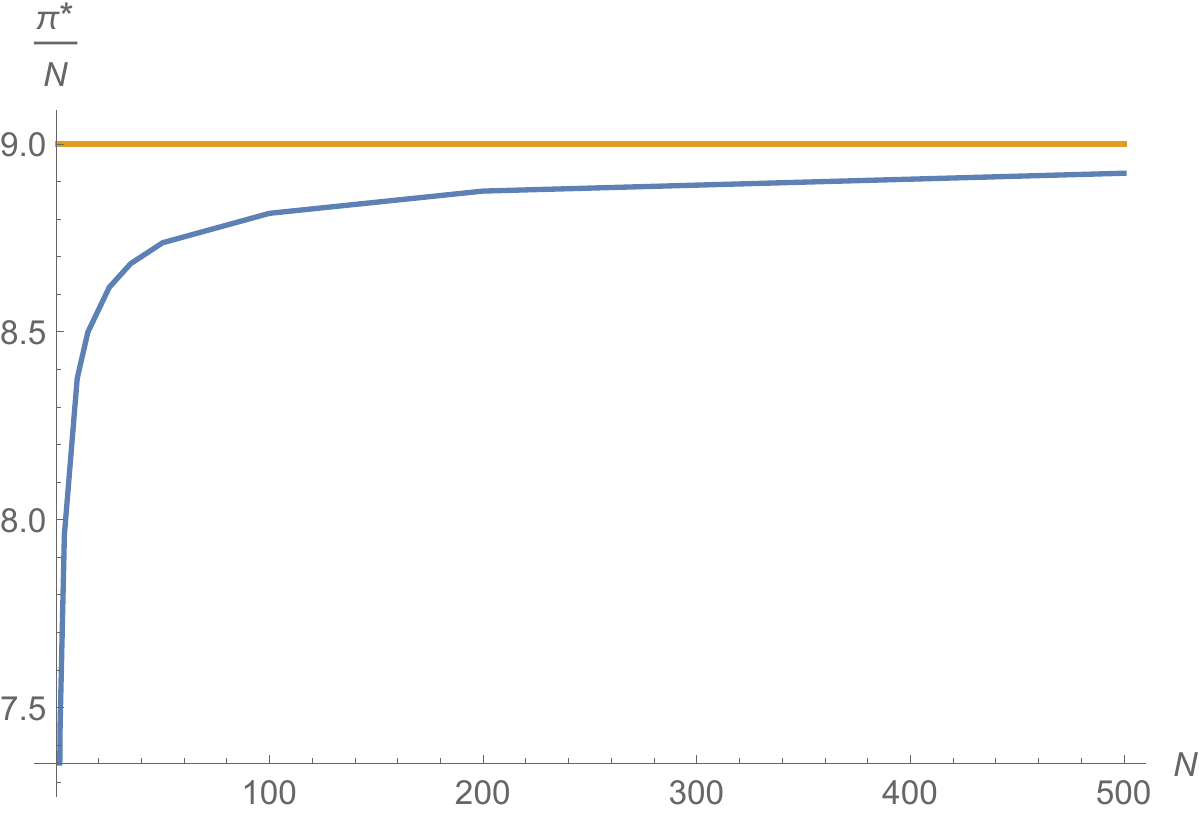}
\caption{ $\pi^* /N$ for $v=10,$ and $k=2.$ The horizontal line is $v-1=9$ for comparison.}\label{FLIMIT}
\end{center}
\end{figure}

\section{Conclusion}

We had considered a single seller and a single type of good. In general, each seller could have multiple goods. Even with multiple goods but one seller, the number of potential buyers can exceed the number of goods. Each consumer may then fear a shortage, and therefore may decide to arrive early. In turn, the seller should decide whether to offer the good in all periods or only in some, and what price to charge.

Consider next multiple sellers, supposing that a consumer can go to only one seller at a time. Similar questions appearing in the model with one seller arise here. A consumer may fear that if he does not get the good in period 1 from the seller he visits in period 1, then in period 2 he may find a shortage at whichever store he visits. But as we saw, examining even the case with one seller having one item is not trivial. We therefore considered only a single seller.

Under high demand and a single unit for sale if the penalty for arriving early is sufficiently large compared to the value $v$ of the good, the firm's normalized expected profit is $v-1-\ln{v}$. This is the best outcome for the firm. The firm never profits from offering the good early; for many parameter values offering early strictly reduces expected profits. This result also means that, even if there is no cost of making the product available early, the firm will not want to offer the good early. A large expected number of arrivals intensifies the fear of shortage, inducing most potential consumers to not arrive at all. The firm may erroneously think that offering the good in an early period may overcome this fear and  induce consumers to arrive early and have better chances to gain the good. Not so. Early opening further intensifies  the fear of shortage, since now a consumer considering
 coming in the period in which he most desires the good competes both with consumers who arrive at the same time and with consumers who come earlier.

We also find that if the firm must also offer the good early, but must charge the same price in both periods, then the profit-maximizing price induces consumers to arrive only in period  1, or under different conditions, induces them to arrive only in period 2. The firm would not set a price which induces consumers to arrive in both periods. In particular, if the firm  controls the penalty for arriving early, then it should set a high penalty so that no one arrives early.
The strategy of consumers when deciding if and when to arrive is more complicated than one may suppose, and can generate some unexpected behavior. For example, for parameter values which induce consumers to arrive in both periods the arrival rate in period 2  declines with the surplus a person gets from buying the good; indeed even the arrival rate summed over both periods  declines with that surplus.

Analyzing the effect of an early arrival penalty we find that a small increase in the penalty reduces the expected profit; but increasing it to the level that eliminates early arrivals brings the expected profit to its upper bound $v-1-\ln{v}$. An increase in the penalty beyond this point does not further affect the expected profit.

An increase in the arrival cost $c$ results in fewer expected arrivals and reduces profit. It also leads to a  shift from early arrivals to later arrivals and finally to no arrivals at all. Mixed arrivals are eliminated when there is no arrival cost or if the arrival cost  is high.

Generalizing for multiple units for sale we find that all the above results hold.
Surprisingly, an increase in the number of units offered for sale induces more consumers to arrive earlier rather than later.  One may expect the opposite, that increased supply will ease the fear of shortage, resulting in more consumers arriving in period 2. The behavior arises because a potential consumer understands that an increase in supply in period 1 induces more consumers to arrive in period 1; fewer units will be left for period 2 and thus a consumer better arrive in period 1 as well.
Additionally, the profit-maximizing price increases with the number of units offered for sale. This too is unexpected, as an increase in supply often results in price reduction.  However, in our case, an increase in supply also increases demand, and the seller may profit by increasing the price. We prove that with a large supply the profit-maximizing price approaches $v-1$.
Therefore ignoring the cost to the firm of producing more units, the firm's profits increase with supply.

\pagebreak

\section{Appendix}\label{App}

\begin{lemma}\label{L1}
 There is at most one equilibrium of each type.
\end{lemma}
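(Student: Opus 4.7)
The plan is a case-by-case verification of the seven equilibrium types listed just before the lemma.

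For types 1, 6, and 7 the profile $(q_0, q_1)$ is fixed at a single point---$(0,0)$, $(1,0)$, or $(0,1)$ respectively---so uniqueness is immediate. For types 2 and 4 one of $q_0, q_1$ vanishes, and the remaining indifference condition $U_t = 0$ reduces via (\ref{E9000})--(\ref{E9001}) to an equation of the form $g(\l_t) = c/(V-P)$ or $g(\l_t) = c/(V-K-P)$ with $g(x) := (1-e^{-x})/x$. The excerpt already notes $g$ is strictly decreasing, so each such equation has at most one root. Type 3 is analogous: $U_0 = 0$ uniquely fixes $\l_0$ by monotonicity of $g$, and then $U_1 = 0$ becomes $g(\l_1) = c\,e^{\l_0}/(V-P)$, which again has a unique root.

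The main obstacle is type 5, where $q_0 + q_1 = 1$ leaves a single unknown $q := q_0 \in (0,1)$ constrained by $U_0(q) = U_1(q)$. Clearing the common $-c$ and rearranging, this condition becomes
\[
\frac{V-K-P}{V-P} \;=\; R(q) \;:=\; \frac{q(e^{-\l q}-e^{-\l})}{(1-q)(1-e^{-\l q})}.
\]
The plan is to prove $R$ is strictly decreasing on $(0,1)$, from which type-5 uniqueness follows immediately. Substituting $u = \l q$, $v = \l(1-q)$ and simplifying (using $e^{-\l q} - e^{-\l} = e^{-u}(1-e^{-v})$), the inequality $R'(q) < 0$ is equivalent to the multiplicative inequality on $g$,
\[
g(u)\,g(v) \;<\; g(u+v), \qquad u,v > 0.
\]

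The final step is to deduce this inequality from strict subadditivity of $h := -\ln g$ on $[0,\infty)$, which together with $h(0)=0$ follows from strict concavity of $h$ by a standard convex-combination argument. Computing $h''$ directly reduces strict concavity to the elementary inequality $x < 2\sinh(x/2)$ for $x > 0$, which is immediate from the Taylor series of $\sinh$. The technical heart of the lemma is precisely the reformulation of $R' < 0$ as the multiplicative inequality for $g$; once that inequality is available, the rest is bookkeeping.
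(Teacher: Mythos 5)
Your case split and the reduction of types $1$--$4$, $6$, $7$ exactly mirror the paper: types $1,6,7$ are fixed points, and in types $2,3,4$ the indifference conditions reduce to $g(\lambda_t)=\text{const}$ with $g(x)=(1-e^{-x})/x$ strictly decreasing, giving uniqueness. For type $5$ the paper simply asserts that the relevant ratio is ``a monotonic function of $\l_0$'' with no proof, so your attempt to actually establish monotonicity is a genuine (and welcome) addition; and your rearrangement of $U_0=U_1$ to
$\frac{V-K-P}{V-P}=R(q)$ is correct.

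However, there is a concrete error in the reduction step for type $5$. Writing $u=\l q$, $v=\l(1-q)$ and $h:=-\ln g$, one finds from $R(q)=e^{-u}g(v)/g(u)$ that
\begin{equation*}
\frac{R'(q)}{\l R(q)}=h'(u)+h'(v)-1,
\end{equation*}
so $R'(q)<0$ is equivalent to $h'(u)+h'(v)<1$, \emph{not} to the subadditivity inequality $h(u+v)<h(u)+h(v)$ (i.e.\ $g(u)g(v)<g(u+v)$) that you announce. These two statements are different: subadditivity of a function with $h(0)=0$ is a condition on increments, while what is needed here is a bound on the \emph{sum of slopes}. (A piecewise-linear concave $h$ with $h(0)=0$ and slope $>1/2$ near the origin would be strictly subadditive yet violate $h'(u)+h'(v)<1$.) Your chain ``$\sinh$-inequality $\Rightarrow$ $h$ strictly concave $\Rightarrow$ $h$ subadditive $\Rightarrow$ $R'<0$'' therefore breaks at the last arrow.

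The good news is that the fix is small and uses exactly the concavity you already proved. Since $h''(x)<0$ for $x>0$ (your $x<2\sinh(x/2)$ computation is correct), $h'$ is strictly decreasing; and a short expansion gives $h'(0^+)=1/2$. Hence $h'(u)<1/2$ and $h'(v)<1/2$ for all $u,v>0$, so $h'(u)+h'(v)<1$ and $R'(q)<0$ throughout $(0,1)$. Replace the subadditivity detour with this one-line slope bound and the argument is complete and, in fact, supplies the monotonicity the paper only asserts.
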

\begin{proof} Note that an equilibrium is determined by the values of $(q_1, q_2)$ (or equivalently, by the values of $(\l_1, \l_2)$). Therefore, the claim is straightforward for equilibrium types 1, 6, and 7, where $q_1$ and $q_2$ are explicitly defined.
The type-3 equilibrium (where a consumer may choose never to arrive) requires that $U_{1}=0$, which uniquely defines $\l_1$ since the right-hand side of~\eqref{E9000} is monotonic in $\l_1.$
Substituting the resulting value in $U_2=0$ uniquely defines the candidate $\l_2$. For the type-2 equilibrium, which assumes $\l_1=q_1=0$, $\l_2$ follows as above. Similarly, the type-4 equilibrium already assumes that $q_2=0$, leading to $\l_{1}=0$, with $\l_1$ uniquely determined by $U_{1}=0$.
The type-5 equilibrium (where a consumer is indifferent between arriving in periods 1 and 2) requires that $q_1+q_2=1$, or equivalently that $\l_1+\l_2=\l$. Substituting this condition into the other requirement that $U_{1}=U_2$ gives ${V\over V-K}={e^{\l_1}-1\over\l_1}{1-\l_1\over1-e^{-(1-\l_1)}}$. The right-hand side is a monotonic function of $\l_1$. Hence, $\l_1$  is uniquely defined (and corresponds to an equilibrium).
\end{proof}

\subsection{Characterizing the equilibria}\label{SCHI}

Define $x \equiv \l_1$ as the expected number of people arriving in period 1; and define $y \equiv \l_2$ as the expected number of people arriving  in period 2.

\noindent

\begin{enumerate}

\item
The type-1 equilibrium has $\l_1=\lambda_2=0$. The utilities $u_{1}$ and $u_{2}$ will both be non-positive only if $v\le 1$.

\item
The type-2 equilibrium has $\l_1=0,$  and $0 \le \l_2 \le \l.$  The condition $u_{2}=0$ is equivalent to $v={y\over1-e^{-y}}$, implying that $v \ge 1$. The inequality $u_{1}\le 0$ means that $v-k \le 1$.

\item
The type-3 equilibrium has $0 \le \l_1$ and  $\l_2 \le \l$. The condition $u_{1}=0$ implies  that $v=k+{x\over 1-e^{-x}}$, and $u_{2}=0$ implies that $v=e^x{y\over1-e^{-y}}$.

\item
The type-4 equilibrium has $\l_2=0$ and $0 \le \l_1 \le \l$. The condition $u_{1}=0$ implies that $v=k+{x\over 1-e^{-x}}$. And $u_{2} \le 0$ (with $q_2=0$) amounts to $v \le e^x$.

\item
The type-5 equilibrium has $\l_1+\l_2 = \l$. Then $u_{1}=u_{2}$ reduces to
${k\over v}=1-{x\over 1-x}{1-e^{-(1-x)}\over e^x-1}$.
The non-negativity of $u_{2}$ reduces to $v\ge e^{x}{1-x\over1-e^{-(1-x)}}$.

\item
The type-6 equilibrium has $\l_1=\l$ and $\l_2=0$.  The conditions
$u_{1}\ge u_{2}$ and $u_{1}\ge0$ reduce to $v \ge {e-1\over e-2}k$, and $v \ge k+{e\over e-1}$.

\item The type-7 equilibrium has $\l_1=0$ and $\l_2=\l$.  The conditions
$u_{2}\ge u_{1}$ and $u_{2}\ge0$ reduce to $k\ge{v\over e}$ and $v\ge{e\over e-1}$.

\end{enumerate}

\subsection{High demand }\label{AFIN}
\subsubsection{Equilibria types}\label{AP66}

\begin{proposition} \label{P66}
For any pair $(v,k)$ there exists a unique equilibrium $(\l_1,\l_2).$ This equilibrium is one of the types 1-4, with small  $\l_1, \l_2,$ such that the utilities are still non-negative.
\end{proposition}

\begin{proof}
Note first that equilibrium types 5-7 are not possible when $\l$ is large. For example, in equilibrium type 5 $\l=\l_1+\l_2$. Hence either $\l_1$ or $\l_2$ (or both) are also unbounded. By L'H\^{o}pital's rule, $\lim\limits_{\l\to\infty}{1-e^{-\l}\over\l}=0$. Hence according to \eqref{E9000N} and \eqref{E9001N}
either $u_{1}$ or $u_2,$ (or both) equals $-c,$ contradicting the requirement that $u_{1}, u_2\geq 0$. Similarly, one can verify that when $\l$ is unbounded equilibrium types 6 and 7 are not possible. This is intuitive, as explained in the main text: if many consumers are expected to arrive in one of the periods, then the expected gain of an arriving consumer becomes smaller than the arrival cost $c,$ and thus the utility becomes negative. So the only equilibrium types that are possible when $\l$ is unbounded are types 1-4 and only with bounded (i.e., sufficiently small) values of $\l_1, \l_2$. Namely, `most' potential consumers do not arrive at any time.

The conditions for types 1 and 2 appear in the list (see the beginning of Section~\ref{SCHI}). Namely, for each pair $(v,k)$ in the region where $v\leq 1$ (which will be called Region 1), there exists a type 1 equilibrium. If $1\leq v\leq k+1$ (which will be called Region 2), there exists a type 2 equilibrium. Now, for type 3 we have $v=k+{x\over 1-e^{-x}}.$ Define the pairs $(v,k)$ satisfying the conditions of type 3 as Region 3.  Since ${x\over 1-e^{-x}}$ increases with $x,$ and equals 1 when $x$ goes to zero,   $k+1\leq v$. Hence Region 3 does not overlap with Regions 1 and 2.

By~\eqref{E9001N} in Region 3
\begin{equation}\label{E77}
v=e^x{y\over1-e^{-y}}.
\end{equation}

Since ${y\over1-e^{-y}}$ increases with $y$ and approaches 1 when $y$ approaches 0,  it follows from (\ref{E77}) that in this region $e^x\leq v.$ Since $\frac{\ln{v}}{1-\frac{1}{v}}$ increases with $v$ it follows that
$$ {\ln{e^x}\over 1-\frac{1}{e^{x}}}\leq \frac{\ln{v}}{1-\frac{1}{v}}.$$
Namely,
$$ {x\over 1-e^{-x}}\leq \frac{\ln{v}}{1-\frac{1}{v}}.$$
And so
\begin{equation}\label{E79}
v-{x\over 1-e^{-x}}\geq v-\frac{\ln{v}}{1-\frac{1}{v}}.
\end{equation}
By~\eqref{E9000N} in Region 3
\begin{equation}\label{E78}
v-{x\over 1-e^{-x}}=k.
\end{equation}
Substituting~\eqref{E78} in the left-hand side of~\eqref{E79} gives
\begin{equation}\label{E800}
k\geq  v-\frac{\ln{v}}{1-\frac{1}{v}}.
\end{equation}
 Note that the right-hand side of \eqref{E800}  strictly increases from $0$ to infinity; thus for each $k$ there exists a $v=v(k)$ such that
$$k=v(k)-\frac{\ln{v(k)}}{1-\frac{1}{v(k)}}.$$
Substituting this in~\eqref{E800} gives
$$v(k)-\frac{\ln{v(k)}}{1-\frac{1}{v(k)}}\geq v-\frac{\ln{v}}{1-\frac{1}{v}}.$$
Since $v-\frac{\ln{v}}{1-\frac{1}{v}}$  increases with  $v>1,$
then $v(k)\geq v.$
Similarly, for Region 4, $v\geq v(k).$ The proof of that is very similar to the one given for Region 3, only now $v\leq e^x$ (instead of $v\geq e^x$). That yields the opposite result. Thus Regions 3 and 4 are separated by the line $(v(k),k).$

Because the regions do not overlap, each point $(v, k)$ has only one type of equilibrium. Combining this with Lemma~\ref{L1}, that claimed that any $(v, k)$ has at most one equilibrium of each type, we get the uniqueness of the equilibrium.
\end{proof}

\subsubsection{Equilibrium arrival rates as functions of the net benefit $v-p$ of the purchase}\label{ASAR}

\begin{proposition}\label{P1}
In equilibrium, the arrival rates $\l_1$ in period 1 and $\lambda_2$ in period 2 satisfy:
\begin{itemize}

\item In Region 1, $\l_1=\l_2=0$.

\item In Region 2, $\l_1=0$, and $\l_2$ increases with $v-p$.

\item In Region 3, $\l_1,\l_2>0$, \ $\l_1$ increases with $v-p$,  and $\l_2$ declines with $v-p$.

\item In Region 4,  $\l_1$ increases with $v-p,$ and $\l_2=0.$
\end{itemize}
\end{proposition}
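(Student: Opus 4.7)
The plan is to reduce each region's claim to a monotonicity statement about the function $f(\lambda) := \lambda/(1-e^{-\lambda})$, which is strictly increasing on $(0,\infty)$ (the reciprocal of the decreasing function noted just before Section~4.1). The two defining equations (\ref{E1}) and (\ref{E2}) rewrite compactly as $v-p-k = f(\lambda_0)$ and $(v-p)\,e^{-\lambda_0} = f(\lambda_1)$, and these forms drive the whole argument.

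Regions 1, 2, and 4 should fall out immediately. In Region 1 both rates are zero by the region's very definition. In Region 2 the definition gives $\lambda_0 = 0$, so (\ref{E2}) collapses to $v-p = f(\lambda_1)$, and monotonicity of $f$ yields $\lambda_1$ strictly increasing in $v-p$. In Region 4 the definition gives $\lambda_1 = 0$, and (\ref{E1}) reads $v-p-k = f(\lambda_0)$, giving $\lambda_0$ strictly increasing in $v-p$ (with $k$ fixed).

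The substantive step is Region 3, where $\lambda_0$ and $\lambda_1$ are coupled through both equations. First, (\ref{E1}) determines $\lambda_0$ uniquely as a strictly increasing function of $v-p-k$, hence of $v-p$, giving the claim for $\lambda_0$. For $\lambda_1$ I would use (\ref{E1}) to eliminate $v-p$ from (\ref{E2}), obtaining
\[
f(\lambda_1) \;=\; k\,e^{-\lambda_0} \;+\; \frac{\lambda_0}{e^{\lambda_0}-1}.
\]
Both summands on the right are strictly decreasing in $\lambda_0$: the first by inspection, and the second by a one-line sign check showing $(e^{\lambda_0}-1) - \lambda_0 e^{\lambda_0} < 0$ for $\lambda_0>0$. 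So $f(\lambda_1)$ decreases in $\lambda_0$, and monotonicity of $f$ then forces $\lambda_1$ itself to decrease in $\lambda_0$, and hence in $v-p$.

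The main obstacle, conceptually, is exactly this coupling in Region 3: a naive implicit-differentiation attack on the simultaneous system for $(\lambda_0,\lambda_1)$ in terms of $v-p$ is unenlightening. The trick that keeps the argument short and transparent is to substitute (\ref{E1}) into (\ref{E2}) to kill the explicit $v-p$, thereby converting a two-variable comparative-statics question into a one-variable monotonicity check in $\lambda_0$.
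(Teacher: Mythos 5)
Your proposal is correct and follows essentially the same route as the paper: the paper also handles Regions 1, 2, 4 directly from the defining equations and monotonicity of $\lambda\mapsto\lambda/(1-e^{-\lambda})$, and for Region 3 it derives exactly your relation (its equation~(\ref{E3}), $ke^{-x}+\frac{x}{e^x-1}=\frac{y}{1-e^{-y}}$) by eliminating $v-p$ and then argues the left side decreases in $x$ while the right side increases in $y$. Your sign check $(e^{\lambda_0}-1)-\lambda_0 e^{\lambda_0}<0$ is the same inequality the paper verifies in the form $e^x(x-1)+1>0$.
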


\begin{proof}
Recall that $x \equiv \l_1,\  y \equiv \lambda_2$.

\begin{enumerate}
\item The proof of the first statement of the proposition follows immediately from the definition of Region 1.
\item In Region 2, by definition, $\l_1=0,$ and $u_2=0$, so that (\ref{E9001N}) gives
$$v-p=\frac{y}{1-e^{-y}}.$$
Hence $$\frac{d}{dy}(v-p)= \frac{1-e^{-y}-ye^{-y}}{(1-e^{-y})^2}.$$
Because $y+1\leq e^y$ for $y\geq 0$ the numerator on the right-hand side of the above satisfies $1-e^{-y}(1+y)\geq 0$. Thus $(v-p)' \geq 0$ and $v-p$ increases with $y$. Hence $y$ increases with $v-p$, proving the second claim of the proposition.

\item In Region 3, by definition,  $\l_1$ and $\l_2$ are both positive, and $u_1=u_2=0$.
Similarly to the proof for $y$ in Region 2, it follows from~(\ref{E9000N}) that in Region 3, given $k,$ \  $x$ increases with $v-p$. From~(\ref{E9000N}) and~(\ref{E9001N}) together we have
$$k+\frac{x}{1-e^{-x}}=\frac{ye^x}{1-e^{-y}}.$$
So
\begin{equation}\label{E3}
ke^{-x}+\frac{x}{e^{x}-1}=\frac{y}{1-e^{-y}}.
\end{equation}
The left-hand side of (\ref{E3}) declines with $x \equiv \l_1$ since
$$
\Big( ke^{-x}+\frac{x}{e^{x}-1}\Big)'=-ke^{-x}+\frac{e^x-1-xe^x}{(e^x-1)^2}=-ke^{-x}-\frac{e^x(x-1)+1}{(e^x-1)^2}.
$$
The expression $e^x(x-1)+1$ appearing in the numerator on the right-hand side increases with $x$ and equals $0$ when $x=0$. Hence it is positive and so $-ke^{-x}-\frac{e^x(x-1)+1}{(e^x-1)^2}<0$, implying that the left-hand side of~(\ref{E3}) indeed decreases with $x$. In contrast, the right-hand side of~(\ref{E3}) increases with $y \equiv \l_2$. Thus $y$ decreases with $x $. Because $x \equiv \l_1$ increases with $v-p,$ it follows that $y \equiv \l_2$ declines with $v-p$, proving the third  claim of the proposition.

\item In Region 4, by definition, $y \equiv \l_2 =0$ and $u_1=0$. 

So, as in Region 3, $x \equiv \l_1$  increases with $v-p$, proving the last statement of the proposition.
\end{enumerate}
\end{proof}

\subsubsection{Computing the equilibrium arrival rates}\label{ASARV}
Consider next the equilibrium $(\l_1, \l_2)$ in each of the four regions. It turns out that these values involve the {\it Lambert function} $W[a],$ (also called the omega function or product logarithm). The Lambert function is  the inverse of the function $ze^z$. Namely, if $a=ze^z,$ then $W[a]=z.$
Because $ze^z$ is not injective, we add the restriction that $W[a]\geq -1,$ (without this restriction, $W$ would be multivalued). Note that $W[a]$ is defined only for $a \geq -e^{-1}.$ The Lambert function is presented in Figure~\ref{F10}.

\begin{figure}[h]
\begin{center}
\includegraphics[scale=0.6]{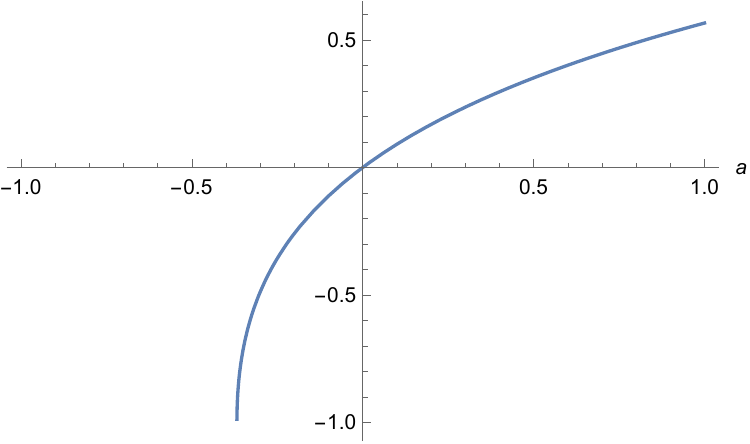}
\caption{The graph of $W[a]$  for $-\frac{1}{e}\leq a\leq 1$.}\label{F10}
\end{center}
\end{figure}

\begin{definition}\label{DW}
For all  $a \geq -e^{-1}$: $W[\cdot] \ \textit{is the inverse function of}  \ ae^{a},$
satisfying $W[a]\geq -1$.
\end{definition}

The following properties of $W$ will be used below. For proofs see Corless et al. (1996).
\begin{itemize}
\item W1. $W[\cdot]$ is an increasing function for all  $a\geq -e^{-1}$.
\item W2. $W[-e^{-1}]=-1$.
\item W3. $W'[a]=\frac{W[a]}{a(W[a]+1)}$.
\item W4. $W[0]=0.$
\item W5. $e^{-W[a]}=\frac{W[a]}{a}$.
\item W6. $W[a e^a]=a$ for all $a\geq -1$.
\end{itemize}

\begin{definition}\label{D2}
For all $a,$
$$R(a) \equiv W[-a e^{-a}].$$
\end{definition}

\begin{definition}\label{DA}
Let $v_1=v-p-k,$ and \  $v_2=v-p$. Then $v_1$ and $v_2$  are a consumer's benefit when buying  at price $p$ in periods 1 and 2 respectively (not taking into account the normalized arrival cost $c$).
Let
$$A=-(v_1+k)e^{-v_1+\frac{k}{v_1}R(v_1)}.$$
\end{definition}

\subsection{Theorem~\ref{T10}}\label{PP3}

\begin{theorem}\label{T10} The unique equilibrium $(\l_1, \l_2)$ is
\
\begin{itemize}
\item $(0 , 0)$ if $(v-p,k)$ is in Region 1.
\item $\Big(0 \ , v_1+k+R(v_1+k)\Big)$ if $(v-p,k)$ is in Region 2.
\item $\left(v_1+R(v_1) \ , W[A]-\left( 1+\frac{k}{v_1} \right)R(v_1)\right)$ if $(v-p,k)$ is in Region 3.
\item $\Big(v_1+R(v_1) \ , 0\Big)$ if $(v-p,k)$ is in Region 4.
\end{itemize}
\end{theorem}

Note that, by Property W6, for all $a \leq 1$ (which is equivalent to $-a\geq -1$)  $R(a)=-a$. But according to Table 1, in Region 2 $v_1+k=v-p>1,$ and in Regions 3 and 4 $v_1=v-p-k>1,$ so that we cannot exploit this attractive property.

\begin{proof}
Since subtracting $p$ from $v$ is just moving to the left of $v,$ we will ignore $p$. The statement in Theorem~\ref{T10} becomes

\begin{itemize}
\item $(0 , 0)$ if $(v,k)$ is in Region 1.
\item $\Big(0 \ , v_1+k+R(v_1+k)\Big)$ if $(v,k)$ is in Region 2.
\item $\left(v_1+R(v_1) \ , W[A]-\left( 1+\frac{k}{v_1} \right)R(v_1)\right)$ if $(v,k)$ is in Region 3.
\item $\Big(v_1+R(v_1) \ , 0\Big)$ if $(v,k)$ is in Region 4.
\end{itemize}

\begin{itemize}
\item The first statement follows immediately from the definition of Region 1 (see Table 1).
\item We wish to prove that the equilibrium value of $\l_2$ in Region 2  equals $v_1+k+R(v_1+k).$
Recall that by \eqref{E9000N} in Region 2 
\begin{equation}\label{E40}
v_1+k=\frac{\l_2e^{\l_1}}{1-e^{-\l_2}}.
\end{equation}
Since in Region 2 $\l_1=0,$ \eqref{E40} becomes
\begin{equation}\label{E41}
v_1+k=\frac{\l_2}{1-e^{-\l_2}}.
\end{equation}

First, we show that $\l_2=v_1+k+R(v_1+k)$ indeed solves (\ref{E41}).

Substituting $\l_2=v_1+k+R(v_1+k)$ in the right-hand side of (\ref{E41}) gives
\begin{equation}\label{E4}
\frac{v_1+k+R(v_1+k)}{1-e^{-(v_1+k)}e^{-R(v_1+k)}}.
\end{equation}

By Property W5
\begin{equation}\label{E790}
e^{-R(v_1+k)}=\frac{R(v_1+k)}{-(v_1+k)e^{-(v_1+k)}}.
\end{equation}
Substituting this in (\ref{E4}) gives
$$\frac{v_1+k+R(v_1+k)}{1+\frac{R(v_1+k)}{v_1+k}}=v_1+k,$$
which is the left-hand side of (\ref{E41}).

Now, because the right-hand side of (\ref{E41}) is strictly monotonic in $\l_2$, then for any given pair $(v_1,k)$ the value $\l_2=v_1+k+R(v_1+k)$ \emph{uniquely} solves (\ref{E41}).

\item To find $\l_1$ in Regions 3 and 4 recall that $u_1=0$ there. Thus by~\eqref{E9000N} 
\begin{equation}\label{E51}
v_1=\frac{\l_1}{1-e^{-\l_1}}.
\end{equation}
Note that substituting $k=0$ in~\eqref{E41} and changing $\l_2$ to $\l_1$ yields  \eqref{E51}. Hence the proof given earlier for $\l_2$ in Region 2 applies for $\l_1$ in Regions 3 and 4, and gives $\l_1=v_1+R(v_1).$
\item Lastly, we have yet to prove that the equilibrium value of $\l_2$ in Region 3 is $W[A]-\left( 1+\frac{k}{v_1} \right)R(v_1).$

Recall that we consider $c$ as the unit value. Recall also that $v_1=v-p-k,$ and that $v_2=v-p.$ Note that given $v$ and $k$, the values $v_1$ and $v_2$ are functions of $p$. We find that most of the expressions appearing in our analysis are functions of $v-p-k.$ Thus we focus
 on $v_1.$ The four regions in terms of $v_1$ are
\begin{itemize}

\item Region 1 is reached  iff \  $-k\leq v_1< 1-k$.
\item Region 2  is reached iff \  $1-k\leq v_1\leq 1$.
\item Region 3 is reached iff \  $1<v_1< v(k)-k$.
\item Region 4  is reached iff \  $v(k)-k\leq v_1$,
\end{itemize}
where $v(k)$ was defined as the point separating Regions 3 and 4 (when the regions were expressed in terms of $v$).

What is left to prove is that the equilibrium value of $\l_2$ in Region 3 is $W[A]-\left( 1+\frac{k}{v_1} \right)R(v_1).$

Recall that
$A=-(v_1+k)e^{-v_1+\frac{k}{v_1}R(v_1)}.$

Using the two central equations~\eqref{E9000N} and~\eqref{E9001N} that hold in Region 3 and appear in the main text we have
\begin{equation}\label{E01}
v_1=\frac{\l_1}{1-e^{-\l_1}},
\end{equation}
and
\begin{equation}\label{E02}
v_2=\frac{\l_2e^{\l_1}}{1-e^{-\l_2}}.
\end{equation}

We already proved that in Regions 3 and 4 the equilibrium value of $\l_1$ is  $v_1+R(v_1),$ and that it uniquely solves (\ref{E01}). Substituting this in  (\ref{E02}) gives
$$v_2=\frac{\l_2e^{v_1+R(v_1)}}{1-e^{-\l_2}},$$
which is the same as
\begin{equation}\label{E5}
v_2e^{-v_1-R(v_1)}=\frac{\l_2}{1-e^{-\l_2}}.
\end{equation}
Since the right-hand side of (\ref{E5}) is monotonic in $\l_2$, for any given pair $v_1, v_2$ at most one value of $\l_2$  satisfies (\ref{E5}). We now show that the proposed solution $\l_2=W[A]-\frac{v_2R(v_1)}{v_1}$ satisfies~(\ref{E5}). Substituting this $\l_2$ in the right-hand side of   (\ref{E5}) gives
$$\frac{W[A]-\frac{v_2R(v_1)}{v_1}}{1-e^{\frac{v_2R(v_1)}{v_1}}e^{-W[A]}}.$$
Because of Property W5 this expression equals 
$$\frac{W[A]-\frac{v_2R(v_1)}{v_1}}{1-\frac{e^{\frac{v_2R(v_1)}{v_1}}W[A]}{A}},$$ which equals 
$$\frac{\Big(W[A]-\frac{v_2R(v_1)}{v_1}\Big)A}{A- e^{\frac{v_2R(v_1)}{v_1}}W[A]}.$$
Since $v_2=v_1+k$ the above  equals
$$\frac{\Big(\frac{v_2R(v_1)}{v_1}-W[A]\Big)v_2e^{-v_1+\frac{kR(v_1)}{v_1}}}{-v_2e^{-v_1+\frac{kR(v_1)}{v_1}}- e^{\frac{v_2R(v_1)}{v_1}}W[A]}$$
$$=\frac{\Big(\frac{v_2R(v_1)}{v_1}-W[A]\Big)e^{R(v_1)+\frac{kR(v_1)}{v_1}}}{-v_2e^{-v_1+\frac{kR(v_1)}{v_1}}- e^{\frac{v_2R(v_1)}{v_1}}W[A]}\cdot v_2e^{-v_1-R(v_1)}$$

Hence we need to prove that the quotient above equals 1, namely that
\begin{equation}\label{E70}
\Big(\frac{v_2R(v_1)}{v_1}-W[A]\Big)e^{R(v_1)+\frac{kR(v_1)}{v_1}}=-v_2e^{-v_1+\frac{kR(v_1)}{v_1}}- e^{\frac{v_2R(v_1)}{v_1}}W[A].
\end{equation}
Note that the expression $e^{R(v_1)+\frac{kR(v_1)}{v_1}}$ appearing on the left-hand side satisfies
$$
e^{R(v_1)+\frac{kR(v_1)}{v_1}}=e^{\big(\frac{v_1+k}{v_1}\big)R(v_1)}=e^{\frac{v_2R(v_1)}{v_1}}.
$$
Substituting this in the left-hand side of~(\ref{E70}) gives
$$\frac{v_2R(v_1)}{v_1}e^{\frac{v_2R(v_1)}{v_1}}-e^{\frac{v_2R(v_1)}{v_1}}W[A].$$
Hence we only need to prove that
$$\frac{v_2R(v_1)}{v_1}e^{\frac{v_2R(v_1)}{v_1}}=-v_2e^{-v_1+\frac{kR(v_1)}{v_1}}.$$
This is equivalent to proving that
$$
\frac{R(v_1)}{v_1}e^{\frac{v_2R(v_1)}{v_1}+v_1-\frac{kR(v_1)}{v_1}}=-1.
$$
Note that the left-hand side of the above equation equals
$$\frac{R(v_1)}{v_1}e^{\frac{(v_2-k)R(v_1)}{v_1}+v_1}=\frac{R(v_1)}{v_1}e^{R(v_1)+v_1}.$$
By~(\ref{E790}) the above equals
$$\frac{R(v_1)}{v_1}\frac{(-v_1e^{-v_1})e^{v_1}}{R(v_1)}=-1.$$

\end{itemize}
\end{proof}

\subsection{Proof of Theorem~\ref{T1}}\label{PT1}
To prove Theorem~\ref{T1}, we first need to establish several results, using the Lambert function $W[x],$ (see Definition~\ref{DW} in Section~\ref{SAR}).  We also presented  a list of W1-W6 properties of the Lambert function, that we use in our analysis below (see Section~\ref{SAR}).

Recall that we denote
$$R(a)=W[-a e^{-a}].$$

The following lemma proves essential properties of the function $R(v_1)$.

\begin{lemma}\label{T2}
\
\begin{itemize}
\item R1. \ $R(v_1)$ is negative and increasing for all $ v_1>1$.
\item R2. \ $R(1)=-1$.
\item R3. \ $R(v(k)-k)=\frac{k}{v(k)}-1$.
\item R4. \ $R'(v_1)=\frac{-R(v_1)}{R(v_1)+1}\cdot\frac{v_1-1}{v_1}.$
\end{itemize}
\end{lemma}

\begin{proof}
\
\begin{enumerate}
\item Since $-v_1<-1<0,$ then $-v_1e^{-v_1}<0.$ Hence by  Properties W1 (that $W$ is increasing) and W4 (that $W[0]=0$), we get $R(v_1)=W[-v_1e^{-v_1}]<0$. 
Now $(-v_1e^{-v_1})'=e^{-v_1}(v_1-1)>0$ (since $1<v_1$). The Lambert function is increasing and so $W[-v_1e^{-v_1}]$ increases with $v_1$, proving R1.
\item To prove R2, note that by Property W2 $R(1)=W[-e^{-1}]=-1$.
\item We wish to prove that $R(v(k)-k)=\frac{k}{v(k)}-1$. Recall that $v(k)$ satisfies $k=v(k)-\frac{\ln{v(k)}}{1-\frac{1}{v(k)}}$. Hence $(k-v(k))(1-\frac{1}{v(k)})=-\ln{v(k)}$, and so $k-v(k)=-\ln{v(k)}+(\frac{k}{v(k)}-1)$.
Adding $\ln{(k-v(k))}$ to both sides of the equation gives
$$(k-v(k)) +\ln{(k-v(k))}=\ln{\Big(\frac{k}{v(k)}-1\Big)}+\Big(\frac{k}{v(k)}-1\Big).$$
Thus
$$(k-v(k))e^{k-v(k)}=\Big( \frac{k}{v(k)}-1\Big)e^{\frac{k}{v(k)}-1}.$$
Applying the Lambert function to both sides of the equation gives
\begin{equation}\label{E6}
W[(k-v(k))e^{k-v(k)}]=W\Big[\Big( \frac{k}{v(k)}-1\Big)e^{\frac{k}{v(k)}-1}\Big].
\end{equation}
The left-hand side of (\ref{E6}) is $R(v(k)-k)$. Now by W6, since $\frac{k}{v(k)}-1\geq -1$ then
$$
W \Big[\Big( \frac{k}{v(k)}-1\Big)e^{\frac{k}{v(k)}-1}\Big]=\frac{k}{v(k)}-1,
$$
proving R3.
\item
\begin{equation}\label{E7}
R'(v_1)=W'[-v_1e^{-v_1}]e^{-v_1}(v_1-1).
\end{equation}
Hence by Property W3,
$$R'(v_1)=\frac{R(v_1)e^{-v_1}(v_1-1)}{-v_1e^{-v_1}(R(v_1)+1)}=\frac{-R(v_1)}{R(v_1)+1}\cdot\frac{v_1-1}{v_1},
$$
proving R4.

\end{enumerate}
\end{proof}

For the next result we need the following lemma.

\begin{lemma}\label{L2}
The expression $v_1+(v_1+k)R(v_1),$ is negative in Region 3.
\end{lemma}

\begin{proof}
Substituting $v_1=v(k)-k,$ which is the right end of Region 3, in $v_1+(v_1+k)R(v_1),$ and recalling that by R3 $R(v(k)-k)=\frac{k}{v(k)}-1$, gives $0$. Additionally, by R4, the derivative \ $1+R(v_1)+(v_1+k)R'(v_1)$ of this expression is $$1+R(v_1)+\frac{-R(v_1)(v_1+k)(v_1-1)}{v_1(R(v_1)+1)}.$$
By R1 and R2, in Region 3 $1+R(v_1)>0,$ and \ $-R(v_1)>0.$
 Hence  the derivative is positive and so the expression $v_1+(v_1+k)R(v_1)$ is negative in Region 3.
\end{proof}

Recall Definition~\ref{DA} that $A=A(v_1)=-(v_1+k)e^{-v_1+\frac{kR(v_1)}{v_1}},$

\begin{lemma}\label{T3}
\
\begin{enumerate}
\item $W[A(v(k)-k)]=-1.$
\item $A(v_1)$ decreases with $v_1$ for $1< v_1< v(k)-k.$
\end{enumerate}
\end{lemma}

\begin{proof}
\
\begin{enumerate}
\item Since $W[\cdot]$ is strictly monotonic and by Property W2 $W[-e^{-1}]=-1$, we must prove that $A(v(k)-k)=-e^{-1}$. Now`
\begin{equation}\label{E8}
A(v(k)-k)=-v(k)e^{k-v(k)+\frac{kR(v(k)-k)}{v(k)-k}}.
\end{equation}
From R3, \ $R(v(k)-k)=\frac{k}{v(k)}-1$.
Substituting this in the right-hand side of (\ref{E8}) gives
$$
-v(k)e^{k-v(k)+\frac{k(k/v(k) -1)}{v(k)-k}}=-v(k)e^{k-v(k)-\frac{k}{v(k)}}=-e^{k-v(k)-\frac{k}{v(k)}+\ln{v(k)}},
$$

Substituting $\ln{v(k)}=-(k-v(k))\Big( 1-\frac{1}{v(k)}\Big)$ gives
$$
-e^{k-v(k)-\frac{k}{v(k)}-(k-v(k))\Big( 1-\frac{1}{v(k)}\Big)}=-e^{-1}.
$$
\item To prove that $A(v_1)$ is decreasing note that
\begin{equation}\label{E789}
A'(v_1)=\frac{e^{-v_1+\frac{kR}{v_1}}\Big(v_1(v_1+k-1)+kR \Big)\big( v_1+(v_1+k)R\big)}{v_1^2(R+1)},
\end{equation}
where $R=R(v_1)$. By properties R1 and R2, the denominator is positive. Thus we need to prove that the numerator is negative. 
In Region 3, \ $v_1>1.$ Thus by properties R1 and R2, \ $v_1+R>1+R>0.$  Hence the expression in the first parentheses of~\eqref{E789}, $v_1(v_1+k-1)+kR$ which equals $k(v_1+R)+v_1(v_1-1),$ is positive.
The expression in the second parentheses is negative by Lemma~\ref{L2}. Thus, $A'(v_1)$ is negative
 implying that $A(v_1)$ is indeed decreasing there.
\end{enumerate}
\end{proof}

Recall that $R=R(v_1),$ and that $v(k)$ is defined by the equation $k=v(k)-\frac{\ln{v(k)}}{1-\frac{1}{v(k)}}.$
We examine $W[A]$ as a function of $k.$
Given $v_1$ denote by $k_0$ the value of $k$ that satisfies  $v_1=v(k_0)-k_0$. Denote $u_0=v(k_0).$
Also denote by $A_0$ the value of $A$ that corresponds to $k_0,$ namely
 $ A_0=-(v_1+k_0)e^{-v_1+\frac{k_0R}{v_1}}.$

\begin{lemma}\label{P50}
Given $v_1:$
\begin{enumerate}
\item In Region 3 $W[A]$ increases with $k$.
\item In Region 3 $\frac{d}{dk}W[A]$ decreases with $k$.
\item $\frac{d}{dk}W[A]\Big|_{k=k_0}=\frac{1}{u_0}.$
\end{enumerate}
\end{lemma}

\begin{proof}
\
\begin{enumerate}
\item To prove that $W[A]$ increases with $k$ in Region 3 note that 
$$
\frac{dA}{dk}=-\frac{e^{-v_1+\frac{kR}{v_1}}}{v_1}(v_1+(v_1+k)R).
$$ 
By Lemma~\ref{L2}, \ $v_1+(v_1+k)R<0$ in Region 3. Thus $\frac{dA}{dk}>0$ and $A$ increase with $k$ in Region 3. Since $W$ is an increasing function, $W[A]$ also increases with $k$ in Region 3.
\item Using Property W3 yields
\begin{equation}\label{E34}
\frac{d}{dk}W[A]=\frac{(v_1+(v_1+k)R)W[A]}{v_1(v_1+k)(W[A]+1)}.
\end{equation}
The above equals
$$\frac{v_1+(v_1+k)R}{v_1(v_1+k)}-\frac{(v_1+(v_1+k)R)}{v_1(v_1+k)(W[A]+1)}.$$
Differentiating the above expression with respect to $k$ gives
\begin{equation}\label{E35}
\frac{d^2}{dk^2}W[A]= \frac{-v_1^2}{(v_1(v_1+k))^2}+\frac{v_1(v_1+k)\frac{d}{dk}W[A](v_1+(v_1+k)R)}{(v_1(v_1+k)(W[A]+1))^2}.
\end{equation}
The first term in~(\ref{E35}) is negative. The second term is also negative since we proved in Part 1 that $\frac{d}{dk}W[A]$ is positive, and by Lemma~\ref{L2} that $v_1+(v_1+k)R$ is negative.
\item
By~(\ref{E34})
$$\frac{d}{dk}W[A]=\frac{(v_1+(v_1+k)R)W[A]}{v_1(v_1+k)(W[A]+1)}.$$
Note that for $k=k_0$  the numerator vanishes according to Lemma~\ref{L2}, and the denominator vanishes according to Part 1 of Lemma~\ref{T3}.

By L'H\^{o}pital's rule
$$\frac{d}{dk}W[A]\Big|_{k=k_0}=\frac{\frac{d}{dk}\Big((v_1+(v_1+k)R)W[A]\Big)\Big|_{k=k_0}}{\frac{d}{dk}\Big(v_1(v_1+k)(W[A]+1)\Big)\Big|_{k=k_0}}.$$
Hence
\begin{equation}\label{E30}
\frac{d}{dk}W[A]\Big|_{k=k_0}=\frac{RW[A_0]+(v_1+(v_1+k_0)R)\frac{d}{dk}W[A]\Big|_{k=k_0}}{v_1(W[A_0]+1)+v_1(v_1+k_0)\frac{d}{dk}W[A]\Big|_{k=k_0}}.
\end{equation}

Denote $s=\frac{d}{dk}W[A]\Big|_{k=k_0}$. Then (\ref{E30}) becomes
$$s=\frac{RW[A_0]+(v_1+(v_1+k_0)R)s}{v_1(W[A_0]+1)+v_1(v_1+k_0)s}.$$
Recalling that $v_1+(v_1+k_0)R$ appearing in the numerator equals $0,$ and that $W[A_0]=-1$ yields
\begin{equation}\label{E31}
s=\frac{-R}{v_1(v_1+k_0)s}.
\end{equation}
Note that $s=0$ does not solve~(\ref{E31}); hence the above expression is well defined.
From~(\ref{E31}) we get
\begin{equation}\label{E32}
s^2=\frac{-R}{v_1(v_1+k_0)}.
\end{equation}
By R3, $R=R(v_1)=R(u_0-k_0)=\frac{k_0}{u_0}-1.$ Substituting this and also $v_1=u_0-k_0$  in~(\ref{E32}) gives
\begin{equation}\label{E111}
s^2=\frac{1-\frac{k_0}{u_0}}{(u_0-k_0)u_0}=\frac{u_0-k_0}{(u_0-k_0)u_0^2}=\frac{1}{u_0^2}.
\end{equation}

Now, from the proof of the first part of Lemma~\ref{P50} (that in Region 3 $W[A]$ increases with $k$), in Region 3 $\frac{d}{dk}W[A]$ is  non-negative. In particular, $s\geq 0.$ Combining this with the fact that $s\ne 0$, which we established earlier, gives $s>0$. This, together with~(\ref{E111}), implies that $s=\frac{1}{u_0}.$
\end{enumerate}
\end{proof}

We can now prove Theorem~\ref{T1}.

\subsubsection{Proof of Theorem~\ref{T1}}\label{PPT1}
\

Recall that  $v=v(k)$ is the border between Regions 3 and 4, and is the unique solution for $k=v-\frac{\ln{v}}{1-\frac{1}{v}}$ (see the proof of Proposition~\ref{P66} in Section~\ref{AP66} in the Appendix).
\newline

In Region 3, by Theorem~\ref{T10}
$$\l_1+\l_2=v_1+R(v_1)+W[A]-\frac{v_2R(v_1)}{v_1}.$$
We use R4 and~(\ref{E789}) to arrive at
\begin{equation}\label{E9}
(\l_1+\l_2)'=\frac{v_1^2W[A](R+1)+(v_1+k)\Big( v_1^2+v_1(v_1+k)R+kR^2\Big)}{v_1^2(v_1+k)\Big(W[A]+1\Big)(R+1)},
\end{equation}
where the derivative is with respect to $v_1.$
According to R1 and R2, \ $R+1>0$ in Region 3 and according to Lemma~\ref{T3} in Region 3 $W[A]+1>0$. Hence, in Region 3 the denominator is positive . Thus it is sufficient to prove that the numerator is negative.

At the end points of Region 3, namely $v_1=1$ and $v_1=v(k)-k,$ the numerator vanishes. We wish to prove that for all $k\geq 0$ and $1<v_1<v(k)-k$ (i.e., for all $v_1$ in Region 3) the numerator is negative. This is  illustrated in Figure~\ref{F201} which presents the numerator as a function of $v,$ for $k=2.$ Note that $v(2)=3.81449$; hence Region 3 is \ $1<v_1<1.81449.$

\begin{figure}[h]
\includegraphics[scale=0.5]{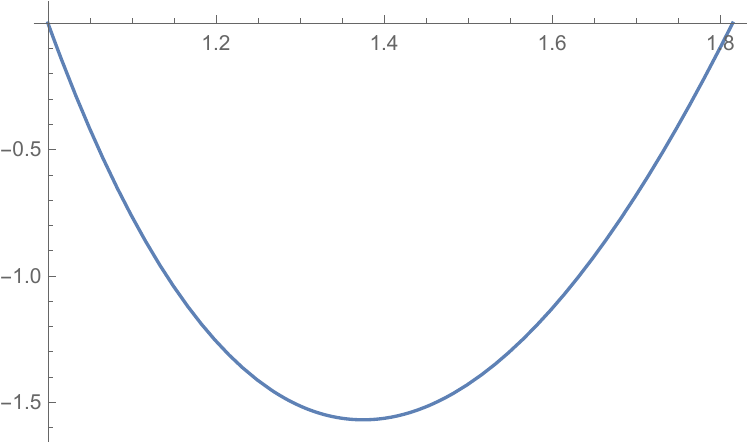}
\centering
\caption{The numerator of $(\l_1+\l_2)'$ as a function of $v_1,$ in Region 3, for $k=2.$}\label{F201}
\end{figure}

Recall that we denoted $v_1=u_0-k_0.$
For any given $v_1,$ the numerator evaluated at $k_0=u_0-v_1$ (corresponding to the right end point $v_1=u_0-k_0$ of Region 3 when $k=k_0$) equals $0.$ We wish to prove that, given $v_1$, the numerator for all $k>k_0$ decreases with $k$ and is therefore negative. This will be proved shortly and is demonstrated in Figure~\ref{F202}.

\begin{figure}[h]
\begin{center}
\includegraphics[scale=0.3]{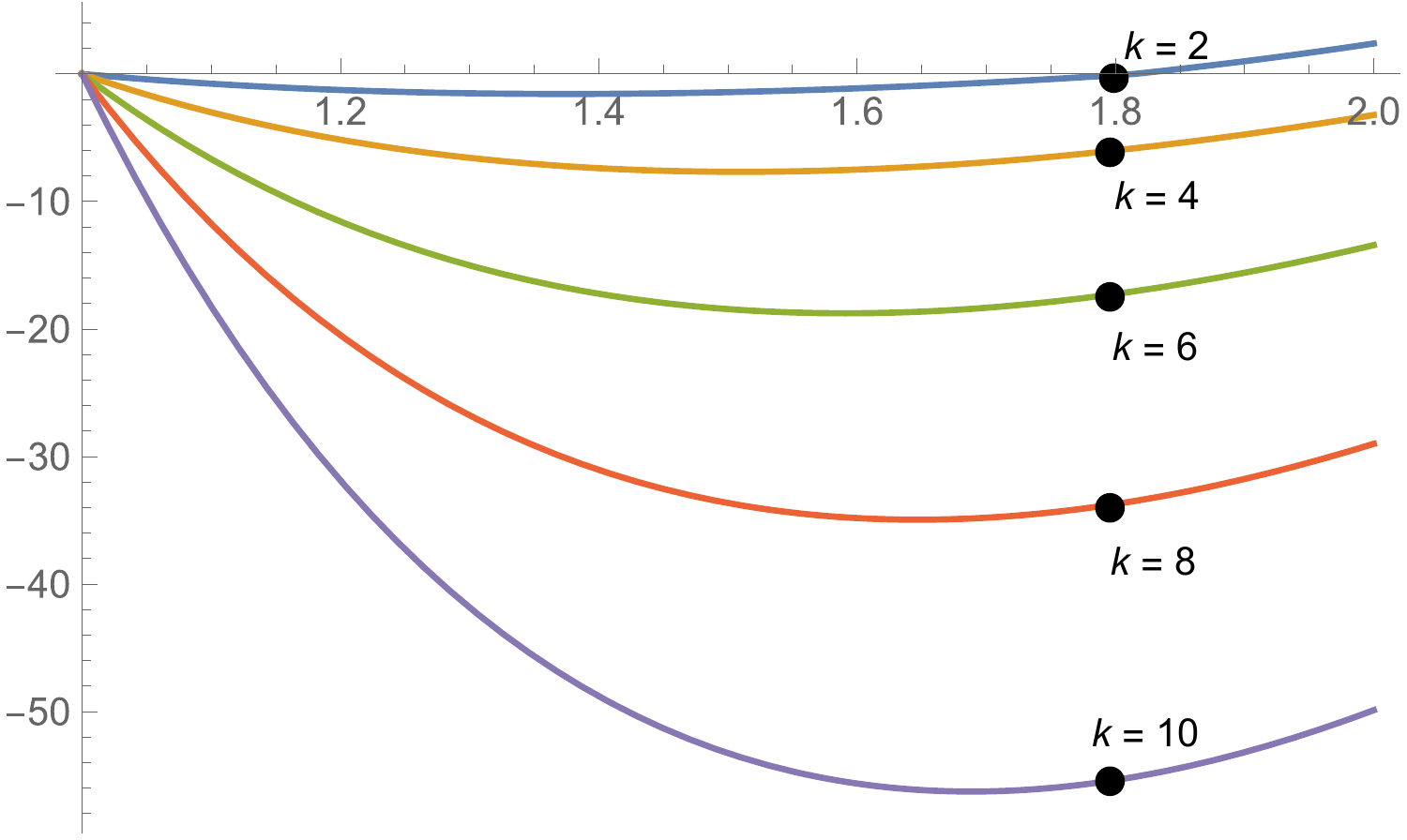}
\caption{The numerator of $(\l_1+\l_2)'$ in Region 3, for $k=2,4,6,8,10.$}\label{F202}
\end{center}
\end{figure}

Figure~\ref{F202} which presents the numerator for $k=2,4,6,8,10$ shows that at $v_1=1.81449$ the numerator equals zero for $k=k_0=2$ (upper line); for increasing $k$ the value of the numerator declines (and is thus negative as claimed). Note also that the length of Region 3 increases with $k.$ The increase arises because for all $k>0$ the left end of Region 3 is $1,$ and the right end is $v(k)-k=\frac{\ln{v(k)}}{1-\frac{1}{v(k)}}$ which increases  with $v(k)$ and thus with $k$. In particular when $k\to 0,$ the right end  $v(k)-k\to 1,$ and so when $k=0$ Region 3 is empty (since it consists of $1<v_1<1$).

Formally, given $v_1$, denote by $N(k)$ the numerator of $(\l_1+\l_2)'$ (when the derivative is with respect to $v_1$), namely
$$N(k)=v_1^2W[A](R+1)+(v_1+k)\Big( v_1^2+v_1(v_1+k)R+kR^2\Big).$$

We  now differentiate $N(k)$ with respect to $k.$ Note that $v_1$ and $k_0$ are fixed in $k$; consequently $R,$ which is a function only of $v_1$, is also fixed.

\begin{equation}\label{E33}
\frac{d}{dk}N(k)= v_1^2(R+1)\frac{d}{dk}W(k)+v_1^2+v_1(v_1+k)R+kR^2+R(v_1+k)(R+v_1).
\end{equation}

We will first prove that $\frac{d}{dk}N(k)$ for $k=k_0$ is negative. We will then prove that $\frac{d}{dk}N(k)$ decreases with $k$ for all $k>k_0$; hence for all $k>k_0$ $\frac{d}{dk}N(k)<0.$ And since $N(k_0)=0$ it follows that $N(k)<0$ for all $k>k_0.$

Recall that for $k=k_0:$ \  $v_1=u_0-k_0,$ and \ $R=R(v_1)=R(u_0-k_0)=\frac{k_0}{u_0}-1.$ Also, by Part 3 of Lemma~\ref{P50}, $\frac{d}{dk}W[A]\Big|_{k=k_0}=\frac{1}{u_0}.$ Substituting all this in~(\ref{E33}) gives
$$(u_0-k_0)^2\frac{k_0}{u_0^2}+(u_0-k_0)^2+(u_0-k_0)u_0\left(\frac{k_0}{u_0}-1\right)+k_0\left(\frac{k_0}{u_0}-1\right)^2+\left(\frac{k_0}{u_0}-1\right)u_0\left(\frac{k_0}{u_0}-1+u_0-k_0\right).$$
This equals
$$\left(\frac{k_0}{u_0}-1\right)^2k_0+\left(\frac{k_0}{u_0}-1\right)(2k_0-u_0+u_0^2-u_0k_0),$$ which equals
\begin{equation}\label{E6666}
\left(\frac{k_0}{u_0}-1\right)\left( (u_0-1)(u_0-k_0)+\frac{k_0^2}{u_0}\right).
\end{equation}

Since for all $k> 0$ we have that $v(k)-k> 1,$ then the first factor in~(\ref{E6666}) is negative, and the second factor is  positive in Region 3, proving that indeed in Region 3 $\frac{d}{dk}N(k)\big|_{k=k_0}<0$ .

We now prove that $\frac{d}{dk}N(k)$ decreases with $k$ for all $k>k_0.$
By~(\ref{E33})
$$\frac{d}{dk}N(k)= v_1^2(R+1)\frac{d}{dk}W(k)+v_1^2+v_1(v_1+k)R+kR^2+R(v_1+k)(R+v_1).$$
This equals

\begin{equation}\label{E37}
v_1^2(R+1)\frac{d}{dk}W[A]+2R(R+v)k+v_1^2(R+1)+v_1R(R+v_1).
\end{equation}
The first term $v_1^2(R+1)\frac{d}{dk}W[A]$decreases with $k,$ since by R1 and R2 $R+1>0$ in Region 3 and by the second statement in Lemma~\ref{P50},
$\frac{d}{dk}W[A]$ decreases with $k.$ The expression $2R(R+v)k+v_1^2(R+1)+v_1R(R+v_1)$ is a linear function of $k$ with a negative multiplier $2R(R+v)$ for $k$; hence it also declines with $k.$
So indeed, for all $k>k_0$, $\frac{d}{dk}N(k)$ decreases with $k$.  Since we proved that $\frac{d}{dk}N(k)\big|_{k=k_0}<0,$  it follows that for all $k>k_0$, $\frac{d}{dk}N(k)<0.$
It follows from $N(k_0)=0$ that, for all $k>k_0$, 	`$N(k)<0$. This means that the numerator of $(\l_1+\l_2)'$ (where the derivative is with respect to $v_1$) is negative for all $k>k_0.$ Since $v(k)-k=\frac{\ln{v(k)}}{1-\frac{1}{v(k)}}$ increases with $k$, for all $k>0 $ \  $v(k)-k > u_0-k_0=v_1.$ Hence, for all $k>0,$ and for all $1<v_1<v(k)-k,$ (i.e., Region 3) $(\l_1+\l_2)'<0,$ implying that in Region 3 $\l_1+\l_2$ decreases with $v_1$ .

\subsection{Profit maximization - elaborating on the results in Section~\ref{PM}}\label{APMN}

The first step is to find an explicit expression for  the local maximum of $\pi$ in each region. However, these local maxima may appear outside the region in which they are valid; in that case they are irrelevant. In other words, the prices $p$ that are relevant candidates for the global maximum $\pi$ are the maxima for which $(v-p,k)$ still lies in the original region of $(v,k)$. For instance, if $(v,k)$ lies in Region 4, then the local maximum for Region 4 may have $(v-p,k)$ lie in Region 3. That local maximum is not relevant, since in Region 3 there is a different expression for the local maximum of $\pi$.
For a region with no relevant local maximum we need to check its end points. Among these candidates, the $p$ that attains the maximum value for $\pi(p)$ is the price that maximizes the seller's expected profit.

\subsubsection{Proof of Theorem~\ref{T4}}\label{APM}

Recall that the right end of Region 3 was denoted as $p_3.$ Then $p_3=v-k-1$. Recall that $p_2$ and $p_4$ denote the local maxima of Regions 2 and 4 respectively. Then $p_2$, $p_3$, and $p_4$ are the only candidates for the global maximum $p^*$ of the expected profit $\pi$. Recall that if $p_2$ does not belong to Region 2 it is irrelevant. Similarly for $p_4$ and Region 4. In contrast, $p_3=v-k-1,$ which is the point separating between Regions 2 and 3, always exists (when Region 3 exists).
Next, we wish to find explicit expressions for $p_i$ and $\pi_i$ in each region.
Denote $\pi_i=\pi(p_i) for i=2,3,4.$

\begin{definition}\label{D200}
$$W=W[-(k+1)e^{-(k+1)}].$$
\end{definition}

\begin{proposition}\label{P5000}
The candidates $p_2, p_4$ and $p_,$ for the global maxima $p^*$ of the expected profit $\pi$ and their associated expected profits $\pi_i$ are
\begin{enumerate}
\item  \begin{equation}\label{E901}
p_2=v-{\ln\left(v\right)\over1-{1\over v}}, \ \  \
 \pi_2=v-1-\ln\left(v\right).
 \end{equation}
\item
\begin{equation}\label{E891}
p_3=v-k-1, \ \ \
\pi_3=(v-k-1)\left( 1-e^{-(W+k+1)}\right).
\end{equation}
\item \begin{equation}\label{E700}
p_4=v-k-{\ln\left(v-k\right)\over1-{1\over v-k}}, \  \  \
\pi_4=v-k-1-\ln\left(v-k\right),
\end{equation}
\end{enumerate}

\end{proposition}
\begin{proof}
\
\begin{enumerate}
\item Recall that in Region 2, $\l_1=0$. Hence by~(\ref{E9001N}),
$(v-p){1-e^{-\l_2}\over \l_2}=1$, and so in particular,
\begin{equation}\label{EE2}
p_2= v-{\l_2\over1-e^{-\l_2}}.
\end{equation}
Substituting this in~\eqref{E5000} gives
\begin{equation}\label{EEE2A}
\pi_2= \left( v-{\l_2\over1-e^{-\l_2}}\right)\left( 1-e^{-\l_2} \right)=v(1-e^{-\l_2})-\l_2.
\end{equation}
To find the local maximum in Region 2 compute
\begin{equation}\label{E903}
\frac{d}{d\l_2}\pi=ve^{-\l_2}-1.
\end{equation}
The unique solution for $\frac{d}{d\l_2}\pi=0$ is $\l_2^*=\ln{v}.$
Substituting this in~\eqref{EE2} and~\eqref{EEE2A} gives $p_2$ and $\pi_2,$ as stated in~(\ref{E901}).

\item Because $p_3=v-k-1$ is on the border between Regions 2 and 3, $\l_1=0$, and so
\begin{equation}\label{EE3}
\pi_3=(v-k-1)\left( 1-e^{-\l_2}\right).
\end{equation}
Now,  $p=v-k-1$ implies that $v_1=1$.
Hence, by Theorem~\ref{T10}, in Region 3
$$\l_2=W[A]-(k+1)R(1).$$
By Property W2, $R(1)=W[-e^{-1}]=-1.$ Note also that by Definition~\ref{DA}, for $v_1=1,$
$$A=-(k+1)e^{-(k+1)}.$$ Hence

$$\l_2=W[-(k+1)e^{-(k+1)}]-(k+1)(-1)=W+k+1.$$
 Substituting $\l_2=W+k+1$ in~\eqref{EE3} gives~(\ref{E891}).

\item Recall that in Region 4, $\l_2=0$. By (\ref{E9000N})
\begin{equation}\label{EE4}
p_4=v-k-{\l_1\over1-e^{-\l_1}}.
\end{equation}
So by~(\ref{E5000})
\begin{equation}\label{EEE4}
\pi_4 =(v-k)(1-e^{-\l_1})-\l_1.
\end{equation}
Now,
\begin{equation}\label{E900}
\frac{d}{d\l_1}\pi=(v-k)e^{-\l_1}-1,
\end{equation}
giving the profit-maximizing value $\l_1^*=\ln\left(v-k\right)$. Substituting this in~\eqref{EE4} and~\eqref{EEE4}, gives $p_4$ and $\pi_4,$ as stated in~(\ref{E700}).

\end{enumerate}
\end{proof}

\begin{proposition}\label{P6}
For all $k> 0,$ and $v>1$,
\begin{enumerate}
\item $\pi_4 < \pi_2$

\item $\pi_3< \pi_2$,
\end{enumerate}
\end{proposition}

Proposition~\ref{P6} is important since it implies that whenever $p_2\in {\textit{Region 2}}$ (namely, the local maximum of Region 2 still lies in Region 2) then the global maximum $\pi^*$ is attained at $p_2,$ and so $\pi^*=\pi_2.$

\begin{proof}

\
\begin{enumerate}
\item By (\ref{E700}), \  $\pi_4=v-k-1-\ln\left(v-k\right),$ and by (\ref{E901}), \  $\pi_2=v-1-\ln{\left(v\right)}.$
Note that for $y\geq 1$, the value of $y-1-\ln{y}$ is increasing. This result and the inequalities $1\leq v-k<v$ imply that $\pi_4 < \pi_2.$
\item
To prove that $\pi_3< \pi_2$  for $k>0$ we must prove that
$$(v-k-1)\left( 1-e^{-(W+k+1)}\right)\leq v-1-\ln{v},$$ which is equivalent to proving that
\begin{equation}\label{E13}
(v-k-1)\left( 1-e^{-(W+k+1)}\right)- v+\ln{v}\leq -1.
\end{equation}
We will find the maximum value  of the left-hand side of (\ref{E13}) and show that it equals $-1$. To find the maximum we solve
$$
\left((v-k-1)\left( 1-e^{-(W+k+1)}\right)- v+\ln{v}\right)'=1-e^{-W-k-1}-1+\frac{1}{v}=0.
$$
The solution is $v=e^{W+k+1}$. The  second derivative of the left-hand side of (\ref{E13})  is 
$(1-e^{-W-k-1}-1+\frac{1}{v})'=-\frac{1}{v^2}$. Hence $e^{W+k+1}$ is a local maximum. 
Note that
\begin{equation}\label{E14}
e^{W+k+1}=e^We^{k+1}=\frac{-(k+1)e^{-k-1}e^{k+1}}{W}=\frac{k+1}{-W}.
\end{equation}
We now show that for $v=e^{W+k+1}$ the left-hand side of (\ref{E13}) is $-1$.
By Property W5 we obtain
$$=\left(-\frac{k+1}{W}-k-1\right)\left(1+\frac{We^{-(k+1)}}{(k+1)e^{-(k+1)}}\right)
 +\frac{k+1}{W}+\ln\left(\frac{k+1}{-W}\right)$$
$$=\left(-\frac{k+1}{W}-k-1\right)\left(1+\frac{W}{k+1}\right)+\frac{k+1}{W}+\ln\left(\frac{k+1}{-W}\right),$$
which gives
\begin{equation}\label{E15}
-k-2-W+\ln\left(-\frac{k+1}{W}\right).
\end{equation}
Since by (\ref{E14}) $\frac{k+1}{-W}=e^{W+k+1}$
$$
\ln\left(\frac{k+1}{-W}\right)=\ln{e^{(W+k+1)}}=W+k+1.
$$
Substituting this in (\ref{E15}) gives
$$-k-2-W+\ln\left(\frac{k+1}{-W}\right)=-k-2-W+W+k+1=-1.$$
\end{enumerate}
\end{proof}

{\bf{When does $p_2$ belong to Region 2?}}

\

As explained earlier, Proposition~\ref{P6}  implies that whenever $p_2\in {\textit{Region 2}}$ (namely, the local maximum of Region 2 still lies in Region 2), then the global maximum $\pi^*$ is attained at $p_2.$
We now find the conditions guaranteeing that $p_2$ lies in Region 2. Recall that $W=W[-(k+1)e^{-(k+1)}].$

\begin{lemma}\label{P5}

If \ $1\leq v\leq e^{W+k+1},$ then $p_2$ lies in Region 2.
\end{lemma}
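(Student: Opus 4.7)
The plan is to unpack Region 2 in terms of $p$, namely $v-k-1\le p< v-1$, and verify that $p_2=v-\frac{\ln v}{1-1/v}$ lies in this interval whenever $1\le v\le e^{W+k+1}$. This splits into two inequalities: $p_2<v-1$ (separating Region 2 from Region 1) and $p_2\ge v-k-1$ (separating Region 2 from Region 3). The first is equivalent to $\frac{v\ln v}{v-1}>1$, i.e.\ $\ln v>1-\tfrac{1}{v}$, which holds for all $v>1$ since $f(v)=\ln v-(1-1/v)$ satisfies $f(1)=0$ and $f'(v)=(v-1)/v^2>0$. So this inequality is essentially free.

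The substantive inequality is $p_2\ge v-k-1$, which rearranges to $g(v)\le k+1$ where $g(v):=\frac{v\ln v}{v-1}$ (with $g(1)=1$ by continuity). A direct differentiation gives $g'(v)=\frac{(v-1)-\ln v}{(v-1)^2}>0$ for all $v>1$, so $g$ is strictly increasing. Hence there is a unique threshold $v^\ast>1$ with $g(v^\ast)=k+1$, and the inequality $g(v)\le k+1$ is equivalent to $v\le v^\ast$. The lemma therefore reduces to the identification $v^\ast=e^{W+k+1}$.

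To check this identification, substitute $v=e^{W+k+1}$ into $g(v)=k+1$. Since $\ln v=W+k+1$, the equation $v\ln v=(k+1)(v-1)$ simplifies to $Wv=-(k+1)$, i.e.\ $v=-(k+1)/W$. It remains to verify that $e^{W+k+1}=-(k+1)/W$, which is exactly property (W5) applied to $a=-(k+1)e^{-(k+1)}$: there we have $e^{-W}=W/a=-We^{k+1}/(k+1)$, and rearranging gives $e^{W+k+1}=-(k+1)/W$, as required. Combining the two candidate formulas for $v^\ast$ yields the claim.

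I expect the Lambert-function manipulation in the last step to be the only non-routine part; everything else is elementary single-variable calculus on $g$ and on $\ln v-(1-1/v)$. The edge case $v=1$ is degenerate (Region 2 collapses since $p\ge0$ forces $p\ge v-1=0$), and can be handled either by excluding it or by interpreting the containment in Region 2 as a limit of the interior case, since $p_2\to 0=v-1$ as $v\to 1^+$.
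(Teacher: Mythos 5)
Your proposal is correct and follows essentially the same route as the paper: reduce membership of $p_2$ in Region 2 to the single substantive inequality $\frac{v\ln v}{v-1}\le k+1$ (the inequality $p_2\le v-1$ being automatic from $\ln v\ge 1-1/v$), observe this function of $v$ is increasing from $1$, and identify the threshold $e^{W+k+1}$ via the Lambert-function identity $e^{W+k+1}=-(k+1)/W$. The only cosmetic difference is the direction of the algebra in the last step: you substitute $v=e^{W+k+1}$ and simplify $v\ln v=(k+1)(v-1)$ down to $Wv=-(k+1)$, while the paper starts from $We^{W+k+1}=-(k+1)$ and manipulates forward to recover $\frac{W+k+1}{1-e^{-(W+k+1)}}=k+1$; your version is arguably a touch cleaner.
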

\begin{proof}
We will first prove that $p_2$ lies in Region 2, iff $\frac{\ln{v}}{1-\frac{1}{v}}\leq k+1.$ Then, we will prove that $\frac{\ln{v}}{1-\frac{1}{v}}\leq k+1$ iff $v\leq e^{W+k+1}.$
Recall that Region 2 refers to all $p$ satisfying $v-k-1\leq p\leq v-1$.
First, by L'H\^{o}pital's rule
$$\lim_{v\to 1}\frac{\ln{v}}{1-\frac{1}{v}}=\frac{\lim_{v\to 1}\frac{1}{v}}{\lim_{v\to 1}\frac{1}{v^2}}=\frac{1}{1}=1.$$
Because $\frac{\ln{v}}{1-\frac{1}{v}}$  is increasing for all $v\geq 1,$
$$\frac{\ln{v}}{1-\frac{1}{v}}\geq 1,$$
and so
$$v-\frac{\ln{v}}{1-\frac{1}{v}}\leq v-1,$$ 
proving that $p_2\leq v-1$.
We now prove that $p_2\geq v-k-1,$ iff
$\frac{\ln{v}}{1-\frac{1}{v}}\leq k+1$. The latter is equivalent to
$$v-\frac{\ln{v}}{1-\frac{1}{v}}\geq v-k-1.$$ 
Hence in this case $p_2\geq v-k-1,$ and so $p_2$ lies in Region 2.
We will show that the unique solution for
\begin{equation}\label{E16}
\frac{\ln{v}}{1-\frac{1}{v}}=k+1
\end{equation}
is \ $e^{W+k+1}.$
By (\ref{E14}) $$e^{W+k+1}=-\frac{k+1}{W}.$$ Hence
$$We^{W+k+1}=-(k+1),$$ implying that
$$(W+k+1)e^{W+k+1}-(k+1)e^{W+k+1}=-(k+1),$$ and so
$$(W+k+1)e^{W+k+1}=(k+1)\left(e^{W+k+1}-1\right).$$
This is equivalent to
$$\frac{(W+k+1)e^{W+k+1}}{e^{W+k+1}-1}=k+1,$$ which implies
$$\frac{W+k+1}{1-\frac{1}{e^{W+k+1}}}=k+1.$$

Hence $e^{W+k+1}$ solves (\ref{E16}). Because the left-hand side of (\ref{E16}) is strictly increasing, $e^{W+k+1}$ uniquely solves (\ref{E16}).
\end{proof}

\begin{corollary}\label{C333}
If $1\leq v\leq e^{W+k+1},$ then $p^*=p_2=v-{\ln\left(v\right)\over1-{1\over v}},$ \ and \ $\pi^*=\pi_2=v-1-\ln\left(v\right).$
\end{corollary}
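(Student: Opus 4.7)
The plan is to assemble the global maximum of $\pi(p)$ over $p \geq 0$ by enumerating candidate maximizers region by region, and then invoking Theorem~\ref{P6} to single out $p_2$ as the overall winner.

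First I would dispose of Region~1 and of $p = 0$, on which $\pi \equiv 0$, and observe that by Lemma~\ref{P5} the hypothesis $1 \leq v \leq e^{W+k+1}$ places $p_2$ inside Region~2, so that $\pi_2 = v - 1 - \ln v > 0$ is actually attained at a feasible price. Thus any optimum must lie in Region~2, Region~3, or Region~4.

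Next I would list the candidate regional maximizers. In Region~2, $p_2$ is interior and is by construction the unique local (hence regional) maximum of $\pi$. In Region~3, Proposition~\ref{P666} tells us $\pi$ strictly increases in $p$, so the regional supremum is attained at its right endpoint $p_3 = v - k - 1$. In Region~4, the regional maximum is either $p_4$ (if it lies in Region~4) or is attained at an endpoint of Region~4, namely $p = 0$ (where $\pi = 0$) or $p = v - u$. Continuity of $\pi$ at the Region~3/Region~4 boundary, together with the monotonicity from Proposition~\ref{P666}, gives $\pi(v - u) \leq \pi_3$; hence the supremum on Region~4 is bounded above by $\max\{\pi_4, \pi_3, 0\}$.

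Putting these bounds together, the global optimum of $\pi$ is at most $\max\{\pi_2, \pi_3, \pi_4, 0\}$. By Theorem~\ref{P6}, $\pi_2 > \pi_4$ and $\pi_2 \geq \pi_3$, and I already have $\pi_2 > 0$, so this upper bound equals $\pi_2$ and is attained at $p^* = p_2$. Substituting the closed forms from Proposition~\ref{P5000} yields the stated expressions. I do not anticipate any substantive obstacle: the corollary is essentially a bookkeeping assembly of Lemma~\ref{P5}, Theorem~\ref{P6}, Proposition~\ref{P666}, and Proposition~\ref{P5000}, with the only mild point being to dispatch the Region~4 endpoint $p = v - u$ via continuity rather than by a separate calculation.
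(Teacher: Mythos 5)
Your proof is correct, and it follows the same route the paper implicitly takes: Lemma~\ref{P5} places $p_2$ inside Region~2, Proposition~\ref{P666} forces the Region~3 supremum to its right endpoint $p_3$, and Theorem~\ref{P6} then makes $\pi_2$ dominate $\pi_3$ and $\pi_4$. The paper actually leaves this corollary (and the unnamed one preceding Lemma~\ref{P5}) unproved, so your main contribution is to spell out the bookkeeping — in particular the careful handling of the Region~4 endpoint $p = v-u$ via continuity of $\pi$ and the monotonicity on Region~3 — which the paper glosses over. One tiny slip: $\pi_2 = v - 1 - \ln v > 0$ should be $\geq 0$ (equality at $v = 1$), but this is harmless since at $v=1$ all candidate profits vanish and the claim holds trivially.
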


For larger $v$, namely $v \geq e^{W+k+1}$, we need to find where  $\pi_3 \geq  \pi_4$ is satisfied. In that case, $p^*=p_3$. Recall that $p_3=v-k+1$ is the point that separates Regions 2 and 3. Therefore it always exists and is relevant. But $p_4,$ which is the local maximum of the expression for $\pi$ in Region 4, may not belong to Region 4, and in that case it is not relevant.

Recall that
$\pi_4=v-k-1-\ln\left(v-k\right)$, and that $\pi_3=(v-k-1) \left( 1-e^{-(W+k+1)}\right))$.

Let $f(v)=\pi_4-\pi_3$.  Hence,

\begin{definition}\label{D99}
$f(v) \equiv \left( 1-\frac{v}{k+1}\right)W-\ln{(v-k)}.$
\end{definition}

Also, denote $v_m$ as the minimum of $f$ in Region 3.
\begin{lemma}\label{P8}
Given $k$, $f(v)$ is a strictly convex function which has exactly two roots, the smaller of which is  $k+1.$
\end{lemma}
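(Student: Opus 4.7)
My plan is to treat $f(v) = W\bigl(1 - \tfrac{v}{k+1}\bigr) - \ln(v-k)$ as the sum of an affine function in $v$ and the strictly convex function $-\ln(v-k)$ on the natural domain $v > k$. Strict convexity then follows immediately from $f''(v) = 1/(v-k)^{2} > 0$, dispatching the first claim. Direct substitution confirms $f(k+1) = (1 - \tfrac{k+1}{k+1})W - \ln(1) = 0$, so $k+1$ is indeed a root.

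To establish the existence of a \emph{second} root and show $k+1$ is the smaller one, I would locate the unique minimizer of $f$. From $f'(v) = -\tfrac{W}{k+1} - \tfrac{1}{v-k}$ the unique stationary point is $v^{*} = k - (k+1)/W$. The key input here is a sharp bound on the constant $W = W[-(k+1)e^{-(k+1)}]$. Since the map $a \mapsto ae^{-a}$ attains its maximum $e^{-1}$ at $a = 1$ and is decreasing on $[1,\infty)$, the Lambert argument satisfies $-(k+1)e^{-(k+1)} \in [-e^{-1}, 0)$ for every $k \geq 0$; the single-valued branch convention $W[\cdot] \geq -1$ from Definition \ref{DW}, together with property W1, then forces $W \in [-1, 0)$. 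Hence $|W|/(k+1) < 1$ whenever $k > 0$, which is exactly equivalent to $v^{*} > k+1$.

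With $v^{*} > k+1$ in hand, I combine strict convexity with boundary behavior: $f(v) \to +\infty$ as $v \to k^{+}$ (driven by $-\ln(v-k)$) and as $v \to +\infty$ (driven by the linear term, whose slope $-W/(k+1)$ is strictly positive). Strict convexity forces $f(v^{*}) < 0$, and the intermediate value theorem then produces exactly one additional root in $(v^{*}, +\infty)$; no further roots can exist because a strictly convex function vanishes at most twice (by Rolle applied to $f'$). Since $k+1 < v^{*}$ lies strictly to the left of the second root, $k+1$ is necessarily the smaller of the two.

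The main obstacle is the bound $|W| < k+1$, which underpins the whole qualitative picture: it is what distinguishes the generic case $k > 0$ (two simple roots) from the degenerate case $k = 0$ (where $W = -1$ and $v^{*} = k+1$, so $k+1$ collapses to a double root). Without this bound one could not rule out the single-root scenario, but with it the result is a routine convexity-plus-intermediate-value argument.
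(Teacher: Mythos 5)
Your proof is correct and follows essentially the same route as the paper: the same derivative $f'(v) = -\tfrac{W}{k+1} - \tfrac{1}{v-k}$, the same minimizer $v^{*} = k - (k+1)/W$, and the same key bound $W \geq -1 > -(k+1)$ for $k > 0$ to force $v^{*} > k+1$. If anything your write-up is slightly tighter, since you explicitly verify $f \to +\infty$ at both ends of the domain, a step the paper leaves implicit when it jumps from $f(v_m) < 0$ to the existence of the second root.
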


\begin{proof}
First, note that $f(k+1)=0.$ Now,
\begin{equation}\label{E66}
f'(v)=\left(\left( 1-\frac{v}{k+1}\right)W-\ln{(v-k)} \right)'=-\frac{W}{k+1}-\frac{1}{v-k},
\end{equation}
and
we have 
$$
f''(v)=\frac{1}{(v-k)^2}>0.
$$
Hence $f$ is indeed strictly concave, and so has at most two roots. To see that  $k+1$ is not the only root, we need to find $v_m$, the minimum of $f$ in Region 3, and show that $k+1\ne v_m.$
To solve $f'(v)=0,$ we use ~(\ref{E66}) obtaining
$$-\frac{W}{k+1}-\frac{1}{v-k}=0.$$ This is equivalent to
$$v-k=-\frac{k+1}{W}.$$ Hence
$$v=k-\frac{k+1}{W},$$ and so
 $v_m=k-\frac{k+1}{W}$ is the minimum of $f$ in Region 3.
We now show that $k+1<v_m.$ Since  $-(k+1)e^{-(k+1)}\geq -e^{-1}$\  for all $k>0$, and $W[\cdot]$ is an increasing function, then $W=W[-(k+1)e^{-(k+1)}]\geq W[-e^{-1}]=-1$. Thus for all $k>0,$ \ $W>-(k+1)$ and so $$1<-\frac{k+1}{W},$$ and
$$k+1<k-\frac{k+1}{W}=v_m.$$
Hence $f$  has exactly two roots.

The function $f$ is strictly convex, and $f(v_m)< 0$ (since $f(k+1)=0,$ and $v_m$ is the minimum of $f$). Therefore $v_m$ lies between the two roots. Because $k+1<v_m,$ $k+1$ is the smaller root. Denote $v_f$ as the larger root (a formal definition is presented after the end of the proof). Then
\begin{equation}\label{E777}
k+1<v_m\leq v_f.
\end{equation}

\end{proof}

\begin{definition}\label{D100}
$v_f$ is the unique value that satisfies both $f(v)=0,$  and $v> k+1.$
\end{definition}

Because of Lemma~\ref{P8}, $v_f$ is well defined. See Figure~\ref{F100}  that presents $f$ as a function of $v$ for $k=1$.

\begin{figure}[H]
\begin{center}
\includegraphics[scale=0.25]{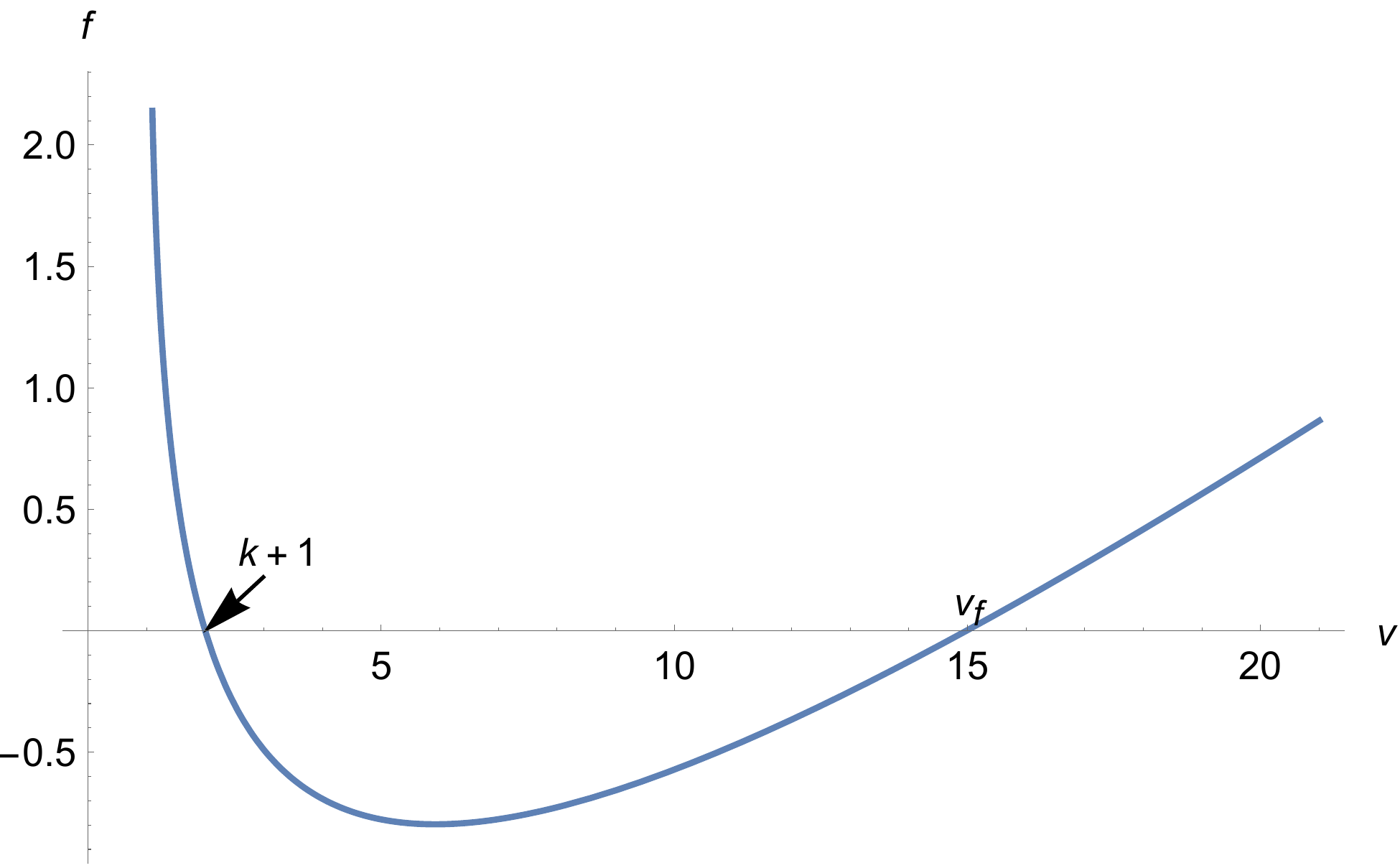}
\caption{$f$ as a function of $v$}\label{F100}
\end{center}
\end{figure}

Since $f(v)=\pi_4-\pi_3$, it follows that for all $v$ satisfying $k+1< v< v_f$ $\pi_4<\pi_3$.
Recall that by Proposition~\ref{P6} $\pi_3 \leq \pi_2,$ and that  by Lemma~\ref{P5} $p_2$ is relevant for all $v$ satisfying $1\leq v\leq e^{W+k+1}.$

\begin{corollary}\label{C1}
For every pair $(v,k)$
$$e^{W+k+1}\leq v_f.$$
\end{corollary}

\begin{proof}
Recall that by (\ref{E14}) we have  $$e^{W+k+1}=-\frac{k+1}{W}.$$ Thus
$$e^{W+k+1}=-\frac{k+1}{W}\leq k-\frac{k+1}{W}=v_m\leq v_f,$$
where the last inequality follows from~(\ref{E777}).
\end{proof}

\begin{corollary}
\
\begin{itemize}
\item For all $1\leq v\leq e^{W+k+1},$ \ \ $p^*=p_2,$ \ and \ $\pi^*=\pi_2$.
\item For all $e^{W+k+1}<v\leq v_f,$ \ \  $p^*=p_3,$ \ and \ $\pi^*=\pi_3$.
\end{itemize}
\end{corollary}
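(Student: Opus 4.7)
The first bullet is immediate from Corollary~\ref{C333}, so I focus on the second, on the range $v \in (e^{W+k+1}, v_f]$ (nonempty by Corollary~\ref{C1}).

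My plan is to enumerate the candidates for $p^*$ across the four regions and check that $p_3$ dominates. Region~1 is trivial ($\pi=0$). In Region~3, Proposition~\ref{P666} says $\pi$ strictly increases in $p$, so the supremum over Region~3 is attained at its right endpoint $p_3 = v-k-1$, with value $\pi_3$. In Region~4, any interior maximum $p_4$ yields $\pi_4 \leq \pi_3$ by Lemma~\ref{P8} and Definition~\ref{D100} (since $f(v) = \pi_4 - \pi_3 \leq 0$ on $(k+1, v_f]$), and at the $3$--$4$ boundary $p=v-u$ the profit equals, by continuity, the value of $\pi$ at the left end of Region~3, which is strictly less than $\pi_3$ by the monotonicity just noted.

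The crux, and the main obstacle, is showing that the maximum on Region~2 is also $\pi_3$ rather than $\pi_2$ (Theorem~\ref{P6} only compares the formal values $\pi_2\ge\pi_3$, which is not enough once $p_2$ falls outside Region~2). Since the profit expression on Region~2 is unimodal with unique stationary point $p_2$, it suffices to prove $p_2 < p_3$ for $v > e^{W+k+1}$; then $\pi$ strictly decreases throughout Region~2 and its maximum is attained at the left endpoint $p_3$. The inequality $p_2 \leq p_3$ is equivalent to $g(v) \equiv (k+1)(1 - 1/v) - \ln v \leq 0$. I would observe $g(1) = 0$ and $g'(v) = ((k+1) - v)/v^2$, so $g$ is unimodal with a unique positive root $v^* > k+1$, then identify $v^* = e^{W+k+1}$ using property W5, which yields the identity $W = -(k+1)e^{-(W+k+1)}$. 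Direct substitution then gives $g(e^{W+k+1}) = (k+1) + W - (W+k+1) = 0$, so $v^* = e^{W+k+1}$ as required.

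Assembling the three regions, for $v\in(e^{W+k+1},v_f]$ the global maximum equals $\pi_3$ at $p=p_3$. The hard step is the Lambert-$W$ identity pinning down the crossover $v^*=e^{W+k+1}$; the rest is bookkeeping with monotonicities and definitions already in hand.
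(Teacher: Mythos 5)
Your proposal is correct and mirrors the argument the paper's preceding results (Theorem~\ref{P6}, Lemma~\ref{P5}, Lemma~\ref{P8}, Corollary~\ref{C1}, Proposition~\ref{P666}) are built to supply: Region~3 is handled by the monotonicity of $\pi$, Region~4 by the sign of $f=\pi_4-\pi_3$ on $(k+1,v_f]$, and Region~2 by noting that $p_2$ exits the region exactly at $v=e^{W+k+1}$, so the unimodal Region-2 profit is decreasing there and attains its maximum at the left endpoint $p_3$. The only redundancy is that your computation with $g(v)=(k+1)(1-1/v)-\ln v$ re-proves Lemma~\ref{P5}; you could instead cite that lemma directly to conclude $p_2<p_3$ for $v>e^{W+k+1}$.
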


But what happens for $v > v_f$?  By Lemma~\ref{P8} $\pi_4>\pi_3$ \ for \ $v> v_f.$ But is $p_4$ relevant (i.e., does it belong to Region 4) for $v> v_f$? The following Lemma says Yes.

\begin{lemma} \label{P9}
\
\begin{enumerate}
\item If  $v\geq k+v(k)$ then $p_4$ lies in Region 4.

\item For every pair $(v,k)$ \
$v_f\geq k+v(k).$
\end{enumerate}
\end{lemma}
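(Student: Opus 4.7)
For Part 1, I would verify the two inequalities defining Region 4 ($0 \le p_4$ and $p_4 \le v-u$) separately. Writing $w := v-k$, the formula $p_4 = w - \frac{w\ln w}{w-1}$ reduces $p_4 \ge 0$ to the elementary inequality $\ln w \le w-1$, which holds for all $w > 0$. For the second inequality I introduce $g(t) := \frac{t \ln t}{t-1}$, so that $p_4 \le v-u$ rewrites as $g(w) \ge u - k$. A short calculation gives $g'(t) = \frac{t-1-\ln t}{(t-1)^2} > 0$ for $t \neq 1$, so $g$ is strictly increasing on $(1,\infty)$. Rearranging Definition~\ref{D1} yields $g(u) = u - k$, hence $p_4 \le v-u \iff g(w) \ge g(u) \iff v-k \ge u$. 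Under the hypothesis $v \ge k+u$, both inequalities hold, and therefore $p_4 \in [0, v-u]$.

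For Part 2, I plan to show $f(k+u) \le 0$, after which Lemma~\ref{P8} finishes the argument: $f$ is strictly convex with exactly two roots $k+1$ and $v_f$, so $f \le 0$ on $[k+1, v_f]$ and $f > 0$ off it. Since $u(k) > 1$ for $k > 0$, the point $k+u$ lies strictly to the right of $k+1$, so $f(k+u) \le 0$ forces $k+u \le v_f$.

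The substantive step is therefore the inequality $f(k+u) \le 0$. At the specific value $v = k+u$, Part~1 applied with equality gives $p_4 = v-u$, i.e., the local optimum price in Region~4 coincides exactly with the Region~3/4 interface. By continuity of $\pi(p)$ across this interface (both $\lambda_0$ and $\lambda_1$ are continuous there, the common formula $\lambda_0 = (v-p-k) + R(v-p-k)$ of Proposition~\ref{P2} holding in both regions, and $\lambda_1 \to 0$ from Region~3), the value $\pi_4(k+u) = \pi(v-u)$ also equals the limiting value from the Region~3 formula. Proposition~\ref{P666} states that $\pi$ strictly increases in $p$ throughout Region~3, and Region~3 extends from $p = v-u$ to $p = v-k-1 = p_3$. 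Therefore
$$\pi_4(k+u) \;=\; \pi(v-u) \;\le\; \pi(v-k-1) \;=\; \pi_3(k+u),$$
which is exactly $f(k+u) \le 0$.

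The conceptual heart of the argument---and the only nontrivial step---is recognizing the threshold $v = k+u$ as precisely the value where $p_4$ slides onto the Region~3 boundary; once that is noticed, the proof collapses to the already-proved monotonicity of $\pi$ in Region~3. The remaining pieces (monotonicity of $g$, continuity of $\lambda_0$ and $\lambda_1$ at the Region~3/4 interface, and the convexity of $f$ cited from Lemma~\ref{P8}) are routine verifications.
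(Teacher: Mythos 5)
Your Part~1 argument is essentially the paper's argument in slightly cleaner notation: the paper also splits $p_4\in[0,v-u]$ into the two inequalities, establishes $p_4\ge 0$ from monotonicity of $w-\ln w$, and establishes $p_4\le v-u$ from monotonicity of $\frac{\ln y}{1-1/y}$ together with $v-k\ge u$ and the defining equation of $u$. Your substitution $g(u)=u-k$ is the same observation.

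Part~2, however, is where you genuinely depart from the paper, and your route is arguably the stronger one. The paper reduces $k+u\le v_f$ to $k+u\le v_m$ (with $v_m$ the minimizer of $f$), which in turn reduces to $u\le e^{W+k+1}$; it then notes that both sides equal $1$ at $k=0$ and \emph{appeals to Figure~\ref{F101}} for $k>0$, which is not a proof. You instead observe that $v=k+u$ is exactly the threshold at which $p_4$ lands on the Region~3/4 interface, use the continuity of $\lambda_0$ and $\lambda_1$ across that interface (both regions share the formula $\lambda_0=v_0+R(v_0)$ from Proposition~\ref{P2}, and Lemma~\ref{T3} gives $W[A(u-k)]=-1$ so $\lambda_1\to 0$), and then invoke the already-proved Proposition~\ref{P666} (monotonicity of $\pi$ across Region~3) to get $\pi_4(k+u)=\pi(v-u)\le\pi(v-k-1)=\pi_3(k+u)$, i.e.\ $f(k+u)\le 0$. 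Combined with Lemma~\ref{P8} this yields $k+u\le v_f$ by a purely algebraic argument, closing the gap the paper left to a picture. The only small thing to flag is the degenerate case $k=0$: there $u=1$, Region~3 collapses, and $v_f=k+u=1$ gives equality directly, so that case should be stated separately rather than covered by the Region~3 monotonicity argument.
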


\begin{proof}

\
\begin{enumerate}
\item Recall that Region 4 refers to all $p$ satisfying  $0\leq p\leq v-v(k).$

Proving that $p_4\geq 0$ requires proving that
$$v-k-{\ln\left(v-k\right)\over1-{1\over v-k}}\geq 0.$$
This is equivalent to
$$(v-k)\left( 1-\frac{1}{v-k}\right)-\ln(v-k)\geq 0.$$
Hence we need to prove that
\begin{equation}\label{E80}
(v-k) -\ln(v-k)\geq 1.
\end{equation}
Note that $v-k\geq v(k)\geq 1$ (where the last inequality follows from the fact that for all $k\geq 0$ $v(k)\geq 1$.)
For $v-k\geq 1$  (\ref{E80}) always holds since the left-hand side of~(\ref{E80}) equals $1$ for $v-k=1$ and is increasing for $v-k \geq 1$. Hence $p_4$ is indeed non-negative.
We now prove that $p_4\leq v-v(k)$ if $v\geq k+v(k)$.
Since $\frac{\ln y}{1-\frac{1}{y}}$ increases with $y$, $v-k\geq v(k)$ implies that
$$\frac{\ln (v-k)}{1-\frac{1}{v-k}}\geq \frac{\ln v(k)}{1-\frac{1}{v(k)}}.$$
Hence
$$v-k-\frac{\ln (v-k)}{1-\frac{1}{v-k}}\leq v-k-\frac{\ln v(k)}{1-\frac{1}{v(k)}}.$$
The left-hand side of the above equals $p_4$, so we have
$$
p_4\leq v-k-\frac{\ln v(k)}{1-\frac{1}{v(k)}}.
$$
Recall that $v(k)=k+\frac{\ln{v(k)}}{1-\frac{1}{v(k)}}$, so that $p_4 \leq v-v(k)$.

\item We will prove that $k+v(k)\leq v_m$, where $v_m$ was defined as the minimum of $f.$ This result and the observation that $v_m\leq v_f$ (see~(\ref{E777}) in the proof of Lemma~\ref{P8}) complete the proof. Recall that $$v_m=k-\frac{(k+1)}{W}.$$ Hence we need to prove that $$v(k)\leq -\frac{(k+1)}{W}.$$
By (\ref{E14}) $$-\frac{(k+1)}{W}=e^{W+k+1},$$ so we must prove that
$$
v(k)\leq e^{W+k+1}.
$$
Both sides of the inequality are functions of $k$. For $k=0$, we obtain equality with $1$ on both sides. As seen in Figure~\ref{F101} from then on $v(k)<e^{W+k+1}$.
\end{enumerate}
\end{proof}

\begin{figure}[H]
\begin{center}
\includegraphics[scale=0.2]{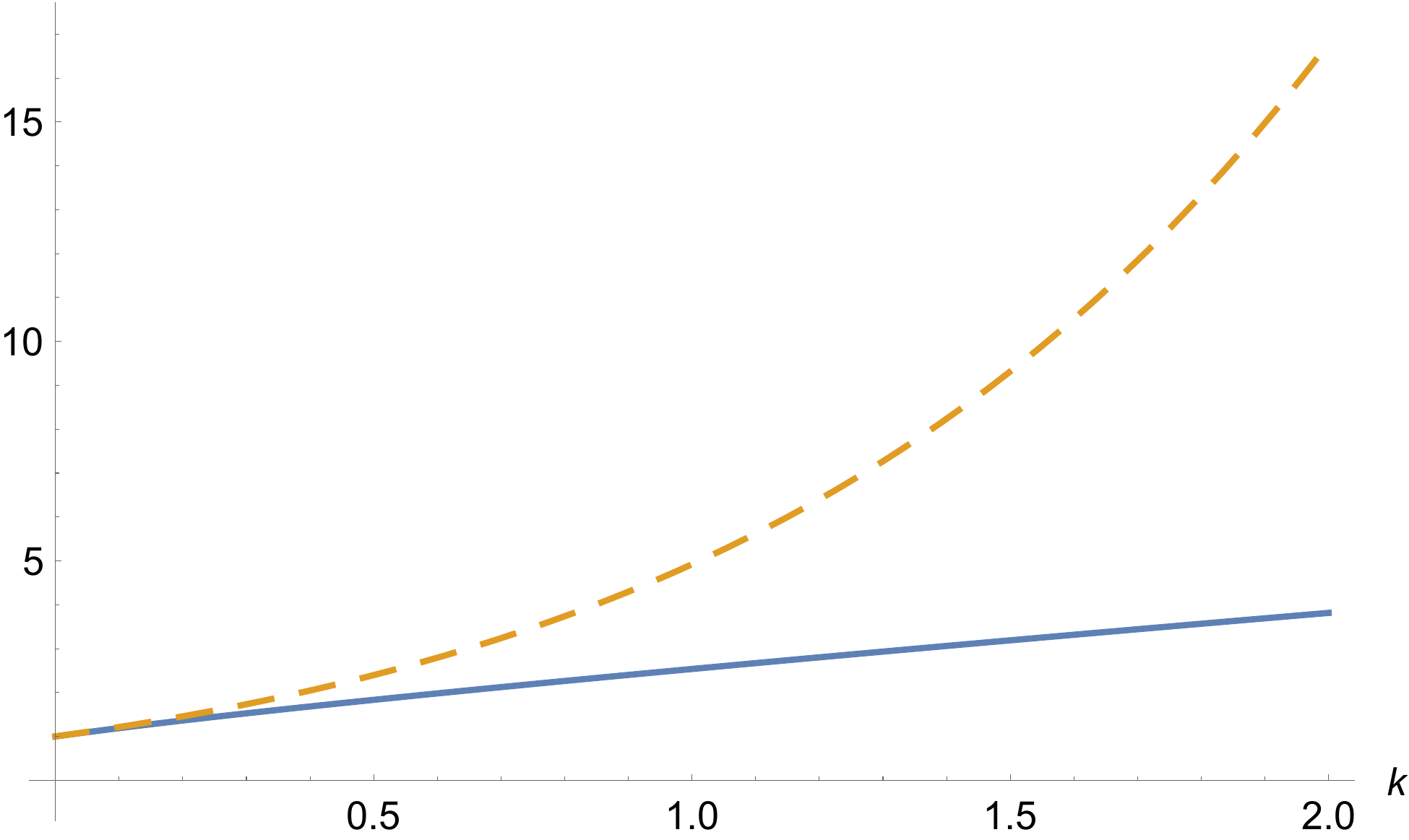}
\caption{$v(k)$ (solid line) and $e^{W+k+1}$ (dashed line) as functions of $k$}\label{F101}
\end{center}
\end{figure}

The proof of Theorem~\ref{T4} follows immediately from the results in this section.
Note that if $k$ is sufficiently large then the condition $1 \leq v\leq e^{W+k+1}$ (appearing in case 2 of Theorem~\ref{T4}) is indeed satisfied and the upper bound $v-1-\ln{v}$ of $\pi$ is thus obtained.
Therefore the profit-maximizing penalty $k$ should be large enough to satisfy the condition.

\subsubsection{Proof of Corollary~\ref{SUP}}\label{ASUP}
According to Proposition~\ref{P6} and Theorem~\ref{T4}, for all $(v, k)$ the upper bound of $\pi$ is  $ v-1-\ln{v}.$   This upper bound is  realized only when  $1 \leq v\leq e^{W+k+1}$ (see case 2 in  Theorem~\ref{T4}). In that case $\pi^*=\pi_2 = v-1-\ln{v},$ and all consumers arrivals are in period 2. If $(v, k)$ does not satisfy this condition, then the firm's expected profit will be strictly smaller than $v-1-\ln{v}$ (since according to Proposition~\ref{P6} $\pi_2>\pi_3, \pi_4$).

\subsection{More on the effects of changing the arrival cost $c$ - l on the results appearing in Section~\ref{EC}}\label{AEC}

Technically, Region 3 disappears when its left and right borders coincide, or when its right border equals zero. Figure~\ref{F3000} presents the borders of Region 3, as functions of the arrival cost $c.$
As shown by Figure~\ref{F3000}, the left and right borders coincide when $c=0.$
At the other extreme, when $c=8,$ the right end of Region 3 becomes zero; thus Region 3 disappears.
When $c=6.5$ the left border of Region 3 equals zero; thus Region 4 disappears.

\

 \begin{figure}[H]
\begin{center}
\includegraphics[scale=0.4]{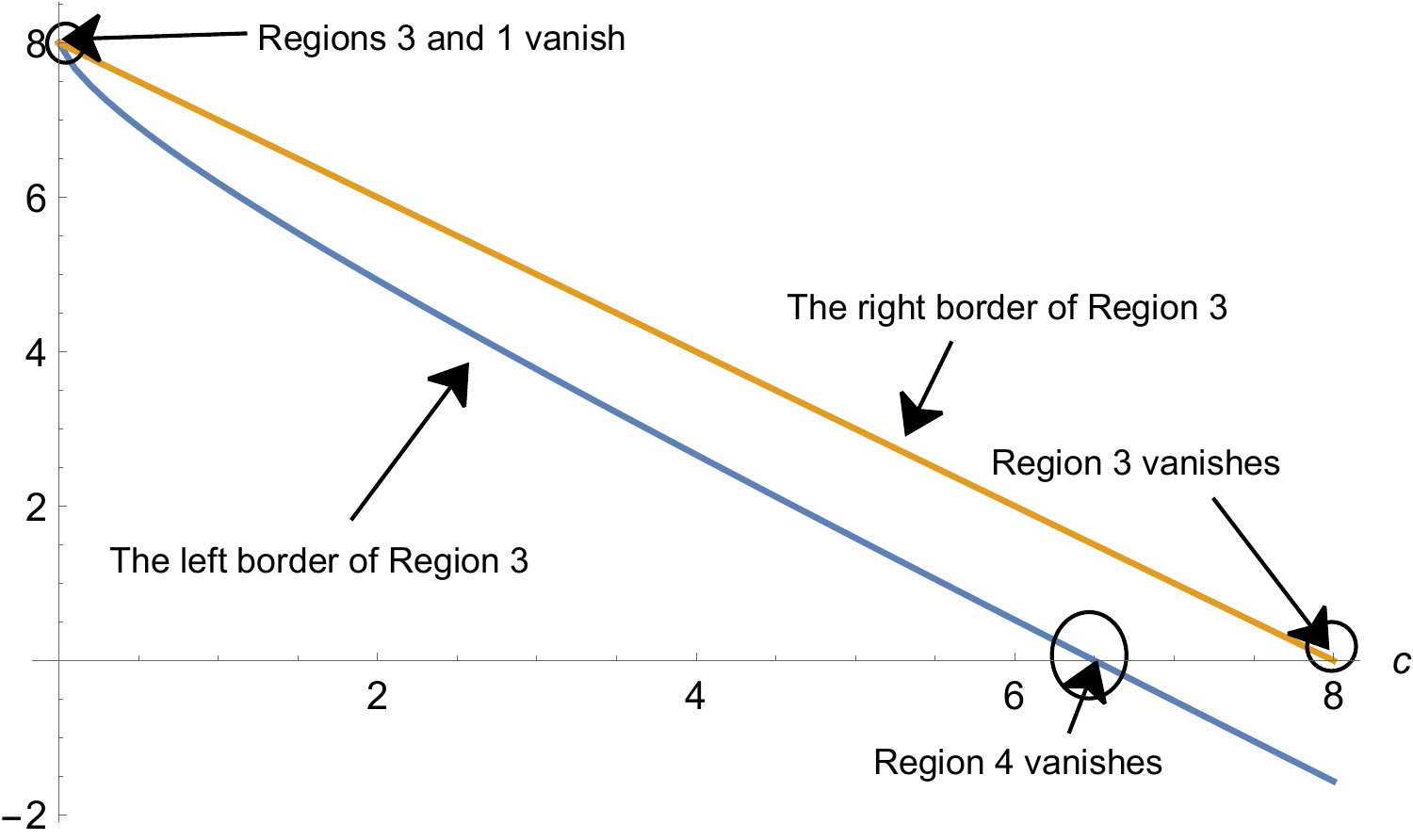}
\caption{The borders of Region 3 as functions of the arrival cost $c$. The upper line shows the right border, and the bottom line shows the left border.}\label{F3000}
\end{center}
\end{figure}

\subsection{Results for Section~\ref{Munits} -Generalization to multiple units when $\l$ is high}\label{AMunits}

\subsubsection{Proposition~\ref{NP1}}\label{ANP1}
\begin{proposition}\label{NP1}
For all $v, k, p, N,$
\begin{equation}\label{EN1}
u_1=-1+(v-p-k)\left[  \left( 1-\frac{N}{\l_1}\right)\sum_{i=0}^{N-1}\frac{e^{-\l_1}\l_1^i}{{i!}}+ \frac{N}{\l_1} -\frac{e^{-\l_1}\l_1^{N-1}}{{(N-1)!}}\right]
\end{equation}

\begin{equation}\label{EN2}
u_2=-1+(v-p)\sum_{s=1}^{N}\frac{e^{-\l_1}\l_1^{N-s}}{{(N-s)!}}\left[  \left( 1-\frac{s}{\l_2}\right)
\sum_{i=0}^{s-1}\frac{e^{-\l_2}\l_2^i}{{i!}}+ \frac{s}{\l_2} -\frac{e^{-\l_2}\l_2^{s-1}}{{(s-1)!}}\right].
\end{equation}
\end{proposition}

Note first that
substituting $N=1$  in~\eqref{EN1} and \eqref{EN2} yields
$u_1= -1+(v-k-p)\frac{\left(1-e^{-\l_1}\right)}{\l_1},$
and
$u_{2}=-1+e^{-\l_1}(v-p)\frac{\left(1-e^{-\l_2}\right)}{\l_2},$
which are indeed what we had established for $N=1$ (see~\eqref{E9000N} and~\eqref{E9001N}).

\begin{proof}
Dividing~\eqref{EN10} by $c,$
 gives
$$u_1=-1+(v-k-p)\left[\P(X_1\le N-1)+\sum_{i=N}^{\infty} \P(X_1=i){N\over i+1}\right].$$
Hence
$$u_1=-1+(v-k-p)\left[  \sum_{i=0}^{N-1}\frac{e^{-\l_1}\l_1^i}{{i!}} +N\sum_{i=N}^{\infty} \frac{e^{-\l_1}\l_1^i}{{i!}}{1\over i+1}\right]$$
$$=-1+(v-k-p)\left[  \sum_{i=0}^{N-1}\frac{e^{-\l_1}\l_1^i}{{i!}} +\frac{N}{\l_1}\sum_{i=N}^{\infty} \frac{e^{-\l_1}\l_1^{i+1}}{{(i+1)!}}\right].$$
Now, since  $\sum_{i=0}^{\infty}\frac{e^{-\l_1}\l_1^i}{{i!}}=1,$ we get
$$u_1=-1+(v-k-p)\left[  \sum_{i=0}^{N-1}\frac{e^{-\l_1}\l_1^i}{{i!}} +\frac{N}{\l_1}\left( 1-\sum_{i=0}^{N} \frac{e^{-\l_1}\l_1^{i}}{{i!}}\right)\right]$$
$$=-1+(v-k-p)\left[  \sum_{i=0}^{N-1}\frac{e^{-\l_1}\l_1^i}{{i!}} +\frac{N}{\l_1}  -\frac{N}{\l_1}\left( \sum_{i=0}^{N-1} \frac{e^{-\l_1}\l_1^{i}}{{i!}} + \frac{e^{-\l_1}\l_1^{N}}{{N!}}  \right)\right]$$
$$=-1+(v-k-p)\left[  \sum_{i=0}^{N-1}\frac{e^{-\l_1}\l_1^i}{{i!}} +\frac{N}{\l_1}  -\frac{N}{\l_1} \sum_{i=0}^{N-1} \frac{e^{-\l_1}\l_1^{i}}{{i!}} - \frac{e^{-\l_1}\l_1^{N-1}}{{(N-1)!}} \right].$$
Thus,
$$u_1=-1+(v-p-k)\left[  \left( 1-\frac{N}{\l_1}\right)\sum_{i=0}^{N-1}\frac{e^{-\l_1}\l_1^i}{{i!}}+ \frac{N}{\l_1} -\frac{e^{-\l_1}\l_1^{N-1}}{{(N-1)!}}\right].$$
The proof of~\eqref{EN2} is very similar.
\end{proof}

\

\subsubsection{The expected profit of the seller in the multi-unit case}\label{APN10}

\begin{proposition}\label{PN10}
\begin{equation}\label{EN3}
\pi=p\left( N\left( 1-  \sum_{i=0}^{N-1}\frac{e^{-\l_1}\l_1^i}{{i!}}\right) + \sum_{s=1}^{N}\frac{e^{-\l_1}\l_1^{N-s}}{{(N-s)!}}\left[ N +\l_2\sum_{i=0}^{s-1}\frac{e^{-\l_2}\l_2^i}{{i!}} -s\sum_{m=0}^{s}\frac{e^{-\l_2}\l_2^m}{{m!}}  \right]    \right).
\end{equation}
\end{proposition}

Note that substituting $N=1$ in~\eqref{EN3} gives  $\pi=p(1-e^{-\l_1-\l_2})$  which is indeed expected profit for the single-unit case (see~\eqref{E5000}).
\begin{proof}
The expected profit,  $\pi$,  equals $p$ times the expected total number of units sold.
The first expression in the brackets $N\left( 1-  \sum_{i=0}^{N-1}\frac{e^{-\l_1}\l_1^i}{{i!}}\right)$ applies when $s=0,$ namely when all $N$ units were sold in period 1. In that case sales are $N.$ This is multiplied by the probability that indeed $s=0,$ namely $Pr(X_1\geq N),$ to give  $N\sum_{i=N}^{\infty}\frac{e^{-\l_1}\l_1^i}{{i!}}= N\left( 1-  \sum_{i=0}^{N-1}\frac{e^{-\l_1}\l_1^i}{{i!}}\right).$
The second  expression consists of a sum with inner brackets.
The expression in the brackets gives the conditional expected number of units sold if $s=1,2,\dots , N$ units remain for sale in period 2. This means that $N-s>0$ units were sold in period 1.
The probability that $N-s>0$ units were sold in period 1 is $\frac{e^{-\l_1}\l_1^{N-s}}{{(N-s)!}},$ / $s=1,2,\dots N,$ which is therefore the expression multiplying the inner brackets.
To understand the expressions in the inner brackets, note that sales in period 1 are $N-s,$ which is the first expression in the brackets. The number of arrivals in period 2 may be equal to or less than the number $s$ of units available, or it may exceed that number. In the first case, expected sales in period 2 are 
$\sum_{i=0}^{s}i\frac{e^{-\l_2}\l_2^i}{{i!}}$ which equals $\l_2\sum_{i=0}^{s-1}\frac{e^{-\l_2}\l_2^i}{{i!}}.$ In the second case, namely where the number of arrivals in period 2 exceeds $s,$ expected sales in this period are $s\sum_{m=s+1}^{\infty}\frac{e^{-\l_2}\l_2^m}{{m!}}= s\left( 1 -\sum_{m=0}^{s}\frac{e^{-\l_2}\l_2^m}{{m!}}\right).$

Substituting $N=1$  in~\eqref{EN1} and \eqref{EN2} verifies that 
$u_1= -1+(v-k-p)\frac{\left(1-e^{-\l_1}\right)}{\l_1},$
and
$u_{2}=-1+e^{-\l_1}(v-p)\frac{\left(1-e^{-\l_2}\right)}{\l_2},$
which are indeed what we had established for $N=1$ (see~\eqref{E9000N} and~\eqref{E9001N}).
\end{proof}

\subsubsection{Results of the model with a single unit hold for multiple units}\label{AGER}

The same seven types of equilibria found in the single-unit case (see Section~\ref{SCHI}) also hold with multiple units. As in the single-unit case (see Section~\ref{SU}), when $\l$ is large, only equilibria of types 1-4 exist,  and here too, most potential consumers avoid arriving. We confirmed (numerically) that as in the single-unit model the conditions defining the border between Regions 3 and 4 is unique. We denote it as $p_0.$ Formally,

\begin{definition}\label{DP0}
$p_0$ is the unique solution for $u_1=u_2=0,$ and $\l_2=0.$
\end{definition}

Hence,

\begin{itemize}
\item Region 1  consists of all $p$ satisfying \ $v-1\leq p\leq v,$
\item Region 2  consists of all $p$ satisfying \ $v-1-k\leq p< v-1,$
\item Region 3  consists of all $p$ satisfying \ $p_0\leq p< v-1-k,$
\item Region 4  consists of all $p$ satisfying \ $0\leq p< p_0,$
\end{itemize}

This section demonstrates that all the results proved in the single-unit case still hold for any number of units $N$. Additionally we find some interesting effects of $N$ on consumer behavior. Most of the mathematical analysis becomes too difficult when $N>1,$ hence we used  {\it{Mathematica}} software for our investigations. All the figures below were executed for $k=2$ and $v=10.$
The main results of the single unit model were:
\begin{enumerate}
\item In Region 3, $\l_2$ increases with the price $p$ . This increase exceeds the decrease of $\l_1,$ such that the sum $\l_1+\l_2$ increases with $p$.
\item Expected profit $\pi$ increases monotonically with $p$ in Region 3, and has at most one local maximum in each of Regions 2 and 4.
\item If there exists a local maximum in Region 2, then its value is always the global maximum.

The above results yield the following:
\item The firm does not profit by selling early (and usually loses by doing so).
\item If, however, the firm is forced to also offer the good early, then the firm maximizes profits by setting a price which induces all consumers to arrive early, or all to arrive late, depending on the good's value to the customers.
\end{enumerate}

As stated for the single-unit case (i.e., $N=1$), the first item in the list of main results above is surprising: in Region 3 (where consumers arrive in both periods) total arrivals increase with the price.
Recall that Region 3 is defined by $p_0\leq p<v-1-k.$ Substituting $k=2$ and $v=10$ means that Region 3  is defined by $p_0\leq p<7.$ Indeed, in the multi-unit case too, $\l_1$ decreases and $\l_2$ increases in Region 3 (not presented here). Moreover, $\l_1+\l_2$  also increases in Region 3 (Item 1). To see that, see Figure~\ref{FNEW8} in Section~\ref{Munits} in the main text, which presents $\l_1+\l_2$ for $N=1,2,4,10$ (the graph for $N=50$ is not shown because it is in a different scale, but it has the same characteristics as $N=1,2,4,10$).

\

As shown in Figure~\ref{FNEW8} and as expected, in all other regions $\l_1+\l_2$ decreases with $p$.
Figures~\ref{FNEW4}-\ref{FNEW7} in Section~\ref{Munits} in the main text,  present $\pi$ for  $v=10, k=2$  and several values of $N.$ The Figures illustrate that items 2,3  above hold for $N>1$. First, the profit $\pi$ increases monotonically in Region 3 and has at most one local maximum in Regions 2 and 4 (Item 2). In the single-unit case, the monotonicity of $\pi$ in Region 3 followed immediately from the first item, since for $N=1$ we have $\pi=p(1-e^{-\l_1-\l_2}),$ which increases with $\l_1+\l_2.$ But for $N>1$ the expression for $\pi$ is much more complicated (see Proposition~\ref{PN10}). Still, the figures below showing $\pi$ for $v=10, k=2$  and $N=1,2,4,10, 50$ illustrate that the monotonicity of $\pi$ in Region 3 continues to hold when $N>1$. We solved numerically with many values of $v, k, N$, and always observe this behavior.

In all figures the local maximum in Region 2 exists and is therefore the global maximum (Item 3).
Item 5 follows immediately from Item 2. The three candidates for global maximum (the local maximum $p_2$ in Region 2, the local maximum $p_4$ in region 4, and the point $p=v-k-1=7$ separating Regions 3 and 2) all have consumers arriving in only one period. With $p_4$ consumers arrive only in period 1. With $p_2$ and  $p_3,$ consumers arrive only in period 2. Item 4 follows immediately from Item 3, since not offering the good in period 1 is equivalent to $k$ approaching infinity. In that case   Regions 3 and 4 do not exist (since Regions 3 and 4 require that $v-p-1>k$). Thus the global maximum must be in Region 2, which according to Item 3 is the best.

\subsubsection{The effect of supply on profit, profit-maximizing price, behavior of consumers}

Note that the left border $p_0$ of Region 3 increases with $N,$ while  the right border  $v-1-k = 10-1-2=7$ is constant with $N.$ This is shown in Figure~\ref{FNEW8} in Section~\ref{Munits} in the main text. The vertical lines (dashed lines at $p_0$ for $N=1,2,4,10$ separately, and solid line at the right border $7$ of Region 3).
Thus increasing $N$ reduces the size of Region 3.  Figures~\ref{FNEW4}-\ref{FNEW7} in Section~\ref{Munits} in the main text show that Region 4 increases at the expense of Region 3. Take for example $p=6.5.$ When $N=1$ it lies in Region 3, but when $N=50$ it lies in Region 4.
Denote $\l_2(N, p)$ as the arrival rate in period 2 when the price for a unit is $p$ and $N$ units are available in the first period.
\begin{proposition}\label{Lambda2}
$$\lim_{N\to\infty}\frac{\l_2(N, v-1)}{N}=1.$$
\end{proposition}

\begin{proof}

Recall that by~\eqref{EN2} we have that

\begin{equation}\label{EN2again}
u_2=-1+(v-p)\sum_{s=1}^{N}\frac{e^{-\l_1}\l_1^{N-s}}{{(N-s)!}}\left[  \left( 1-\frac{s}{\l_2}\right)
\sum_{i=0}^{s-1}\frac{e^{-\l_2}\l_2^i}{{i!}}+ \frac{s}{\l_2} -\frac{e^{-\l_2}\l_2^{s-1}}{{(s-1)!}}\right].
\end{equation}

The point $p=v-1$ is the right end of Region 2  in which for all $N,$ $\l_1(N, p)=0,$ and $u_2=0.$

Substituting this in~\eqref{EN2again} yields

\begin{equation}\label{EN222}
u_2=-1+(v-p)\left[  \left( 1-\frac{N}{\l_2}\right)
\sum_{i=0}^{N-1}\frac{e^{-\l_2}\l_2^i}{{i!}}+ \frac{N}{\l_2} -\frac{e^{-\l_2}\l_2^{N-1}}{{(N-1)!}}\right]=0,
\end{equation}
 where $\l_2=\l_2(N, p)$ for all $p$ in Region 2.

We wish to find a price $p$ in Region 2 for which
\begin{equation}\label{PPLIM}
\lim_{N\to\infty}\frac{\l_2(N, p)}{N}=1.
\end{equation}

Substituting $\l_2=N$ in~\eqref{EN222} gives
$$(v-p)\left[ 1-\frac{e^{-N}N^{N-1}}{(N-1)!} \right]=1.$$
Hence,
$$(v-p)\left[ 1-\frac{e^{-N}N^{N}}{N!} \right]=1.$$
By Stirling's formula, for large $N,$ \ $\frac{e^{-N}N^{N}}{N!}\sim \frac{1}{\sqrt{2\pi N}}.$

Thus,
$$(v-p)\left[ 1-\lim_{N\to\infty}\frac{1}{\sqrt{2\pi N}} \right]=1,$$
implying that $v-p=1,$ and so $p=v-1$ satisfies~\eqref{PPLIM}, namely,
\begin{equation}\label{Stir}
\lim_{N\to\infty}\frac{\l_2(N, v-1)}{N}=1.
\end{equation}
\end{proof}

\subsection{Low demand}\label{AFIN}

As explained in the main text, when demand is low, there may be pairs $(v,k)$ with multiple equilibria. Thus we do not explore profit maximization in the this case. However, regarding our result on the profit maximizing price, the following shows that  also in the low demand case the monopolist wants all consumers to arrive at the same time, either all in period 1 (i.e., $\l_2=0$), or all in period 2 (i.e., $\l_1=0$), depending on the parameters of the model.

Given an arrival rate $\lambda$, equilibrium types 5-7 do exist in general.  In these cases, the sum of arrival rates  $\l_1+\l_2$ equals $\l$ by definition (see Section~\ref{SCHI}).  This implies that the expected profit $\pi=p\Big( 1-e^{-(\l_1+\l_2)} \Big)=p\Big( 1-e^{-\lambda}\Big)$ is linear and increasing in $p$. Hence the candidates for the global maximum point are the border-points between the regions. For the other (former) equilibria types 1-4 (in which by definition $\l_1+\l_2<\l$), we proved that for equilibrium type 3, \ $\l_1+\l_2$ increases with $p.$ This holds also in the low demand case. So the points on the left border of Region 3 (see Figure~\ref{f2}) are candidates for the maximum profit. Recall also that in Region 2 $\l_1=0,$ and that in Region 4 $\l_2=0.$ Hence the result that the profit-maximizing price induces all consumers to arrive at the same time also holds in the low demand case.

In the next section, we  normalize $V-P=1$ and $\l=1$ (instead of $c$ being the unit value).

\subsubsection{Equilibria for the low demand case when normalizing $V-P=1$ and $\l=1$}\label{SMR}

Recall that the expected number of people arriving in period 1 is $x \equiv \l_1$; the expected number of people arriving in period 2 is $y \equiv \l_2$. For each type of equilibrium we present the conditions on $c$ and $K$ for having an equilibrium of that type.  Note that we describe the feasibility region for equilibria of types 3-5 using parametric representation of the conditions on $c$ and $K$.

\begin{enumerate}
\item
A type-1 equilibrium (where no consumer ever arrives) has $\l_1=\l_2=0$, $U_{1}\le 0$, and $U_2\le 0$, requiring that $1-K \le c$ and that $c \ge 1$.

\item
A type-2 equilibrium (where no consumer arrives in period 1, but may arrive in period 2) has $\l_1=0$. The condition $U_2=0$ is equivalent to $c={1-e^{-y}\over y}$, implying that $c\in(1-1/e,1)$. The condition $U_{1}\le 0$ means that $K+c \ge 1$.

\item
For the type-3 equilibria (where a consumer chooses to arrive in period 1, or in period 2, or not to arrive at all, all with positive probabilities), for $0\le x\le 1$ and $0\le y\le1-x$, the condition $U_{1}=0$ reduces to
$c=(1-K){1-e^{-x}\over x}$. The condition $U_2=0$ amounts to
$c=e^{-x}{1-e^{-y}\over y}\in\left(e^{-x}{1-e^{(1-x)}\over1-x},e^{-x}\right)$.

\item
For the type-4 equilibrium (where no consumer arrives in period 2, but consumers may arrive in period 1), for $0 \le x \le 1$, the condition $U_{1}=0$ implies that $K=1-c{x \over 1-e^{-x}}$, and $U_2 \le 0$ with $q_2=0$ implies that $c \ge e^{-x}$.

\item
In the type-5 equilibrium (where a consumer is indifferent about when to arrive), $\l_1+\l_2=1$. Define $x \equiv \l_1$. For $0 \le \l_1 \le 1$, the condition $U_{1}=U_2$ reduces to $K=1-{x\over1-x}{1-e^{-(1-x)}\over
e^x-1}$ . And the non-negativity of $U_2$ reduces to $c\le e^{-x}{1-e^{-(1-x)}\over1-x}$. This condition also implies that $c \le 1-{1\over e}$.

\item
In a type-6 equilibrium (where consumers may arrive in period 1), $\l_1=1$ and $\l_2=0$. The conditions $U_{1} \ge U_2$ and $U_{1} \ge 0$ reduce to $(1-K)(1-{1\over e})\ge {1\over e}$, or $K\le {e-2\over e-1}$, and $c \le (1-K)(1-{1\over e})$.

\item
In a type-7 equilibrium (where consumers may arrive in period 2), $\l_1=0$ and $\l_2=1$.  The conditions $U_2\ge U_{1}$ and $U_2\ge0$ reduce to $1-{1\over e}\ge 1-K$, or $K \ge{1\over e}$ and $c \le 1-{1\over e}$.

\end{enumerate}

Figure \ref{f1} describes the regions corresponding to the different types of equilibria when $\l=1$. The point at the intersection of types 7,5,3 and 2 has $c=1-1/e$ and $K=1/e$.  The point at the intersection of types 7, 6, 5, 4, and 3 has $c=1/e$ and $K=(e-2)/(e-1)$.

\begin{figure}[h]
\centering
\vspace{-5cm}
\includegraphics[scale=0.6]{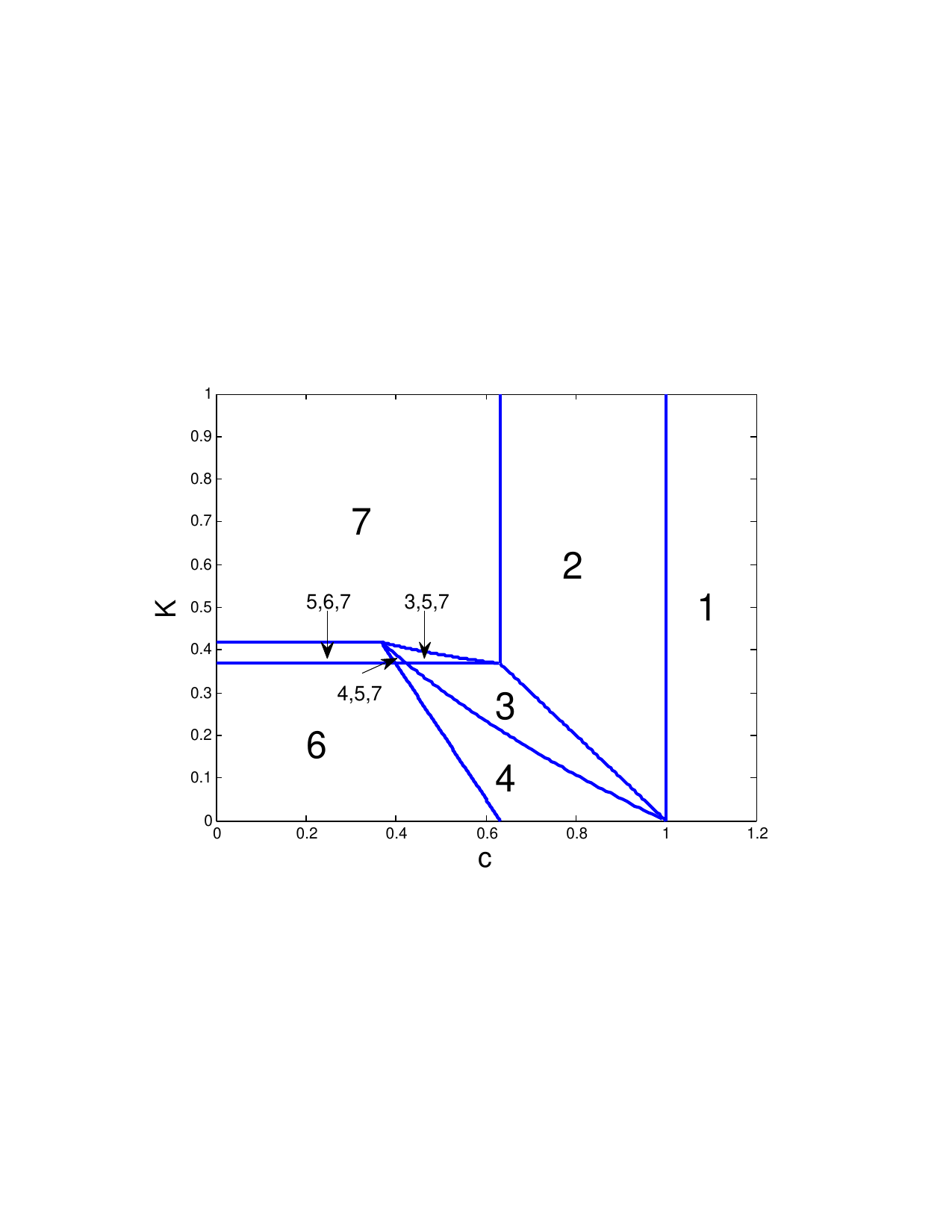}
\vspace{-5cm}
\caption{Types of equilibria in the $(c,K)$ space.}\label{f1}
\end{figure}

From Figure \ref{f1} we conclude that for most combinations of $c$ and $K$ a unique equilibrium exists. However, in the narrow region where a type-5 equilibrium exists, other equilibria also exist.

In the following, denote $\lambda_e \equiv (\l_1,\l_2)$.
\medskip

\noindent{\bf Example 1}  Let a consumer's cost of going to the store be $c=0.2$. Let a consumer's penalty for buying the good earlier than at his ideal period be $K=0.4$. Then the pair of arrival rates at the store $\l_1=0$ and $\lambda_2=1$ is a type-7 equilibrium (where consumers may arrive in period 2). The pair $\l_1=1$ and $\lambda_2=0$ (denoted by $\lambda_e=(1,0)$) is a type-6 equilibrium (an equilibrium where consumers may arrive in period 1). And the pair $\l_1=0.6305$ and $\lambda_2=0.3995$ is a type-5 equilibrium. Note that the type-7 equilibrium is efficient. A consumer who wants to consume the good in period 2 arrives in period 2. A type-6 equilibrium suggests inefficiency. Although consumers want the good in period 2, they arrive in period 1. So if the waiting cost is positive, a consumer who arrives early may incur a cost without increasing the benefit he gets from the good. Early arrival reflects rent-seeking behavior. The same applies for a type-5 equilibrium. In terms of consumer welfare (the aggregated utilities), note that the consumer welfare function, denoted by $SW$, satisfies
$$SW=(1-e^{-\l_1})(V-P-K)+e^{-\l_1}(1-e^{-\l_2})(V-P)-c(\l_1+\l_2)=$$
$$(1-e^{-\l_1})0.6+e^{-\l_1}(1-e^{-\l_2})-0.2(\l_1+\l_2).$$

Thus consumer welfare for the equilibria of types 7, 6 and 5 are 0.432, 0.179 and 0.056, respectively. Put differently, if the store refused to sell in period 1 consumers would be better off.

Early arrival reflects behavior of {\it strategic complements} or {\it follow the crowd} (see Hassin and Haviv 2003), where a consumer's best response tends to follow the strategy of the others. Typically, such situations have two extreme equilibria in pure strategies, and one equilibrium with a strategy.

\medskip

\noindent{\bf Example 2}  Let $c=0.4$ and $K=0.37$. Then  $\lambda_e=(1, 2)$ is a type-7 equilibrium, $\lambda_e=(0.041,0.959)$ is a type 5 equilibrium, and $\lambda_e=(0.989,0)$ is a type-4 equilibrium.
\medskip

\noindent{\bf Example 3}  Let $c=0.4$ and $K=0.4$. Then $\lambda_e=(1, 2)$ is a type-7 equilibrium, $\lambda_e=(0.63,0.37)$ is a type-5 equilibrium; $\lambda_e=(0.8742,0.085)$ is a type-3 equilibrium.


\section{Notation}
\begin{center}

\begin{tabular}{|m{1cm}|m{15cm}|}
\hline
$c$ & Cost of going to store \\
\hline
$V$ & Value of good to consumer if bought at his ideal period \\
\hline
$v$  & $\frac{V}{c}$\\
\hline
$K$ & Reduction in consumer's utility if he buys the good  early \\
\hline
$k$  & $\frac{K}{c}$ \\
\hline
$P$ & Price of good \\
\hline
$p$ & $\frac{P}{c}$ \\
\hline
$\Pi$ & Profits \\
\hline
$\pi$ & $\frac{\Pi}{c}$ \\
\hline
$q_{t}$ & Probability that a consumer chooses to arrive in period $t$ \\
\hline
$U_{t}$ & Expected utility of a consumer who arrives in period $t$ \\
\hline
$u_t$ & $\frac{U_t}{c}$ \\
\hline
$\l$ & Arrival rate of potential consumers \\
\hline
$\l_t$ & Arrival rate of consumers at the store in period t \\
\hline
$x$ &  $\l_1$ \\
\hline
$y$&  $\l_2$ \\
\hline
$v(k)$ & Border between Regions 3 and 4 in terms of $v$ \\
\hline
$p_0$ & Border between Regions 3 and 4 in terms of $p$ \\
\hline
$W[a]$ &  Lambert function \\
\hline
$W$ & $W[-(k+1)e^{-(k+1)}]$ \\
\hline
$p_2, p_4$ & Local maximum points of Regions 2 and 4 respectively \\
\hline
$p_3$  &  $v-k-1$ The point separating between Regions 3 and 2 (i.e., the right end of Region 3) \\
\hline
$p^*$ & Global maximum point of $\pi$ \\
\hline
$\pi^*$ & Maximum value of $\pi$ \\
\hline
$\pi_N^*$ & Maximum expected profit when $N$ units are for sale \\
\hline
\end{tabular}
\bigskip

\end{center}
\

\


\singlespacing

\pagebreak


\begin{thebibliography}{BCCM}

\bibitem{AH} Anily, S. and R. Hassin (2013)
``Pricing, replenishment, and timing of selling in a market with heterogeneous customers."
{\em International Journal of Production Economics}, 145:672--682.

\bibitem{AF} Antoniou, F. and  R. Fiocco (2019)
``Strategic inventories under limited commitment."
{\em Rand Journal of Economics}, 50:695--729.

\bibitem{AP}Aviv, Y. and A. Pazgal (2008)
``Optimal pricing of seasonal products in the presence of forward-looking consumers."
{\it Manufacturing Service \& Operations Management}, 10:339--359.

\bibitem{BW}
Besanko, D. and W. L. Winston (1990)
``Optimal price skimming by a monopolist facing rational consumers."
{\it Management Science}, 36:555--567.

\bibitem{BK04} Bils, M. and P.J. Klenow  (2004) ``Some evidence on the importance of sticky prices." {\em Journal
of Political Economy}, 112:947--985.

\bibitem{CF11}
Cachon, G.P. and P. Feldman (2011)
``Pricing services subject to congestion: Charge per-use fees or sell subscriptions''
{\it Manufacturing Service \& Operations Management,} 13:244--260.

\bibitem{CP} Cachon, G.P.  and P. Feldman (2018)
``Pricing capacity over period and recourse strategies: Facilitate reselling,
offer refunds/options, or overbook."
Working Paper, Wharton School, University of Pennsylvania.

\bibitem{CS}
Cachon, G.P. and R. Swinney (2009)
``Purchasing, pricing, and quick response in the presence of strategic consumers.''
{\it Management Science,} 55:497--511.

\bibitem{CZ20} Chen, Y. and L. Zhu (2020) ``Synchronizing pricing and replenishment to serve forward-looking customers." {\em Naval Research Logistics}, 67:321--333.

\bibitem{CS}
Chen, Y. and C. Shi (2019)
``Joint pricing and inventory management with strategic customers."
{\it Operations Research}, 67:1610--1627.

\bibitem{CG96}
Corless R. M., et al. (1996) ``On the Lambert $W$ function." {\em Advances in Computational Mathematics}, 5:329--359.

\bibitem{De}
DeGraba P. (1995)
``Buying frenzies and seller-induced excess demand.''
{\it RAND Journal of Economics}, 26:331--342.

\bibitem{DG19} DellaVigna, S., and M. Gentzkow (2019) ``Uniform pricing in U.S. retail chains." {\em Quarterly Journal of Economics}, 134:2011--2084.

\bibitem{ELW12}
Edelstein R., P. Liu, and F. Wu (2012)
``The market for real estate presales: A theoretical approach."
{\it The Journal of Real Estate Finance and Economics}, 45:30--48.

\bibitem{EGK}
Elmaghraby, W., A. Gulcu, and P. Keskinocak (2008)
``Designing optimal preannounced markdowns in the presence of rational customers with multiunit demands."
{\it Manufacturing Service \& Operations Management,} 10:126--148.

\bibitem{EL}
Eppen, G.D. and Y. Lieberman (1984)
``Why do retailers deal? An inventory explanation."
{\it The Journal of Business}, 57:519--530.

\bibitem{GPS08}
Gallego, G., R. Phillips, and O. Sahin (2008)
``Strategic management of distressed inventory."
{\em Production and Operations Management}, 17:402--415.

\bibitem{Ga06} Gallien, J. (2006)
``Dynamic mechanism design for online commerce.''
{\em Operations Research}, 54:291--310.

\bibitem{GH82}
Glazer, A. and R. Hassin (1982)
``On the economics of subscriptions."
{\it European Economic Review}, 19:343--356.

\bibitem{GH2} Glazer, A. and R. Hassin (1986)
``A deterministic single-item inventory model with seller holding cost and buyer holding and shortage costs."
{\em Operations Research}, 34:613--618.

\bibitem{GH90}
Glazer, A. and R. Hassin (1990)
``Optimal sales to users who hold inventory."
{\it Economics Letters}, 34: 215--220.

\bibitem{GNR20}
Golrezaei, N., H. Nazerzadeh, and R. Randhawa (2020) ``Dynamic pricing for heterogeneous time-sensitive customers."
  {\em Manufacturing \& Service Operations Management}, 22:562-581.

\bibitem{Ha16} Hassin, R., and M. Haviv (2003)
{\em To Queue or Not to Queue: Equilibrium Behavior in Queueing Systems}. Kluwer Academic Publishers.

\bibitem{KMO}
Kremer, M., B. Mantin, and A. Ovchinnikov (2017)
``Dynamic pricing in the presence of myopic and strategic consumers: Theory and experiment."
{\it Production and Operations Management}, 26:116--133.

\bibitem{LR08}
Liu, Q., and  G.J. van Ryzin (2008)
``Strategic capacity rationing to induce early purchases."
{\em Management Science}, 54:1115--1131.

\bibitem{LZ13}
Li, C. and F. Zhang (2013)
``Advance demand information, price discrimination, and preorder strategies.''
{\em Manufacturing \& Service Operations Management}, 15:57-71.

\bibitem{SW77} Sheshinski, E. and Weiss, Y. (1977) ``Inflation and costs of price adjustment." {\em Review of Economic Studies}, 44:287--303.

\bibitem{STX}
Shum, S., S. Tong, and T. Xiao (2017)
``On the impact of uncertain cost reduction when selling to strategic customers."
{\it Management Science}, 63:843--860.

\bibitem{SX00}
Shugan, S.M. and J. H. Xie (2000) ``Advance pricing of services and other implications of separating purchase and consumption.''
{\it Journal of Service Research}, 2:227--239.


\bibitem{SCB19}  Stamatopoulos, I., N. Chehrazi, and A. Bassamboo (2019) ``Welfare implications of inventory-driven dynamic pricing." {\em Management Science}, 65:5741-5765.

\bibitem{Su07} Su, X. (2007)
``Inter-temporal pricing with strategic customer behavior."
{\em Management Science}, 53:726--741.

\bibitem{} Su, X. (2010) ``Optimal pricing with speculators and strategic consumers." {\em Management Science}, 56(1): 25-40.



\bibitem{YKA15}
Yu, M., R. Kapuschinski, and H-S Ahn (2015) ``Advance selling: Effects of interdependent consumer valuations and seller's capacity."
{\it Management Science}, 61:2100--2117.

\bibitem{ZC08} Zhang, D., and  W.L. Cooper (2008)
``Managing clearance sales in the presence of strategic customers."
{\em Production and Operations Management}, 17:416--431.

\bibitem{ZMW19} Zhang, Y., B. Mantin, and Y. Wu (2019)
``Inventory decisions in the presence of strategic customers: Theory and behavioral evidence."
{\em Production and Operations Management}, 28:374--392.


\end{thebibliography}
\end{document}